\documentclass[12pt]{article}
\usepackage[margin=1in]{geometry}
\usepackage{amsmath,amssymb,amsthm}
\usepackage{natbib}
\usepackage[hypertexnames=false]{hyperref}
\usepackage{tikz}
\usepackage{enumitem}
\setlist{topsep=3pt, itemsep=2pt, parsep=0pt}
\usepackage{setspace}
\makeatletter
\def\thm@space@setup{\thm@preskip=7pt plus 2pt minus 3pt \thm@postskip=\thm@preskip}
\makeatother
\AtBeginDocument{%
\setlength{\abovedisplayskip}{4pt plus 2pt minus 2pt}%
\setlength{\belowdisplayskip}{4pt plus 2pt minus 2pt}%
\setlength{\abovedisplayshortskip}{2pt plus 1pt}%
\setlength{\belowdisplayshortskip}{3pt plus 1pt minus 2pt}}

\newtheorem{proposition}{Proposition}
\newtheorem{theorem}{Theorem}
\newtheorem{lemma}{Lemma}
\newtheorem{corollary}{Corollary}
\newtheorem{remark}{Remark}

\title{Search Contests}
\author{Emre Ozdenoren\thanks{London Business School, Regent's Park, London NW1 4SA, UK. Email: \href{mailto:eozdenoren@london.edu}{eozdenoren@london.edu} (corresponding author)} \and Murat Erkurt\thanks{Independent researcher.} }
\date{\today}

\begin{document}
\maketitle

\begin{abstract}
We study contests in which players search sequentially, drawing from a distribution at a cost per draw, and the highest accepted value wins the prize. We analyze unlimited and bounded search, with and without recall. Equilibria are distribution-free, depending only on the number of players, search depth, and the cost-prize ratio. With multiple prizes, equilibrium depends only on the average and the last-place prize. Entry has two effects, discouragement, which lowers the acceptance threshold, and selectivity, which raises it. With unlimited search, competition dissipates all rents, leaving only discouragement. With bounded search, rents survive, selectivity dominates at low cost, and the threshold is single-peaked in the number of players. In large fields discouragement prevails. A planner's optimal prize is invariant to location shifts and rises with tail thickness. Two players are optimal with unlimited search, while heavy tails favor larger fields when search is bounded.
\end{abstract}

\noindent\textbf{Keywords:} Contests, sequential search, rent dissipation, optimal stopping, tournaments.

\noindent\textbf{JEL Codes:} C72, D44, D83.

\section{Introduction}

There are many settings in which contenders compete for a prize through costly search: firms develop prototypes and a sponsor rewards the best design, or competing firms each hire for a key position and the firm with the best hire wins the business. Each attempt is costly and produces an outcome of uncertain quality, and a fixed reward goes to whoever ends up with the best result. Environments like these combine two classical frameworks: the theory of optimal search \citep{McCall1970, Stigler1961}, in which a single agent draws offers from a distribution and decides when to stop, and the theory of contests \citep{LazearRosen1981, Tullock1980}, in which agents choose effort levels and compete for a prize. Other environments are similar in spirit if not in every detail. When AI laboratories tune model architectures or drug companies screen compounds, beating the competition matters, but the prize is not fixed: what a firm earns depends on the quality of the draw itself, and more than one firm can profit. We fix the prize to set this dependence aside: a draw then pays only by beating the other players' draws, and the model isolates the effect of competition on search in its simplest form.

In our model, $N$ risk-neutral players independently draw values from a continuous distribution~$F$ at a cost $c>0$ per draw; we refer to the number of players $N$ as the field size. The player with the highest accepted value wins a fixed prize $W>0$; all others receive nothing. Each player's problem is a sequential search problem in which the value of stopping depends on the behavior of competitors.

The model has two structural parameters: the search depth $k$ (the maximum number of draws), and the recall convention (whether a player who continues may return to a previously rejected draw). With unlimited draws ($k=\infty$), competition drives continuation values to zero, so previously rejected draws are never recalled and the recall convention is irrelevant. With finitely many draws the convention matters, and we solve both cases. 

Of the two conventions, recall is the natural one when search produces something the searcher keeps, as with prototypes or submitted solutions, so the case with no recall deserves a word of motivation. No recall fits settings in which a draw passed over may be gone when the searcher comes back for it, and it is the premise of the classical secretary problem: an employer who passes on a candidate loses her, because the candidate is absorbed by the wider market. Our contest is $N$ such employers, each filling one position from its own applicant pool before a deadline, whose hires then compete: the firm with the best hire wins the prize. Whether recall is possible or not depends on market conditions: in a slack market a passed candidate can often be called back, in a tight one never. 

Consider first the unlimited-draw contest. In its unique symmetric equilibrium, each player accepts a draw if and only if it exceeds a cutoff~$\lambda$ determined by $1-F(\lambda) = Nc/W$. The threshold depends on~$F$, but the acceptance probability does not: it is pinned down by the number of players, the search cost, and the prize alone. The expected number of draws, the expected search cost, and all payoff-relevant quantities are therefore distribution-free, determined by the observables $N$, $c$, and $W$.\footnote{By contrast, a single searcher's acceptance probability depends on the distribution: with values uniform on $[0,1]$, any given draw is accepted with probability $\sqrt{2c}$; with unit-mean exponential values, with probability $c$; and fatter tails push the searcher into ever higher quantiles, because the option value of continued search is higher. Competition removes this dependence by driving the option value of search to zero.} The condition $1-F(\lambda) = Nc/W$ also shows the effect of entry: one more competitor raises the required acceptance probability. With unlimited draws, increased competition discourages players from searching by lowering the threshold.

The equilibrium condition is a zero-profit condition, and full rent dissipation follows: each player's expected payoff is exactly zero, and total search expenditure across all players equals the prize~$W$, for any~$F$ and whenever $Nc < W$. The dissipation follows the familiar logic of all-pay competition \citep{BayeKovenockdeVries1996, Siegel2009}, but takes a different form here: it obtains in a unique equilibrium in pure stationary threshold strategies, where the deterministic all-pay auction dissipates rents only in mixed strategies.

With finitely many draws, players cannot compete profits away, and rents survive. Rents change how the threshold responds to entry. The effect of one more competitor is the sum of a negative dilution term, from sharing the prize among more claimants, and a positive rent term, proportional to equilibrium rents, from the more aggressive gamble a player must take to win them. Under recall the equilibrium is a single stationary threshold, unique and distribution-free at every depth, and it is single-peaked in the number of players. At low search costs the rent term dominates: entry makes each player more selective, which corrects a comparative static in \citet{Taylor1995}. At high costs the dilution term dominates and entry lowers the threshold, and the same discouragement arrives in large fields at any fixed cost, because entry itself erodes the rents.

Without recall the strategy is a vector of round thresholds rather than a single number, and the analysis is harder. We show that a symmetric equilibrium exists at every depth, with thresholds declining across rounds as the forced final draw approaches. At two and three draws we solve the contest fully. At two draws the equilibrium is a single threshold pinned down by one equation, and it is single-peaked in the number of players, as under recall. At three draws the equilibrium has two thresholds. We show that the equilibrium is unique at every admissible cost and that both thresholds rise with entry at low cost. So the selectivity effect holds even without recall, for $k = 2$ and $3$. Two limit theorems then cover large fields at every depth. With free search, sufficiently large entry strictly raises every threshold, and every threshold approaches the top of the distribution at rate $\ln N/N$.\footnote{The result does not conflict with the recall case, where at any fixed positive cost entry eventually lowers the threshold in large fields: that comparative static concerns $c > 0$, while here $c = 0$. At zero cost the recall contest is degenerate: redrawing is weakly dominant and every player searches to full depth. Without recall the zero-cost contest remains interior, and the effect of entry is a real question.} With costly search, all thresholds approach zero as $Nc$ approaches the total prize $W$.

The structure survives two extensions. With rank-order prizes, we show that equilibrium depends only on the average prize and the consolation prize received by the last-place contestant. How the budget is divided among the higher ranks is irrelevant, and with no consolation prize any prize structure replicates winner-take-all. And when $M$ designers each field a team of $N$ searching workers and compete for a meta-prize, the designer contest is again distribution-free, with partial rent dissipation.

To study efficiency, we ask what prize a planner should post. Unlike the players, the planner cares about the actual quality of the winning output, so the value distribution, which drops out of equilibrium behavior, re-enters the design problem. Adding a constant to every value leaves the efficient prize $W^*$ unchanged, while thickening the upper tail at a fixed mean strictly raises it, without bound on the Pareto family.

A planner choosing both the prize and the field size faces a tradeoff between depth and breadth: fewer workers searching more selectively, or more workers searching less selectively. With unlimited draws, depth always dominates: $N^*=2$ regardless of the distribution. When search depth is bounded, heavy-tailed distributions break this ranking and larger fields become optimal. The pattern resonates with settings where time pressure and fat-tailed payoffs coexist: during the COVID-19 pandemic, Operation Warp Speed funded dozens of vaccine candidates in parallel, while breakthroughs in basic science, such as CRISPR gene editing, typically emerge from small groups with long time horizons. The model's predictions, linking optimal field size to tail thickness and search depth, are in principle testable.

\paragraph{Related literature.} Our paper connects two large literatures. On the search side, \citet{McCall1970} and \citet{Mortensen1986} study sequential search by a single agent, and \citet{Weitzman1979} solves the foundational problem of a planner who sequentially searches across heterogeneous alternatives to find the best one. On the contest side, \citet{LazearRosen1981}, \citet{Tullock1980}, and \citet{MoldovanuSela2001} study one-shot effort provision in competitions. \citet{MoldovanuSela2001} show that winner-take-all maximizes total effort with linear or concave costs; in our multiple-prize extension the prize structure is instead neutral: any rank-order allocation of a fixed budget with no consolation prize induces the same threshold, the same submitted values, and the same total expenditure as winner-take-all.

The closest paper is \citet{Taylor1995}, whose research tournament is our finite-depth contest with recall: firms search at a per-period cost over a finite horizon, and he characterizes the threshold equilibrium and the sponsor's design problem. His central design finding is ``aim short'': the sponsor-optimal quality threshold lies below the first-best, protecting against the risk that firms fall short. Our field-size result responds to the same shortfall risk on a different margin: more players means more independent shots at the upper tail. Within his model, our contribution is the transformation to quantile space, which makes the equilibrium distribution-free: the acceptance quantile depends only on $N$ and $c/W$. The formulation is what delivers the corrected comparative static in $N$ and the design results linking the efficient prize and the optimal field size to tail thickness rather than to the mean. Search without recall lies entirely outside his framework, as does the unlimited-draw benchmark, which he notes in a footnote but does not analyze.

The selectivity-versus-discouragement comparative static also corrects an error in \citet{Taylor1995}. The discussion following his Proposition~3 asserts that the equilibrium threshold is strictly decreasing in the number of contestants. The derivation, equation~A5 of his technical appendix, keeps the dilution term but omits the term generated by the dependence of the rival-outcome distribution on the number of contestants. At low cost the omitted term dominates and the correct derivative is positive. Section~\ref{sec:rents} derives the corrected derivative in quantile space and shows that the omitted term is the rent term: it is proportional to equilibrium rents, so selectivity is possible exactly when rents are positive.\footnote{\citet{BrookinsUsvitskiyTaylor2026} independently identify the same omission in a continuous-time version of the model and derive the corrected comparative static there. Section~\ref{sec:rents} discusses the connection.}

\citet{SeelStrack2013} analyze a contest in which players privately observe Brownian motions and choose when to stop; the equilibrium involves mixed stopping strategies, and in the martingale case the equilibrium distribution is independent of volatility. Our model differs in that draws are i.i.d., costs are explicit, and the equilibrium is in pure threshold strategies, with an acceptance probability independent of the entire distribution. \citet{Whitmeyer2017} studies a discrete-time model with finitely many rounds, i.i.d.\ uniform draws, no search cost, and no recall; his equilibrium depends on the uniform distribution assumption. Single-agent search with uncertain recall goes back to \citet{KarniSchwartz1977}, with deadlines added by \citet{AkinPlatt2014}. In secretary problems with competing employers \citep{ImmorlicaKleinbergMahdian2006}, several employers interview one common stream of applicants, observe only relative ranks, and race to secure the single best candidate, so the competition is over who gets each applicant; our employers draw cardinal values from independent pools and compete only after hiring, through the quality of the hire.

Our model is formally close to the all-pay auction literature. \citet{BayeKovenockdeVries1996} show that in symmetric complete-information all-pay auctions, total expenditure dissipates the entire rent. \citet{Siegel2009} generalizes this class: players buy scores directly, at player-specific costs, and the highest scores win. His main theorem determines every player's payoff without solving for equilibrium play: competition drives each winner down to the give-up point of his strongest excluded rival, so identical players, having no advantage to protect, earn zero. Our players cannot buy a score: a stopping rule delivers a random score and a random total cost. With unlimited draws, rent dissipation follows the same zero-profit logic. Tracing its implications through the search technology, however, yields a different kind of equilibrium: in the deterministic auction dissipation occurs only in mixed strategies, while the search contest has a unique equilibrium in pure stationary thresholds, in closed form, with the stopping margin pinning the acceptance probability at $Nc/W$, independent of the distribution. The correspondence is confined to unlimited draws: with bounded depth search cannot be scaled to the zero-profit point, rents survive, and they govern the comparative statics (Section~\ref{sec:rents}). \citet{Loury1979} studies R\&D races in which free entry leads to overinvestment; our planner's result ($N^*=2$) confirms that free entry would generate excessive parallel search.

Our model relates to the literature on strategic experimentation \citep{BoltonHarris1999, KellerRadyCripps2005}, where agents learn from each other's outcomes and competition creates a free-riding incentive. \citet{HoppeKatsenosOzdenoren2023} study a related preemption game and show that rivalry reduces experimentation. In our model there is no informational externality: draws are private and i.i.d., so competition reduces search through rent dissipation rather than strategic learning.

More broadly, the paper is related to the literature on innovation contests \citep{CheGale2003, HalacKartikLiu2017, FullertonMcAfee1999}. \citet{CheGale2003} and \citet{FullertonMcAfee1999} find that restricting entry to two firms is optimal, via incentive-dilution arguments among heterogeneous agents; our $N^*=2$ result arises from a purely statistical argument about order statistics. The value of parallel exploration dates to \citet{Nelson1961}, and \citet{DahanMendelson2001} tie the optimal number of parallel concept tests to the tail of the payoff distribution in a non-strategic setting; our field-size results embed the tradeoff in a competitive search equilibrium. \citet{TerwieschXu2008} decompose the effect of adding contestants into a negative incentive effect and a positive parallel-path effect; \citet{BoudreauLaceteraLakhani2011} provide empirical evidence from TopCoder programming contests. Our finite-draw analysis provides a theoretical counterpart: selectivity dominates discouragement when search costs are low and depth is limited.

The depth-vs-breadth tradeoff connects to a broader literature on parallel exploration. \citet{Nelson1961} and \citet{DahanMendelson2001} study parallel R\&D and concept testing; \citet{DahanMendelson2001} show that the optimal number of parallel tests depends on the tail shape of the payoff distribution. \citet{ChadeSmith2006} solve a related portfolio-search problem in a centralized framework. \citet{DrugovRyvkin2020} show that the IFR/DFR hazard-rate classification drives optimal prize allocation across ranks: our paper uses the same classification for a different design question, namely the efficient prize level and optimal field size.

\paragraph{Roadmap.} Section~\ref{sec:model} presents the model. Section~\ref{sec:equilibrium} derives the equilibrium of the unlimited-draw contest. Section~\ref{sec:finite_recall} analyzes finitely many draws under recall. Section~\ref{sec:no_recall} analyzes the finite-draw contest without recall. Section~\ref{sec:planner} analyzes the planner's problem. Section~\ref{sec:designers} extends the model to hierarchical competition. Section~\ref{sec:conclusion} concludes.

\section{Model}\label{sec:model}

There are $N\geq 2$ risk-neutral players; we refer to the number of players $N$ as the field size. Each player may take up to $k$ draws from a distribution $F$ with continuous, strictly positive density $f$ on the support $[0,K]$, where $K\leq\infty$, paying a cost $c>0$ per draw. After observing a draw $x$, the player chooses whether to stop (submit a value) or continue (pay $c$ to take the next draw, if any remain). 

We consider two conventions: no recall, where a player who continues discards the current draw and must submit whatever she ultimately stops on, and recall, where a player who stops at round $j$ submits the maximum of her first~$j$ draws. Under $k=\infty$ the player may continue indefinitely; under finite $k$ a player who continues past round $k-1$ must stop at round $k$. Players do not observe each other's draws or stopping decisions. After all players have stopped, the player with the highest submitted value wins the prize $W>0$. All others receive $0$. Ties, which occur with probability zero, are broken uniformly at random.

The two recall conventions yield identical equilibria under $k=\infty$ (Remark~\ref{rem:recall_inf}). Hence, we analyze three cases: the unlimited-draw contest with $k=\infty$ (the benchmark case), finitely many draws with recall, and finitely many draws without recall.

A threshold strategy under recall is a single number: the player stops as soon as her running maximum reaches it. If her draws run out first, she stops at the final round and submits the running maximum. A threshold strategy under no recall assigns a threshold to each of the first $k-1$ rounds: the player stops at a draw if and only if it reaches the current round's threshold. The threshold of the final round is zero by definition: a player who reaches it accepts her last draw. With unlimited draws the continuation problem is the same at every round, and a single stationary threshold suffices under either convention. We look for symmetric Nash equilibria in threshold strategies: profiles in which all players use the same thresholds.

We assume $Nc<W$.\footnote{When $Nc = W$, the equilibrium has $\lambda=0$ (or the infimum of the support): each player draws exactly once and accepts whatever she gets. When $Nc > W$, there is a symmetric mixed-entry equilibrium in which each player enters with probability $p^* \in (0,1)$: the positive payoff from being the sole entrant balances the negative payoff that arises when more than $W/c$ players enter. The assumption $Nc < W$ ensures full participation.} Proofs of all results are in the appendix; the proofs for the three-draw contest and the derivation of the designer first-order condition are in the Online Appendix.

\section{Infinite-Draw Contest with or without Recall}\label{sec:equilibrium}

\subsection{Equilibrium}

We look for a symmetric equilibrium in stationary threshold strategies. Each player faces an optimal stopping problem whose continuation value depends on opponents' behavior. In a symmetric equilibrium, competition pins down the common threshold through a zero-profit condition, which we derive below.

Suppose all players other than player~$i$ use threshold $\lambda^*$. Each opponent's final (accepted) value has the truncated distribution
\[
G(x) = \frac{F(x)-F(\lambda^*)}{1-F(\lambda^*)}, \qquad x\geq\lambda^*,
\]
with $G(x) = 0$ for $x < \lambda^*$.
If player $i$ stops at value $x\geq\lambda^*$, she wins if and only if her value exceeds all $N-1$ opponents' values. The probability of winning is $G(x)^{N-1}$.

Player $i$ faces a sequential search problem with ``wage'' $W\cdot G(x)^{N-1}$ for stopping at $x$. The Bellman equation for the continuation value $V$ is
\begin{equation}\label{eq:bellman}
V = -c + \int_0^K \max\bigl\{W\cdot G(x)^{N-1},\;V\bigr\}\,dF(x).
\end{equation}
The optimal strategy is a threshold: accept $x$ if and only if $W\cdot G(x)^{N-1}\geq V$. At the threshold $\lambda$, indifference requires $W\cdot G(\lambda)^{N-1}=V$.

In a symmetric equilibrium, $\lambda=\lambda^*$. Then $G(\lambda)=(F(\lambda)-F(\lambda))/(1-F(\lambda))=0$, so
\begin{equation}\label{eq:Vzero}
V = W\cdot 0^{N-1} = 0 \qquad (N\geq 2).
\end{equation}
The continuation value is zero: competition eliminates all rents from search.

Substituting $V=0$ into the Bellman equation~\eqref{eq:bellman} and using $\int_\lambda^K G(x)^{N-1}\,dF(x) = (1-F(\lambda))/N$ (by the substitution $u = G(x)$) yields
\begin{equation}\label{eq:equilibrium}
1-F(\lambda) = \frac{Nc}{W}.
\end{equation}

\begin{remark}\label{rem:recall_inf}
The equilibrium is the same whether or not players can recall previously rejected draws: all rejected draws lie below~$\lambda$ and are dominated by the stopping draw, so the option to recall is never exercised, and all equilibrium quantities are identical under both conventions.
\end{remark}

\subsection{Uniqueness}\label{sec:uniqueness}

We next show that there is a unique symmetric equilibrium. Asymmetric equilibria can exist for $N \geq 3$, but since players are ex ante symmetric, we focus on the symmetric case throughout.

\begin{proposition}\label{prop:equilibrium}
There exists a unique symmetric equilibrium. Each player uses threshold $\lambda = F^{-1}(1-Nc/W)$: accept a draw if and only if it exceeds $\lambda$.
\end{proposition}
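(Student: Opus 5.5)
We prove existence and uniqueness separately, working in quantile space throughout. Write $p=1-F(\lambda^*)$ for the opponents' common acceptance probability and $u=G(x)$ for a player's rank among accepted opponent values.

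\textbf{Existence.} Fix opponents at threshold $\lambda^*=F^{-1}(1-Nc/W)$, which is well defined and interior because $0<Nc/W<1$. Player $i$ faces a stationary search problem with reward $R(x)=W\,G(x)^{N-1}$. This reward is bounded, continuous and nondecreasing in $x$, and vanishes below $\lambda^*$.

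\begin{enumerate}
\item By standard i.i.d.\ optimal-stopping theory, the value $V$ is the unique solution of the Bellman equation~\eqref{eq:bellman}. The map $T(V)=-c+\int\max\{R,V\}\,dF$ satisfies $T(V)-V$ strictly decreasing wherever $F(\{R>V\})>0$. Hence the fixed point is unique, and the optimal rule is a threshold rule.
\item At $V=0$ we have $T(0)=-c+W\int_{\lambda^*}^K G^{N-1}dF=-c+Wp/N=0$, using the substitution $u=G(x)$ already displayed. So $V=0$.
\item Since $R(x)>0$ exactly for $x>\lambda^*$, the optimal rule is to accept iff $x\ge\lambda^*$; acceptance at the boundary is immaterial, as $F$ is atomless. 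So $\lambda^*$ is a best response to itself.
\item We must also rule out profitable deviations to non-threshold or non-stationary rules. This is covered because the Bellman value bounds every stopping rule. Recall is irrelevant by Remark~\ref{rem:recall_inf}, since $R$ is monotone and the accepted draw dominates all earlier ones.
\end{enumerate}

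\textbf{Uniqueness.} Suppose all players use a common threshold $\lambda^*$ with $p=1-F(\lambda^*)\in(0,1]$.

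\begin{enumerate}
\item Consider first $p<1$, so that each player rejects with positive probability. Best-response indifference at the cutoff forces $V=W\,G(\lambda^*)^{N-1}=0$, exactly as in~\eqref{eq:Vzero}.
\item Plugging $V=0$ into the Bellman equation gives $c=Wp/N$, which pins $p=Nc/W$.
\item Now the corner $p=1$, i.e.\ $\lambda^*$ at the bottom of the support, so that every player accepts her first draw. Then $G=F$, and $R(x)=W\,F(x)^{N-1}$ is positive on the interior. I would show this is not a best response. The deviation "redraw once if $F(x)<q$" gives a first-order gain of roughly $\int_0^q\bigl(W\,\mathbb{E}[F^{N-1}]-W\,F(x)^{N-1}\bigr)dF-qc$. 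For small $q$ this is about $q\,(W/N-c)>0$, because $\mathbb{E}[F^{N-1}]=1/N$ and $Nc<W$. So the corner fails.
\item I would also dispose of thresholds above the support; when $K<\infty$ these mean never stopping. They give payoff $-\infty$, or a well-defined negative payoff, and are clearly not best responses.
\end{enumerate}

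\textbf{Main obstacle.} The step needing the most care is making the indifference argument rigorous in the uniqueness part. We must justify that the best response to threshold $\lambda^*$ has its cutoff exactly at $\lambda^*$ whenever $p<1$. If the best-response cutoff $\hat\lambda$ satisfied $\hat\lambda>\lambda^*$, then $V=R(\hat\lambda)>0$. But a symmetric profile requires $\hat\lambda=\lambda^*$, giving $V=R(\lambda^*)=0$. For this I would use that $R$ is strictly increasing on $[\lambda^*,K]$ (since $f>0$) and zero below $\lambda^*$. The best-response cutoff is then characterized by $R(\hat\lambda)=V$ when $V>0$, and equals $\lambda^*$ exactly when $V\le 0$. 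Consequently $\hat\lambda=\lambda^*$ requires $V\le 0$ together with $T(V)=V$. Equivalently, $-c+Wp/N+V\,(1-p)=V$, i.e.\ $V\,p=Wp/N-c$. Requiring $V=0$ in a symmetric equilibrium gives $p=Nc/W$, which yields the threshold $\lambda=F^{-1}(1-Nc/W)$ in the statement.
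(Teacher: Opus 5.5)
Your proposal is correct and takes the same route as the paper. You set up the Bellman equation against opponents at $\lambda^*$, use indifference at the cutoff to force $V = W\,G(\lambda^*)^{N-1} = 0$ in a symmetric profile, and substitute into the Bellman equation to get $1-F(\lambda)=Nc/W$; your extra checks (that $\lambda^*$ is genuinely a best response, and that the corner $p=1$ fails because $V \geq W/N - c > 0$) fill in what the paper's one-line argument leaves implicit.

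One minor slip in your last paragraph: for interior $\lambda^*$, the best-response cutoff equals $\lambda^*$ only when $V=0$, not whenever $V\le 0$, since $V<0$ means every draw is accepted. This does not affect your conclusion, because you already obtain $V=0$ from indifference.
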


Existence follows from the derivation preceding~\eqref{eq:equilibrium}; uniqueness follows because~\eqref{eq:equilibrium} has a unique solution in $\lambda$.

\begin{remark}\label{rmk:asymmetric}
For $N=2$, there are no asymmetric equilibria: the symmetric equilibrium is the unique Nash equilibrium.\footnote{Suppose, for contradiction, $p_1 > p_2$, where $p_i = 1-F(\lambda_i)$. Player~1 has the lower threshold; submitting at $\lambda_1$ never beats Player~2 (whose submission lies above $\lambda_2 > \lambda_1$), so $V_1 = 0$. Her Bellman equation reduces to $0 = -c + Wp_2/2$, giving $p_2 = 2c/W$. Player~2 earns $V_2 = W(p_1 - p_2)/p_1 > 0$, and her Bellman equation yields $Wp_2^2/(2p_1) = c = Wp_2/2$, which forces $p_1 = p_2$, a contradiction.} For $N\geq 3$, asymmetric equilibria exist. In such equilibria, passive players use a lower threshold and earn $V=0$, while active players use a higher threshold and earn strictly positive rents. The passive players are indifferent between accepting and rejecting draws below the active players' threshold, since both yield a payoff of zero.

These asymmetric equilibria are not robust: a passive player is indifferent over every threshold below the active players', and if active players occasionally slip and submit below it, her best response is to raise her threshold, undermining the equilibrium.
\end{remark}

\subsection{Distribution-Free Equilibrium Properties}

In our model, acceptance rates, search durations, and total expenditures depend only on the observables $N$, $c$, and $W$, and are therefore in principle testable without distributional assumptions.

In the unique equilibrium, the acceptance probability per draw is $1-F(\lambda) = Nc/W$, the expected number of draws per player is $W/(Nc)$, the expected search cost per player is $W/N$, total expected search cost is $W$, and each player's expected payoff is zero: every one of these quantities depends only on $(N, c, W)$ and is independent of~$F$.

The distribution-free property is a direct consequence of competition driving the continuation value to zero. In a non-competitive search problem, the reservation value equates the cost $c$ to the option value of continued search, $\int_\lambda^K (x-\lambda)\,dF(x)$, which depends on $F$. In the contest, competition drives $V$ to zero, replacing the option-value condition with the zero-profit condition~\eqref{eq:equilibrium}, which involves $F$ only through the single quantity $F(\lambda)$.

Full rent dissipation, total search expenditure equal to the prize, connects our model to the rent-dissipation results in the auction literature. In symmetric complete-information all-pay auctions \citep{BayeKovenockdeVries1996}, dissipation arises through mixed bidding. In our sequential-search contest it arises with pure threshold strategies, driven by the zero-profit condition at the threshold, which pins expected search cost per player to $W/N$.

\subsection{Comparative Statics}

The zero-profit condition yields clean comparative statics results. The most substantive is the discouragement effect: adding competitors lowers the acceptance threshold.

\begin{corollary}\label{cor:compstat}
In the unique equilibrium:
\begin{enumerate}
\item (Discouragement) The acceptance probability is increasing in $N$: each player searches less when there are more competitors. The expected number of draws per player, $W/(Nc)$, is decreasing in $N$.
\item The acceptance probability is increasing in $c$ and decreasing in $W$: higher costs reduce search, and higher prizes increase search.
\item Total search expenditure equals $W$ regardless of $N$ and $c$.
\end{enumerate}
\end{corollary}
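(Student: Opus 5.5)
The plan is to read every part of Corollary~\ref{cor:compstat} directly off the closed-form characterization in Proposition~\ref{prop:equilibrium}. In the unique symmetric equilibrium the acceptance probability per draw is $p \equiv 1-F(\lambda) = Nc/W$, and the standing assumption $Nc<W$ guarantees $p\in(0,1)$. Each comparative static will therefore be a statement about the explicit function $p(N,c,W)=Nc/W$ or about simple transforms of it. We never need to re-solve the stopping problem.

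For part~1, we first note that $p$ is linear and strictly increasing in $N$. Since $F$ is continuous and strictly increasing on its support (its density is strictly positive), the threshold $\lambda = F^{-1}(1-Nc/W)$ is strictly decreasing in $N$. Next, a player's number of draws is geometric with success probability $p$, because draws are i.i.d. and the rule is stationary: accept exactly when $x\ge\lambda$. Its mean is therefore $1/p = W/(Nc)$, which is strictly decreasing in $N$. Part~2 follows in the same way: $p$ is strictly increasing in $c$ and strictly decreasing in $W$, and by monotonicity of $F^{-1}$ the threshold moves in the opposite direction.

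For part~3, expected search cost per player equals $c$ times the expected number of draws, that is $c\cdot W/(Nc) = W/N$. Summing over the $N$ players gives total expected expenditure $W$, independent of $N$ and $c$. As a consistency check, this matches the zero-payoff result $V=0$ from \eqref{eq:Vzero}. Each player wins with probability $1/N$ by symmetry, since ties have probability zero. Her expected prize is then $W/N$, which exactly offsets her expected cost.

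There is no real obstacle here. The only point needing care is justifying the geometric-distribution step, namely that the stationary threshold rule makes the number of draws geometric with parameter $Nc/W$. We also need to check that the comparative statics stay within the region $Nc<W$, so that $\lambda$ remains well defined as the parameters vary.
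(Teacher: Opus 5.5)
Your proposal is correct and matches the paper's route: the corollary is read directly off the zero-profit condition $1-F(\lambda)=Nc/W$. The geometric number of draws gives $W/(Nc)$ expected draws, $W/N$ expected cost per player, and total expected expenditure $W$. Your cross-check via $V=0$ and the symmetric win probability $1/N$ is consistent with this but not needed.
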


The discouragement effect has a clean interpretation: with more competitors, each player's chance of winning falls, reducing the expected return to search. In response, players lower their thresholds and search less. But since there are more players, total expenditure remains at~$W$: the extensive and intensive margins exactly offset. However, as we show in Section~\ref{sec:finite_recall}, this conclusion is special to unlimited draws.

\subsection{Multiple Prizes}\label{sec:multiprize}

Suppose there are $N$ players and rank-order prizes $W_1\geq W_2\geq\cdots\geq W_N\geq 0$: the player with the $j$th-highest accepted value receives $W_j$. Let $\bar W = \frac{1}{N}\sum_{j=1}^N W_j$ denote the average prize. We assume $\bar W > W_N$ (not all prizes are equal; otherwise there is no incentive to search) and $c < \bar W - W_N$, so that the equilibrium threshold below is interior.

\begin{proposition}\label{prop:multiprize}
In the unique symmetric equilibrium with rank-order prizes:
\begin{enumerate}
\item The equilibrium threshold satisfies
\begin{equation}\label{eq:multiprize}
1-F(\lambda) = \frac{c}{\bar W - W_N}.
\end{equation}
The acceptance probability $1-F(\lambda)$ depends on the prizes and cost, but not on~$F$.
\item Each player's expected payoff is $V = W_N \geq 0$.
\item Total search expenditure is $\sum_{j=1}^N (W_j-W_N)$. The dissipation ratio is
\[
\frac{\text{Total cost}}{\sum_j W_j} = 1-\frac{W_N}{\bar W}.
\]
\end{enumerate}
\end{proposition}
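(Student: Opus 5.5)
The plan is to repeat the unlimited-draw argument of Section~\ref{sec:equilibrium} with a rank-dependent stopping payoff. Suppose all opponents use threshold $\lambda^*$, so each submits an independent draw from the truncated distribution $G$. If player $i$ stops at $x\ge\lambda^*$ and $u=G(x)$, the number of opponents she beats is Binomial$(N-1,u)$. Her expected prize is therefore
\[
\pi(u)=\sum_{m=0}^{N-1}\binom{N-1}{m}u^m(1-u)^{N-1-m}\,W_{N-m},
\]
and a submission below $\lambda^*$ beats nobody and earns $W_N$. Since the $W_j$ are ordered, $\pi$ is nondecreasing in $u$; it is strictly increasing because $\bar W>W_N$ forces some $W_j>W_N$. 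Player $i$'s problem is then a stationary search with wage $\pi(G(x))$ and Bellman equation $V=-c+\int\max\{\pi(G(x)),V\}\,dF(x)$. The optimal rule is a threshold, accepting exactly when $\pi(G(x))\ge V$. As in Remark~\ref{rem:recall_inf}, recall is never exercised.

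\textbf{Step 1: the continuation value equals $W_N$.} At a symmetric profile $\lambda=\lambda^*$, so $G(\lambda)=0$ and indifference gives $V=\pi(0)=W_N$. This is part~2.

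\textbf{Step 2: the threshold equation.} Substitute $V=W_N$ into the Bellman equation. Below $\lambda$ the integrand is $W_N$, so
\[
W_N=-c+F(\lambda)W_N+(1-F(\lambda))\int_0^1\pi(u)\,du,
\]
using $dF=(1-F(\lambda))\,du$ on $[\lambda,K]$. The Bernstein polynomials integrate to $1/N$ each, so $\int_0^1\pi=\bar W$. This gives $(1-F(\lambda))(\bar W-W_N)=c$, which is equation~\eqref{eq:multiprize}. The equation has a unique interior solution because $0<c<\bar W-W_N$ and $F$ is continuous and strictly increasing. That gives existence and uniqueness among symmetric threshold profiles.

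For existence to hold we need the threshold rule to be a genuine best response. The right-hand side of the Bellman equation is a contraction-type fixed point of a strictly monotone map in $V$, as in the single-searcher problem, and any optimal policy is a threshold because $\pi\circ G$ is nondecreasing in $x$. I expect this verification to be the only delicate point, and it is routine. Hence the solution above is a symmetric equilibrium and the unique one.

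\textbf{Step 3: expenditure (part~3).} Each player's expected payoff is $W_N$, so by symmetry total expected payoff is $NW_N$. Prizes are always fully paid out, so the sum of prizes minus total expected cost must equal total payoff. Total cost is therefore $\sum_j W_j-NW_N=\sum_j(W_j-W_N)$. Dividing by $\sum_jW_j=N\bar W$ gives the ratio $1-W_N/\bar W$.

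Alternatively, the expected number of draws is $1/(1-F(\lambda))$, so expected cost per player is $c/(1-F(\lambda))=\bar W-W_N$. This confirms the total independently.
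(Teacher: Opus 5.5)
Your proof is correct and follows the paper's argument. Indifference at the common threshold pins $V=\pi(0)=W_N$, the beta/Bernstein integrals give $\int_0^1\pi=\bar W$, substituting into the Bellman equation yields $1-F(\lambda)=c/(\bar W-W_N)$, and each player's search cost is $c/(1-F(\lambda))=\bar W-W_N$, exactly as in the paper. Your budget-identity route to part~3 (total prizes minus total cost equals $NW_N$) is a valid alternative check on the paper's direct computation, which you also give.
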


\begin{corollary}[Prize-structure neutrality]\label{cor:neutrality}
The equilibrium threshold, the distribution of submitted values, each player's payoff, and total search expenditure depend on the prize vector only through the average prize $\bar W$ and the consolation prize $W_N$. In particular, when $W_N = 0$, every rank-order allocation $W_1 \geq \cdots \geq W_N = 0$ of a fixed budget induces the same equilibrium as winner-take-all: the division of the budget among ranks $1$ through $N-1$ is payoff-irrelevant.\footnote{Weak monotonicity of the prizes is essential: it makes the expected prize from stopping increasing in the submitted value, so that optimal stopping takes threshold form. With prizes non-monotone in rank, the equilibrium would no longer be in threshold strategies.}
\end{corollary}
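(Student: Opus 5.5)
The corollary should follow by reading the objects listed in Proposition~\ref{prop:multiprize} and checking that each depends on the prize vector only through $(\bar W, W_N)$.

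First, the threshold. By part~1 of Proposition~\ref{prop:multiprize}, the symmetric equilibrium threshold is the unique solution of $1-F(\lambda)=c/(\bar W-W_N)$. So $\lambda$ depends on the prizes only through $\bar W-W_N$. The distribution of submitted values is the truncation $G(x)=(F(x)-F(\lambda))/(1-F(\lambda))$ on $[\lambda,K]$. Since $F$ is fixed, $G$ is also a function of $\lambda$ alone. Part~2 gives each player's payoff as $V=W_N$. Part~3 gives total expenditure as $\sum_j (W_j-W_N)=N(\bar W-W_N)$. I would also check this independently: each player's expected number of draws is $1/(1-F(\lambda))$, so expected cost per player is $c/(1-F(\lambda))=\bar W-W_N$. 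Summing over the $N$ players gives $N(\bar W-W_N)$, which matches. Together these establish the first sentence of the corollary.

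For the second sentence, set $W_N=0$ and fix a budget $B=\sum_j W_j$, so $\bar W=B/N$. The threshold condition becomes $1-F(\lambda)=Nc/B$. This is exactly the winner-take-all condition~\eqref{eq:equilibrium} with $W=B$. The payoff is zero and total expenditure is $B$, as in Proposition~\ref{prop:equilibrium} and Corollary~\ref{cor:compstat}. Hence every monotone split of $B$ across ranks $1,\dots,N-1$ yields the same equilibrium as winner-take-all. Winner-take-all itself is the special case $(W_1,W_2,\dots)=(B,0,\dots,0)$.

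The only substantive point is already inside Proposition~\ref{prop:multiprize}. It is worth recalling why $\bar W$ appears there, since this is the step that explains the neutrality. At the symmetric threshold $\lambda$, a player who submits exactly $\lambda$ beats nobody and receives $W_N$, so $V=W_N$. In the Bellman equation, the expected prize from accepting a draw $x\ge\lambda$ is $\sum_j W_j\binom{N-1}{j-1}G(x)^{N-j}(1-G(x))^{j-1}$. Integrating against $dF$ over $[\lambda,K]$ and substituting $u=G(x)$ gives $(1-F(\lambda))\sum_j W_j\int_0^1\binom{N-1}{j-1}u^{N-j}(1-u)^{j-1}du$. Each of these Beta integrals equals $1/N$, so the expression is $(1-F(\lambda))\bar W$. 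Every rank is equally likely under symmetry, which is why only the average prize matters. The monotonicity assumption noted in the footnote keeps the stopping rule a threshold rule, so no further argument is needed.
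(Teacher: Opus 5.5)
Your proposal is correct and follows the paper's route: the corollary is read off from Proposition~\ref{prop:multiprize}, whose proof rests on the same beta-integral identity $\int_0^1\binom{N-1}{j-1}u^{N-j}(1-u)^{j-1}\,du=1/N$ that you recall, so the threshold and the distribution of submitted values depend on the prizes only through $\bar W-W_N$, the payoff is $W_N$, and expenditure is $N(\bar W-W_N)$. Your reduction of the $W_N=0$ case to the winner-take-all condition $1-F(\lambda)=Nc/B$ is also right, and the interiority requirement $c<\bar W-W_N$ correspondingly becomes the winner-take-all assumption $Nc<B$.
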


A draw above the common threshold is equally likely to land at any rank, so the marginal return to one more draw is the average prize $\bar W$, while stopping at the threshold guarantees last place and $W_N$. No other feature of the prize vector matters for equilibrium outcomes. Players earn rents equal to $W_N$, and all remaining prize money is dissipated through search. This result contrasts with static contests, where the division of the budget across ranks is the key design question \citep{MoldovanuSela2001}.

\section{Finite-Draw Contest with Recall}\label{sec:finite_recall}

The setup in this section is the research-tournament model of \citet{Taylor1995} with no entry fee. Within this model, the finite-depth contest with recall, our contribution is twofold. First, we show that the equilibrium is distribution-free when written in quantile space. Second, we prove that the equilibrium threshold is single-peaked in the number of competitors, rising at low cost and falling at high cost, which corrects an error in his analysis. The other parts of the paper, the unlimited-draw contest, the contest without recall, and the design problems, have no counterpart in his framework.

In Section \ref{sec:equilibrium} with unlimited draws, the equilibrium threshold decreases in $N$ for all $c>0$: more competitors means less search. Entry in fact pushes the threshold in two directions at once. An additional competitor lowers each player's chance of winning, which reduces the return to search and pushes the threshold down: this is the discouragement effect. An additional competitor also raises the bar for winning: a player whose continuation value is positive responds by rejecting draws she would otherwise accept to defend her rent by searching harder. With unlimited draws, full dissipation pins the continuation value to zero, and only discouragement operates. With finitely many draws dissipation is partial: players cannot keep redrawing to compete away rents, the continuation value stays positive, and the second force appears. We call it the \emph{selectivity effect}. Selectivity dominates at low search costs and discouragement at high costs.

Tractability of the finite-$k$ analysis depends on the recall convention. Under recall the optimal strategy reduces to a single stationary threshold, and clean results are available for every $k \geq 2$. This section treats the recall case. Under no recall the strategy is a $(k-1)$-dimensional threshold vector and the analysis is harder: Section~\ref{sec:no_recall} proves existence at every depth, solves two and three draws completely, and characterizes large fields at every depth.

\subsection{Setup and Equilibrium Condition}

Each player draws up to $k$ times with recall: if she stops at round $t$, her submission is $\max(x_1, \ldots, x_t)$. The following lemma restricts attention to stationary threshold strategies.

\begin{lemma}\label{lem:stationary}
If all other players use a stationary threshold strategy, then each player's best response is also a stationary threshold strategy.
\end{lemma}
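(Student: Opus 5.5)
Fix the common rival threshold $\mu$ used by all $N-1$ opponents. Each opponent's submitted value then has some distribution $H$ on $[0,K]$, determined by $F$, $k$ and $\mu$. Under recall the submission is $\max(x_1,\dots,x_t)$. Its exact form does not matter; we only need that $H$ is nondecreasing and continuous, which holds because $F$ has a density. Player $i$'s payoff from stopping with running maximum $m$ is therefore $\pi(m)=W\,H(m)^{N-1}$, and $\pi$ is nondecreasing in $m$. Her problem is a finite-horizon optimal stopping problem with state $(t,m)$, where $t$ is the number of draws taken and $m$ is the running maximum. We can write the value function by backward induction:
\[
V_k(m)=\pi(m),\qquad V_t(m)=\max\Bigl\{\pi(m),\;-c+\int_0^K V_{t+1}(\max(m,x))\,dF(x)\Bigr\}.
\]
The plan is to show two things: the stopping region at each round is an up-set $\{m\ge \lambda_t\}$, and the cutoff $\lambda_t$ does not depend on $t$.

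Step 1 is monotonicity. By induction, each $V_t$ is nondecreasing in $m$, because $\max(m,x)$ is nondecreasing in $m$ and $\pi$ is nondecreasing. Step 2 looks at the gain from one more draw. Define
\[
D_t(m)=-c+\int_0^K V_{t+1}(\max(m,x))\,dF(x)-\pi(m).
\]
This is the continuation value minus the stopping payoff. The key claim is that $D_t$ is nonincreasing in $m$. A cleaner route is the monotone (one-step look-ahead) stopping argument. Consider the myopic gain
\[
g(m)=-c+\int_0^K\bigl[\pi(\max(m,x))-\pi(m)\bigr]\,dF(x)=-c+\int_m^K\bigl(\pi(x)-\pi(m)\bigr)\,dF(x).
\]
Its derivative in $m$ is $-\pi'(m)(1-F(m))\le 0$, so $g$ is nonincreasing. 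Hence the set $B=\{m: g(m)\le 0\}$ is an interval $[\lambda,K]$. This set is closed under transitions: once $m\in B$, any later running maximum $\max(m,x)\ge m$ is also in $B$. That is the monotone case of Chow–Robbins. We would apply the standard finite-horizon result, proved by backward induction: in the monotone case, the one-step look-ahead rule ``stop iff $m\in B$'' is optimal at every round $t<k$. At $t=k$ stopping is forced, as the strategy definition requires. We would include the short induction rather than cite it. Assume $V_{t+1}=\pi$ on $B$. Then for $m\in B$, every continuation state lies in $B$, so $V_{t+1}(\max(m,x))=\pi(\max(m,x))$. Therefore continuing yields $\pi(m)+g(m)\le\pi(m)$, and stopping is optimal. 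For $m\notin B$, continuing yields at least $\pi(m)+g(m)>\pi(m)$, since $V_{t+1}\ge\pi$. So stopping is strictly suboptimal.

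The cutoff $\lambda=\inf B$ is the same at every round. This gives a stationary threshold best response: stop as soon as the running maximum reaches $\lambda$, and otherwise submit the running maximum at round $k$. If $g<0$ everywhere, the threshold is $0$. If $g>0$ everywhere below $K$, the player always draws to depth, which corresponds to $\lambda=K$. Ties at $g(m)=0$ occur on a null set and do not affect payoffs.

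The main obstacle is making the monotone-case argument rigorous when $\pi$ is only monotone, not differentiable. For $m<m'$ we would avoid derivatives and write
\[
g(m)-g(m')=\int_m^{m'}(\pi(x)-\pi(m))\,dF+\int_{m'}^K(\pi(m')-\pi(m))\,dF\ge 0.
\]
This gives monotonicity of $g$ directly. We would also confirm that $\pi$ is well defined without ties, using continuity of $H$ and the fact that tie events have probability zero. Everything else is bookkeeping.
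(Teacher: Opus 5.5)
Your proposal is correct and follows essentially the same argument as the paper: the paper defines the same myopic gain $\psi(m)=W\int_m^K[H(x)-H(m)]\,dF(x)-c$, shows it is decreasing, and runs the same backward induction. In that induction, at or above the cutoff the continuation value equals $WH(m)+\psi(m)\le WH(m)$, and below it the continuation value is at least $WH(m)+\psi(m)>WH(m)$. Your derivative-free proof that $g$ is monotone, and your explicit treatment of the corner case where $g$ has no root (threshold $0$), are small refinements; the paper differentiates to get strict monotonicity and simply assumes the root exists.
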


Stationarity follows because the one-step gain from continuing has a single cutoff in the running maximum: above the cutoff even one more draw is not worth its cost, below it one more draw is already worthwhile, and induction makes the cutoff independent of the number of rounds remaining.

Under the symmetric stationary strategy $\lambda$ with $a = F(\lambda)$, an opponent's submitted quantile has CDF
\begin{equation}\label{eq:HR}
H_k^R(u; a) = \begin{cases} u^k & \text{if } u < a, \\[2pt] a^k + (u - a)\,\dfrac{1 - a^k}{1 - a} & \text{if } u \geq a. \end{cases}
\end{equation}
For $u < a$, the formula $u^k$ is the probability that all $k$ draws lie at or below $u$, which forces no early stopping since every draw is also below $a$. For $u \geq a$, the formula combines two events: with probability $a^k$ the rival never stops early, so her submission (the max of $k$ draws all below $a$) is automatically at most $u$; with probability $(u-a)(1-a^k)/(1-a)$ she stops early at a draw in $[a, u]$. Conditional on early stopping, that first-exceeding draw is uniform on $[a, 1]$. The expression is independent of $F$.

The indifference condition at round-1 threshold $a$ is
\begin{equation}\label{eq:gkR}
\frac{c}{W} \;=\; g_k^R(N, a) \;\equiv\; (1 - a)\left[\frac{1 - a^{kN}}{N(1 - a^k)} - a^{k(N-1)}\right],
\end{equation}
which does not depend on $F$.\footnote{At $q_1 = a$, stopping gives win probability $H_k^R(a;a)^{N-1} = a^{k(N-1)}$ and payoff $W a^{k(N-1)}$. Continuing costs $c$, after which (by Lemma~\ref{lem:stationary}) the player stops at round~2 with submission $\max(a, q_2)$ since the running maximum already meets the threshold. Expected continuation payoff: $-c + W[a \cdot a^{k(N-1)} + \int_a^1 H_k^R(u;a)^{N-1} du]$. The integral equals $(1-a)(1-a^{kN})/(N(1-a^k))$ by the substitution $v = a^k + (u-a)(1-a^k)/(1-a)$. Setting stop = continue and solving for $c/W$ gives~\eqref{eq:gkR}.} As in the unlimited-draw case, the equilibrium is distribution-free: the acceptance quantile $a^*(N, k, c/W)$ is a function of $(N, k, c/W)$ alone.

\begin{proposition}\label{prop:finite_k_recall_eq}
For every $c/W \in (0, 1/N)$, the $k$-draw contest with recall has a unique equilibrium. It is symmetric: each player uses the stationary threshold $\lambda^* = F^{-1}(a^*)$, where $a^*(N, k, c/W) \in (0, 1)$ is the unique solution of $c/W = g_k^R(N, a)$.
\end{proposition}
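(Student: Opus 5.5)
The plan has three parts. First, I reduce to stationary thresholds in quantile space. Second, I show that the symmetric indifference condition \eqref{eq:gkR} has exactly one root in $(0,1)$. Third, I rule out asymmetric profiles.

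\emph{Reduction.} By Lemma~\ref{lem:stationary}, any best response to stationary opponents is a stationary threshold. So I work with acceptance quantiles $a_i=F(\lambda_i)$, transforming every draw by $u=F(x)$, which is uniform on $[0,1]$. Fix player~$i$ and let $\Phi_i(u)$ be the probability that she beats all rivals with quantile $u$. I will write her one-step gain from continuing at running maximum $q$ as
\[
D_i(q)=-c+W\int_q^1\bigl[\Phi_i(u)-\Phi_i(q)\bigr]\,du .
\]
This is strictly decreasing in $q$ wherever $\Phi_i$ is increasing. Her optimal threshold is therefore the unique zero of $D_i$, or $0$ if $D_i(0)\le 0$. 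In the symmetric case $\Phi_i(u)=H_k^R(u;a)^{N-1}$, and the footnote computation gives $D_i(a)=W g_k^R(N,a)-c$. A symmetric equilibrium is thus exactly a root of $c/W=g_k^R(N,a)$.

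\emph{Unique root.} I will use the integral form
\[
g_k^R(N,a)=\int_a^1\Bigl[H_k^R(u;a)^{N-1}-a^{k(N-1)}\Bigr]\,du .
\]
The boundary values are $g_k^R(N,0)=1/N$ (since $H_k^R(u;0)=u$) and $g_k^R(N,a)\to 0$ as $a\to1$, because the bracket in \eqref{eq:gkR} tends to $1-1$ and the factor $(1-a)\to0$. For strict monotonicity I will not differentiate the closed form, whose bracket is not obviously monotone. Instead I use the structure of $H_k^R$. For $u\ge a$, $H_k^R(u;a)=1-(1-u)\frac{1-a^k}{1-a}$ is the chord of the convex function $u^k$ from $a$ to $1$. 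The chord slope $(1-a^k)/(1-a)$ is increasing in $a$, so $H_k^R(u;a)$ is strictly decreasing in $a$ for each fixed $u<1$. Meanwhile $a^{k(N-1)}$ is increasing in $a$. The integrand is therefore strictly decreasing in $a$, and the domain $[a,1]$ shrinks while the integrand is nonnegative there. Hence $g_k^R(N,\cdot)$ is a continuous strict decrease from $1/N$ to $0$. For $c/W\in(0,1/N)$ there is then a unique $a^*\in(0,1)$, and $\lambda^*=F^{-1}(a^*)$.

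\emph{No asymmetric equilibria.} This is the step I expect to be the main obstacle. Suppose the thresholds differ, and let $a_1=\max_i a_i>a_2=\min_i a_i$. I would compare the two indifference conditions $D_1(a_1)=0$ and $D_2(a_2)=0$ (or $D_2(0)\le 0$ at a corner). The key is that $\Phi_1$ is built from rivals including player~2 with the low threshold, and $\Phi_2$ from rivals including player~1 with the high threshold. A low-threshold rival's submission distribution is first-order dominated by a high-threshold rival's.

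I would first try the two-player case directly, as in Remark~\ref{rmk:asymmetric}: write both one-step indifference equations explicitly using the chord formula and derive a contradiction. For general $N$ I would then look for a single-crossing argument: show that the map from a player's own threshold and the rival profile to her optimal threshold has slope strictly below one in the appropriate sense, so that the two extreme players cannot both be optimizing.

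If this proves intractable, a fallback is available because rents are positive under finite $k$. The "passive player" indifference that sustains the asymmetric equilibria of Remark~\ref{rmk:asymmetric} cannot arise here, since at any quantile below the rivals' thresholds the win probability $\prod_j u^k$ is strictly positive and strictly increasing. That should make each player's best response unique and continuous, which gives a foothold for a contraction-type argument.
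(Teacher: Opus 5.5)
The gap is your third step, and it is not a side issue. The statement asserts that the equilibrium is unique and symmetric, yet nothing in your proposal excludes asymmetric threshold profiles. The paper does not prove this either; it imports it from Proposition~2 of \citet{Taylor1995}. None of your sketches closes it:
\begin{itemize}
\item Unique, continuous best responses give existence via Brouwer, not uniqueness.
\item A contraction or a ``slope below one'' needs a quantitative bound on how one player's threshold responds to the others', and you give none.
\item Monotonicity is unavailable, because a weaker rival set can lower a player's threshold (the rent/dilution tension of Section~\ref{sec:rents}).
\end{itemize}
Here is how the natural extreme-player comparison fails. Let $a_1>a_2$ be the largest and smallest thresholds, $S(a)=(1-a^k)/(1-a)$, and $I_i(q)=\int_q^1[\Phi_i(u)-\Phi_i(q)]\,du$. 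Optimality gives $I_1(a_1)=c/W$. Since $I_2$ is strictly decreasing, it also gives $I_2(a_1)<c/W$. On $[a_1,1]$ every rival CDF is the chord $1-S(a_j)(1-u)$. So, with $Q$ the product over the remaining $N-2$ rivals,
\[
I_2(a_1)-I_1(a_1)\;=\;\bigl(S(a_1)-S(a_2)\bigr)\int_{a_1}^1\bigl[(1-a_1)Q(a_1)-(1-u)Q(u)\bigr]\,du .
\]
For $N=2$, $Q\equiv 1$ and the integral is $(1-a_1)^2/2>0$, which is a contradiction. So your two-player plan does work. For $N\ge 3$ the sign is not fixed. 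With $N=3$, $k=2$ and the third player at $a_1$, the integral equals $(1-a_1)^2(4a_1^2-1)/6$, which is negative for $a_1<1/2$. Comparing the extreme players therefore yields no contradiction once $N\ge 3$. A complete proof needs either a genuinely new argument that uses the whole profile, or the citation to Taylor, as the paper does.

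The rest is sound. Your proof that $c/W=g_k^R(N,a)$ has exactly one root takes a different route from the paper's. The paper rewrites the condition in the surplus form $G(a,N)=W/N-K(a)-WR(a^k,N)=0$. It then uses $\partial G/\partial a<0$ together with $G(0^+,N)=W/N-c>0$ and $G(1^-,N)=-ck<0$, and it reuses this form for the comparative statics in $N$. You instead combine the integral representation with the chord-slope monotonicity of $H_k^R(\cdot;a)$, which shows directly that $g_k^R(N,\cdot)$ itself is strictly decreasing. That argument is correct and more transparent.

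One small repair in the reduction. Lemma~\ref{lem:stationary} assumes stationary rivals, so to claim uniqueness among all profiles you should note that any rival strategy yields a continuous, strictly increasing submission CDF. The reason is that the event ``first draw in $(u,u+\delta)$, other $k-1$ draws below $u$'' has probability $\delta u^{k-1}$ and places the submission in $(u,u+\delta)$ whatever the stopping rule. Hence each $\Phi_i$ is strictly increasing, and the same induction makes every best response the unique stationary threshold.
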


At the boundary $c = 0$, recall makes redrawing weakly dominant, so $a^*(N, k, 0) = 1$ (every player uses her full search depth). As $Nc \uparrow W$, $a^* \downarrow 0$. We restrict attention to $Nc < W$ throughout.

\subsection{Selectivity versus Discouragement}\label{sec:rents}

As a preliminary step we derive the players' equilibrium payoff in closed form. Denote a player's expected search cost as $K(a) = c(1-a^k)/(1-a)$.

\begin{lemma}\label{prop:surplus}
In the $k$-draw contest with recall, each player's equilibrium payoff is
\begin{equation}\label{eq:surplus}
V \;=\; \frac{W}{N} - K(a^*) \;=\; W A^{N-1}\Bigl(1 - \frac{N-1}{N}\,A\Bigr), \qquad A = (a^*)^k,
\end{equation}
where the second equality substitutes the equilibrium condition~\eqref{eq:gkR}. The payoff is strictly positive at every interior equilibrium.
\end{lemma}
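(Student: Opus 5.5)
The plan is to compute the equilibrium payoff directly as expected prize minus expected search cost, and then eliminate $c$ using the indifference condition~\eqref{eq:gkR}.

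\emph{Step 1 (expected prize).} By Proposition~\ref{prop:finite_k_recall_eq} the equilibrium is symmetric: all $N$ players use the same stationary threshold $\lambda^*$. Their submitted values are therefore i.i.d. with quantile CDF $H_k^R(\cdot\,;a^*)$ from~\eqref{eq:HR}. This CDF is continuous in $u$, so ties occur with probability zero. By exchangeability each player wins with ex ante probability exactly $1/N$, and her expected prize is $W/N$.

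\emph{Step 2 (expected cost).} Under the stationary threshold a player takes draw $t+1$ (for $t\le k-1$) exactly when her first $t$ draws all fall below $\lambda^*$. That event has probability $(a^*)^t$, because under recall the running maximum is below $\lambda^*$ iff every draw is. So the expected number of draws is $\sum_{t=0}^{k-1}(a^*)^t=(1-(a^*)^k)/(1-a^*)$, and the expected cost is $K(a^*)$. This gives the first equality, $V=W/N-K(a^*)$.

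\emph{Step 3 (closed form).} Solve~\eqref{eq:gkR} for $c$:
\[
c=W(1-a^*)\Bigl[\frac{1-A^N}{N(1-A)}-A^{N-1}\Bigr],\qquad A=(a^*)^k .
\]
Multiplying by $(1-A)/(1-a^*)$ gives
\[
K(a^*)=W\Bigl[\frac{1-A^N}{N}-A^{N-1}(1-A)\Bigr].
\]
Hence
\[
V=\frac{W A^N}{N}+WA^{N-1}-WA^N=WA^{N-1}\Bigl(1-\frac{N-1}{N}A\Bigr),
\]
which is the second equality.

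\emph{Step 4 (positivity).} At an interior equilibrium, $a^*\in(0,1)$ by Proposition~\ref{prop:finite_k_recall_eq}, so $A\in(0,1)$ and $A^{N-1}>0$. Moreover $1-\tfrac{N-1}{N}A>1-\tfrac{N-1}{N}=\tfrac1N>0$. Therefore $V>0$.

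The algebra is routine. The only point needing care is Step 1: the identity ``expected prize $=W/N$'' relies on symmetry and on the absence of atoms in the submission distribution. The submission distribution does have a kink at $a^*$, but it has no atom there, which is what continuity of~\eqref{eq:HR} guarantees. Since the first equality is established without using~\eqref{eq:gkR}, it also makes transparent why rents are exactly the gap between the fair share $W/N$ and the cost of search.
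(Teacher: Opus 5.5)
Your proof is correct and follows the paper's argument essentially step for step: symmetry plus the absence of ties gives expected prize $W/N$, the expected number of draws $\sum_{t=0}^{k-1}(a^*)^t$ (the paper's $\mathbb{E}[\min(G,k)]$ with $G$ geometric) gives $K(a^*)$, and substituting~\eqref{eq:gkR} yields the closed form and its positivity. Your remark that $H_k^R$ is continuous at its kink justifies the no-ties step a little more explicitly than the paper does, but the route is the same.
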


To interpret \eqref{eq:surplus} rewrite it as $W\bigl[A^{N-1}(1 - A) + A^N/N\bigr]$. The first term is the probability that the player draws above the common threshold while no rival ever does, so that she wins unchallenged. The second is the probability that nobody draws above the threshold, in which case the best of the below-threshold maxima wins, each player's with probability $1/N$. The rent is exactly the prize collected in these unchallenged events. In every other event some rival crosses the threshold, and the prize the player expects to win in those contested races is exactly offset by her expected search cost. As the search depth grows the unchallenged events become negligible and rents vanish: the bracket in~\eqref{eq:gkR} is at most one, so $c/W \leq 1 - a^*$ and $A \leq (1 - c/W)^k \to 0$, and the full-dissipation result of Section~\ref{sec:equilibrium} emerges as the limit. The dissipation ratio is $NK(a^*)/W = 1 - NA^{N-1} + (N-1)A^{N}$.

Identity~\eqref{eq:surplus} determines how the threshold responds to competition. Treating $N$ as a continuous variable and differentiating it implicitly yields the following result.

\begin{proposition}\label{prop:corrected}
At any interior equilibrium, with $N$ treated as a continuous variable,
\begin{equation}\label{eq:dadN}
\frac{da^*}{dN} \;=\; \frac{V \ln(1/A) \;-\; (W/N^2)\,(1 - A^N)}{K'(a^*) \;+\; W k (N-1)\, (a^*)^{k-1} A^{N-2} (1 - A)}.
\end{equation}

\end{proposition}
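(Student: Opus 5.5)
The plan is to differentiate the identity of Lemma~\ref{prop:surplus} implicitly in $N$, holding $c$, $W$ and $k$ fixed. Define
\[
\Phi(a,N) \;=\; \frac{W}{N} - K(a) - W A^{N-1}\Bigl(1-\frac{N-1}{N}A\Bigr), \qquad A=a^k .
\]
The equilibrium quantile satisfies $\Phi(a^*(N),N)=0$. This holds identically in $N$, because the lemma's second equality is exactly the equilibrium condition~\eqref{eq:gkR} rearranged. The condition $c/W=g_k^R(N,a)$ has a unique solution by Proposition~\ref{prop:finite_k_recall_eq}, so $\Phi(\cdot,N)=0$ pins down $a^*$. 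The implicit function theorem then gives $da^*/dN=-\Phi_N/\Phi_a$, provided $\Phi_a\neq0$ at an interior equilibrium.

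\textbf{Step 1: $\Phi_a$.} Here $\partial_a$ of $-K(a)$ is $-K'(a)$. Write the payoff term as $W[A^{N-1}-\tfrac{N-1}{N}A^N]$. Its derivative in $A$ is $W(N-1)A^{N-2}(1-A)$. Multiplying by $dA/da=k a^{k-1}$ gives
\[
\Phi_a \;=\; -\bigl[K'(a)+Wk(N-1)a^{k-1}A^{N-2}(1-A)\bigr].
\]
This is minus the denominator in~\eqref{eq:dadN}. We have $K(a)=c(1+a+\dots+a^{k-1})$, so $K'>0$ for $k\ge2$. The second term is positive for $a\in(0,1)$. Hence $\Phi_a<0$ and the implicit function theorem applies.

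\textbf{Step 2: $\Phi_N$ at fixed $a$.} Differentiating $W/N$ gives $-W/N^2$. For the payoff term, use $\partial_N A^{N-1}=A^{N-1}\ln A$ and $\partial_N\bigl[\tfrac{N-1}{N}A^N\bigr]=\tfrac1{N^2}A^N+\tfrac{N-1}{N}A^N\ln A$. Then
\[
\partial_N\Bigl[A^{N-1}-\tfrac{N-1}{N}A^N\Bigr] \;=\; \ln A\,A^{N-1}\Bigl(1-\tfrac{N-1}{N}A\Bigr)-\frac{A^N}{N^2}.
\]
Multiplying by $W$ and recognizing $V$ from~\eqref{eq:surplus}, this equals $-(V/W)\ln(1/A)\cdot W - WA^N/N^2$. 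Therefore
\[
\Phi_N=-\frac{W}{N^2}+V\ln(1/A)+\frac{WA^N}{N^2} = V\ln(1/A)-\frac{W}{N^2}(1-A^N).
\]

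\textbf{Step 3: combine.} We get $da^*/dN=-\Phi_N/\Phi_a$, which is $\Phi_N$ divided by the bracket in Step 1, and this is exactly~\eqref{eq:dadN}.

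The main obstacle is bookkeeping rather than a conceptual difficulty. First, one must justify treating the two expressions in~\eqref{eq:surplus} as defining a valid implicit equation in $N$. This rests on the lemma's statement that their equality is equivalent to~\eqref{eq:gkR} at every real $N>1$. Second, one must verify the sign of $\Phi_a$, so that the denominator is nonzero.
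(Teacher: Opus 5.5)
Your proposal is correct and follows the paper's proof essentially line for line. The paper also rewrites the equilibrium condition in the surplus form $G(a,N)=W/N-K(a)-W R(a^k,N)=0$ with $R(A,N)=A^{N-1}-\frac{N-1}{N}A^N$. It computes $\partial G/\partial N = V\ln(1/A)-(W/N^2)(1-A^N)$ using $WR=V$ at equilibrium, signs $\partial G/\partial a<0$ exactly as you do, and concludes by the implicit function theorem.
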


Since the denominator is strictly positive: the threshold rises with entry if and only if $V \ln(1/A) > (W/N^2)(1 - A^N)$. Two forces appear in the numerator of~\eqref{eq:dadN}. The negative term $(W/N^2)(1 - A^N)$ is the \emph{dilution term}: an additional competitor lowers each player's expected share of the prize. The positive term $V \ln(1/A)$ is the \emph{rent term}, proportional to the equilibrium rent: an additional competitor makes stopping at the boundary less safe, and the value of responding with further search scales with the rent at stake. Selectivity therefore requires rents. With unlimited draws rents are zero, only the dilution term remains, and the threshold falls in $N$ at every cost level, which is the discouragement result of Corollary~\ref{cor:compstat}. With finitely many draws rents are positive by Lemma~\ref{prop:surplus}, and at low cost ($A$ near one) the rent term dominates.

Proposition~\ref{prop:corrected} locates the error in \citet{Taylor1995} exactly. Equation~A5 of his technical appendix computes the group-size derivative with the dilution term only; the rent term is omitted, and the omission drives his claim that the threshold is strictly decreasing in the number of contestants. \citet{BrookinsUsvitskiyTaylor2026} independently identify the same omission and derive the corrected two-term derivative in a continuous-time Poisson version of the model, in which the rent term appears as a reduced-safety effect. The two corrections agree: their model is the continuum limit of ours.\footnote{Set $a = 1 - L/k$ in~\eqref{eq:gkR} and let $k \to \infty$: $k\, g_k^R(N, a) \to L\bigl[(1 - e^{-NL})/(N(1 - e^{-L})) - e^{-(N-1)L}\bigr]$, which is the equilibrium function of \citet{BrookinsUsvitskiyTaylor2026} with $kc/W$ in the role of their cost-to-prize ratio $cT/P$.}

The sign criterion also determines the global shape of the response to entry.

\begin{theorem}\label{thm:singlepeak}
Fix $k \geq 2$ and $c/W \in (0, 1/2)$, and treat $N$ as a continuous variable on the admissible range $[2, W/c)$. There exists $N^\dagger \in [2, W/c)$ such that the equilibrium quantile $a^*(N, k, c/W)$ is strictly increasing in $N$ on $[2, N^\dagger)$ and strictly decreasing on $(N^\dagger, W/c)$; either piece may be empty. In particular, the integer sequence $a^*(2), a^*(3), \ldots$ rises and then falls, with no second rise.
\end{theorem}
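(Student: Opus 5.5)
The plan is to study the sign of the numerator in Proposition~\ref{prop:corrected} along the equilibrium path. Since the denominator of~\eqref{eq:dadN} is strictly positive, the sign of $da^*/dN$ equals the sign of
\[
S(N) \;\equiv\; \frac{V\ln(1/A) - (W/N^2)(1-A^N)}{W},
\qquad A = a^*(N)^k .
\]
By Lemma~\ref{prop:surplus} this can be written in terms of $A$ and $N$ alone:
\[
S = A^{N-1}\Bigl(1-\tfrac{N-1}{N}A\Bigr)\ln(1/A) - \frac{1-A^N}{N^2}.
\]

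The first step is regularity. On $[2,W/c)$ the equilibrium quantile $a^*(N)$ is the unique interior solution of $c/W = g_k^R(N,a)$ by Proposition~\ref{prop:finite_k_recall_eq}. The partial derivative of the defining identity in $a$ is the positive denominator of~\eqref{eq:dadN}, so the implicit function theorem makes $a^*$ a $C^1$ function of $N$. Consequently $S(N)$ is continuous, in fact $C^1$, along the path. The hypothesis $c/W<1/2$ just guarantees that the admissible interval $[2,W/c)$ is nonempty.

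The key step is a single-crossing argument: I would show that at every $N_0$ with $S(N_0)=0$ we have $\frac{d}{dN}S(N,A(N))<0$ at $N_0$. At such a point $da^*/dN=0$, so $dA/dN=0$, and the total derivative equals the partial $\partial_N S$ at fixed $A$. Write $L=\ln(1/A)$ and use $S=0$ to replace the rent term $A^{N-1}(1-\frac{N-1}{N}A)L$ by $(1-A^N)/N^2$. The computation then collapses to
\[
\partial_N S \;=\; \frac{1}{N^2}\Bigl[\frac{2(1-A^N)}{N} - L\,(1+A^N)\Bigr].
\]
Setting $B=A^N\in(0,1)$, so that $NL=\ln(1/B)$, negativity of this bracket is equivalent to
\[
\ln(1/B) \;>\; \frac{2(1-B)}{1+B}.
\]
This is the classical inequality, which holds strictly for $B\in(0,1)$ because both sides vanish at $B=1$ and the left side has the more negative derivative on $(0,1)$.

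The final step converts the single crossing into monotonicity. Every zero of the $C^1$ function $S$ is a strict downward crossing. Hence the zero set is discrete, and there cannot be two zeros: between two consecutive downward crossings $S$ would have to pass back up through zero. So $S$ has at most one zero $N^\dagger$, with $S>0$ before it and $S<0$ after it. If $S$ has no zero, I set $N^\dagger=2$ or $N^\dagger$ equal to the right endpoint, according to the sign of $S$. This gives strict increase of $a^*$ on $[2,N^\dagger)$ and strict decrease on $(N^\dagger,W/c)$, and restricting to integers gives the rise-then-fall statement.

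I expect the main obstacle to be the algebra of $\partial_N S$. In particular, the rent term contributes a cross term $-A^{N}L/N^2$ that must be combined correctly with the $A^N\ln A/N^2$ coming from the dilution term, so that everything reduces to the clean $\ln$ versus $2(1-B)/(1+B)$ comparison. I would verify that reduction carefully before relying on it.
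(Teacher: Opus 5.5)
Your proposal is correct and matches the paper's proof essentially step for step. The paper likewise signs $da^*/dN$ through the numerator $\Phi(N,A)=R\ln(1/A)-(1-A^N)/N^2$ (your $S$), shows that at any zero $\partial_N\Phi = N^{-3}\bigl[2(1-x)-(1+x)\ln(1/x)\bigr]$ with $x=A^N$, which is negative by $\ln(1/x)>2(1-x)/(1+x)$, and then applies the same at-most-one-downward-crossing argument; your algebra for $\partial_N S$, including combining the two $A^N L/N^2$ terms, agrees with it exactly.
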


The proof rests on one observation: at any point where the numerator of~\eqref{eq:dadN} vanishes, its derivative along the equilibrium path is strictly negative, by the classical inequality $\ln(1/x) > 2(1-x)/(1+x)$ with $x = A^N$. Every critical point is therefore a strict downward crossing, and there can be at most one. \citet{BrookinsUsvitskiyTaylor2026} in a continuous-time model establish the same strict shape, strictly increasing up to a unique peak and strictly decreasing beyond it, on every connected interval of group sizes on which an interior equilibrium exists. Theorem~\ref{thm:singlepeak} establishes the shape at every finite depth $k$, of which their model is the continuum limit, and over the full admissible range $[2, W/c)$, where the equilibrium is interior throughout. In their model the peak group size falls as the horizon lengthens and converges to one, so that in long contests entry discourages search throughout the relevant range, in line with the unlimited-draw benchmark of Section~\ref{sec:equilibrium}.\footnote{They also decompose the sponsor's gain from a larger field into a direct participation effect and what they call the encouragement premium, and use the decomposition to study upfront investment and entry subsidies.}

The following proposition gives the two regimes, selectivity at low cost and discouragement at high cost, with explicit ranges.

\begin{proposition}\label{prop:finite_k_recall}
\begin{enumerate}
\item[(i)] For each finite $k \geq 2$ and each integer $\bar N \geq 2$, there exists $\bar c(\bar N, k) > 0$, independent of $F$, such that for all $c/W \in (0, \bar c(\bar N, k))$,
\[
a^*(2, k, c/W) \;<\; a^*(3, k, c/W) \;<\; \cdots \;<\; a^*(\bar N + 1, k, c/W).
\]
\item[(ii)] For every $c/W \in (0, 1/2)$, the sequence $\{a^*(N, k, c/W)\}$ is strictly decreasing on some terminal range of admissible $N$.
\end{enumerate}
\end{proposition}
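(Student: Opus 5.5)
Both parts rest on the sign criterion of Proposition~\ref{prop:corrected}: the denominator of~\eqref{eq:dadN} is positive, so along the equilibrium path $a^*$ rises in $N$ exactly when
\[
\Phi(N,A)\;\equiv\;A^{N-1}\Bigl(1-\tfrac{N-1}{N}A\Bigr)\ln(1/A)\;-\;\tfrac{1}{N^2}\bigl(1-A^N\bigr)
\]
is positive, where $A=(a^*)^k$ and $V$ has been replaced by the closed form of Lemma~\ref{prop:surplus}. My plan is to treat (i) as the regime $A\to 1$ and (ii) as the regime $A\to 0$. In each regime I will show that $\Phi$ has a definite sign. I then integrate the sign of $da^*/dN$ over the relevant interval of continuous $N$ and restrict to integers. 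Theorem~\ref{thm:singlepeak} could shorten (ii), but I will argue directly.

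For (i), the first step is a local expansion at $A=1-\varepsilon$. The first-order terms cancel: $A^{N-1}(1-\frac{N-1}{N}A)=1/N+O(\varepsilon^2)$, $\ln(1/A)=\varepsilon+\varepsilon^2/2+O(\varepsilon^3)$, and $(1-A^N)/N^2=\varepsilon/N-(N-1)\varepsilon^2/(2N)+O(\varepsilon^3)$. What remains is
\[
\Phi=\tfrac{\varepsilon^2}{2}+O(\varepsilon^3).
\]
The $O(\varepsilon^3)$ remainder is uniform for $N\in[2,\bar N+1]$, because $\Phi$ is smooth in $(N,A)$ on this compact range. Hence there is $\delta>0$ with $\Phi>0$ whenever $A>1-\delta$ and $N\in[2,\bar N+1]$.

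The second step shows that small $c/W$ forces $A$ close to one, uniformly in $N$. On the compact set $[2,\bar N+1]\times[0,a_\delta]$, where $a_\delta^k=1-\delta$, the function $g_k^R(N,a)$ is continuous and strictly positive. Its value at $a=0$ is $1/N$, and it is positive on $(0,1)$ because $V>0$ by Lemma~\ref{prop:surplus}. Let $\bar c(\bar N,k)$ be its minimum on this set. For $c/W<\bar c$, uniqueness in Proposition~\ref{prop:finite_k_recall_eq} then forces $a^*>a_\delta$ for every $N$ in the range. Consequently $da^*/dN>0$ on all of $[2,\bar N+1]$, and the integer chain in (i) follows. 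The threshold $\bar c$ depends only on $(\bar N,k)$, so it is independent of $F$.

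For (ii), fix $c/W<1/2$, so that $[2,W/c)$ is a nondegenerate interval. I will first show that $a^*\to 0$ as $N\uparrow W/c$. For $a\ge\delta$,
\[
g_k^R(N,a)\le \frac{1-a}{N(1-a^k)}=\frac{1}{N(1+a+\cdots+a^{k-1})}\le \frac{1}{N(1+\delta)}.
\]
Since $c/W\to 1/N$ along this limit, eventually $c/W>1/(N(1+\delta))$, and this rules out $a^*\ge\delta$. Next I bound the numerator for small $A$. Using $N\ge 2$,
\[
A^{N-1}\Bigl(1-\tfrac{N-1}{N}A\Bigr)\ln(1/A)\le A\ln(1/A)\to 0,
\]
while
\[
\tfrac{1}{N^2}(1-A^N)\ge \tfrac{c^2}{W^2}(1-A^2),
\]
which stays bounded away from zero. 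So $\Phi<0$ on some interval $(N_0,W/c)$, $a^*$ is strictly decreasing there, and the integers in that interval form the required terminal range.

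I expect the main obstacle to be the cancellation in (i). Because the leading terms cancel, a first-order argument gives nothing, and the second-order coefficient must be computed carefully and shown to be controlled uniformly in $N$. A further point in (i) is that $A=(a^*)^k$, not $a^*$, is the variable that must approach one, which is why I state the compactness bound directly in terms of $a_\delta^k=1-\delta$.
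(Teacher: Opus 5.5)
Your proof is correct, and it takes a different route from the paper on both parts.

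\textbf{Part (i).} The paper works with the equilibrium equation itself. It expands $g_k^R(N,a)=\tfrac{k(N-1)}{2}(1-a)^2+O((1-a)^3)$ and inverts this to $a^*=1-\sqrt{2c/(kW(N-1))}+O(c/W)$. The leading $\sqrt{c}$ term increases in $N$ and dominates the $O(c)$ error, and the paper then takes a minimum of the pairwise cost bounds over $N\le\bar N$. You instead work with the derivative formula of Proposition~\ref{prop:corrected}. Your second-order expansion $\Phi=\tfrac{\varepsilon^2}{2}+O(\varepsilon^3)$ is right, and its leading coefficient is free of $N$. Combined with the compactness bound that forces $A$ near one, it gives $da^*/dN>0$ on the whole continuous interval $[2,\bar N+1]$ at once. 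Your route gives a slightly stronger conclusion (monotonicity in continuous $N$, not only between integers) and makes the text's reading literal: at low cost the rent term dominates because $A$ is near one. The paper's route instead gives the explicit rate $1-a^*\sim\sqrt{2c/(kW(N-1))}$.

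\textbf{Part (ii).} The paper invokes Theorem~\ref{thm:singlepeak} and only needs $a^*\to 0$ to conclude $N^\dagger<W/c$. You bypass the single-crossing argument and sign the numerator directly near $W/c$. Your explicit bound $g_k^R(N,a)\le 1/(N(1+\delta))$ for $a\ge\delta$ supplies the step the paper states only loosely (``$c/W\uparrow 1/N$, so $a^*\to0$''). What the single-peak route buys is that the decreasing region is the entire post-peak interval $(N^\dagger,W/c)$, with no later rise. Your $N_0$ is unquantified, so your integer terminal range may be shorter.

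\textbf{One justification to fix.} Positivity of $g_k^R(N,\cdot)$ on $(0,1)$ does not follow from $V>0$, which is only the statement $R(A,N)>0$. From the identity $g_k^R(N,a)\,\frac{1-a^k}{1-a}=\frac1N-R(A,N)$, positivity is equivalent to $R(A,N)<1/N$, that is, $NA^{N-1}-(N-1)A^N<1$. This holds because the left side has derivative $N(N-1)A^{N-2}(1-A)>0$ and equals $1$ at $A=1$. It is also worth saying explicitly that $\bar c\le g_k^R(\bar N+1,0)=1/(\bar N+1)$. This guarantees that every $N\in[2,\bar N+1]$ is admissible, so the $C^1$ equilibrium path you differentiate along exists on the whole interval.
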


Here $\bar N$ is an arbitrary upper bound on the length of the increasing chain: given any $\bar N$, the cost can be chosen small enough that selectivity holds at every step up to $\bar N + 1$. Part (ii) says that the rise always ends: at any fixed cost, entry eventually lowers the threshold, which falls toward zero as aggregate spending $Nc$ approaches the prize.

\section{Finite-Draw Contest without Recall}\label{sec:no_recall}

Under the no-recall convention a player who continues discards her current draw, so with $k$ draws the strategy is no longer a single number: a player chooses a threshold for every round, and the equilibrium is a $(k-1)$-dimensional vector. This section analyzes the model at every depth. At any $k$ we establish existence of a symmetric equilibrium and show that within any equilibrium the thresholds decline across rounds. At $k = 2$ and $k = 3$ we obtain a complete theory: closed form, uniqueness at every admissible cost, and selectivity of every threshold at low cost. Two limit theorems then settle the large-field behavior at every depth: with free search, sufficiently large entry strictly raises every threshold, while with costly search entry eventually destroys selectivity: every threshold tends to zero as $Nc$ approaches $W$.

\subsection{Setup, Existence, and Declining Thresholds}\label{sec:nr_setup}

We work in quantile space, $u = F(x)$ which makes all the results distribution-free in quantiles. A symmetric profile is a vector of round thresholds $a_1 \geq a_2 \geq \cdots \geq a_{k-1}$: round $r$ accepts quantiles $u \geq a_r$, and round $k$ must accept. Against opponents who use the profile, the probability that an opponent reaches round $r$ is $P_r = a_1 a_2 \cdots a_{r-1}$ (with $P_1 = 1$), and her submitted quantile has CDF
\begin{equation}\label{eq:nrH}
H(u) \;=\; \sum_{r=1}^{k-1} P_r\,\max\{u - a_r,\, 0\} \;+\; P_k\,u ,
\end{equation}
continuous, piecewise linear, and strictly increasing, with $H(0) = 0$ and $H(1) = 1$. Write $\Phi = H^{N-1}$ for the probability that a submission beats all $N-1$ opponents. Let $w_m$ denote a player's continuation value, in units of the prize, at the start of a round with $m$ draws remaining: she pays the cost and observes a draw, and then either submits it or continues with $m-1$ draws left. With one draw remaining she must submit. The values satisfy
\begin{equation}\label{eq:nrvalues}
w_1 = -\frac{c}{W} + \int_0^1 \Phi(u)\,du,
\qquad
w_m = -\frac{c}{W} + \int_0^1 \max\{\Phi(u),\, w_{m-1}\}\,du, \quad m \geq 2 .
\end{equation}
The recursion optimizes over all stopping rules, not only threshold rules: draws are i.i.d.\ and a rejected draw is payoff-irrelevant under no recall, so the round and the current draw form a sufficient state, and the optimal accept set in each round is an upper interval because $\Phi$ is nondecreasing. The acceptance threshold with $m \geq 2$ draws remaining is $\tau_m = \inf\{u : \Phi(u) \geq w_{m-1}\}$, equal to zero when $w_{m-1} \leq 0$, and a symmetric equilibrium is a profile that reproduces itself: $a_r = \tau_{k-r+1}(a_1, \ldots, a_{k-1})$ for every $r$, each threshold a best response to the submission distribution $H$ generated by the entire profile.

The first result shows that a symmetric equilibrium exists at every depth, every field size, and every admissible cost.

\begin{proposition}\label{prop:nr_exist}
For every $k \geq 2$, $N \geq 2$, and $c/W \in [0, 1/N)$, the $k$-draw contest without recall has a symmetric equilibrium.
\end{proposition}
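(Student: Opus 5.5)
The plan is a fixed-point argument on the space of threshold profiles. Let $D=[0,1]^{k-1}$ (or its subset of nonincreasing vectors) and define the best-response map $T:D\to D$ by $T(a)_r=\tau_{k-r+1}(a)$. Each opponent uses the profile $a$, which generates $H$ via \eqref{eq:nrH} and hence $\Phi=H^{N-1}$; the values $w_m$ come from \eqref{eq:nrvalues}. A symmetric equilibrium is exactly a fixed point of $T$. I would apply Brouwer's theorem, which requires $T$ to be a continuous self-map of a compact convex set.

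\emph{Continuity of $w_m$.} In \eqref{eq:nrH}, $H$ depends continuously on $a$ in the sup norm, because each term $P_r\max\{u-a_r,0\}$ is jointly continuous and $P_r$ is a polynomial in $a$. So $\Phi$ is sup-norm continuous in $a$. The map $\phi\mapsto\int\max\{\phi,w\}$ is Lipschitz in $(\phi,w)$, so by induction each $w_m$ is continuous in $a$.

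\emph{Continuity of $\tau_m$.} This is the main obstacle: $\tau_m=\inf\{u:\Phi(u)\ge w_{m-1}\}$ is a generalized inverse, and it can jump if $\Phi$ is flat. Here $H$ is strictly increasing: its slope on any interval is at least $P_k u$-type positive, and more simply $H(u)\ge P_k u$ with $H$ piecewise linear with positive slopes whenever $P_k>0$. The worry is $P_k=0$, which happens if some $a_r=0$. Even then $H$ has slope $\sum_{r:a_r<u}P_r\ge P_1=1$ once $u>a_1$, and slope at least $P_j>0$ below that, where $j$ is the first index with $a_j=0$. So $H$, and hence $\Phi$, is a strictly increasing continuous bijection of $[0,1]$ for every $a\in D$. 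Then $\tau_m=\Phi^{-1}(\max\{w_{m-1},0\})$, capped at $1$. A standard argument shows the inverse of a strictly increasing continuous function varies continuously under uniform perturbation of the function: if $\Phi_n\to\Phi$ uniformly and $y_n\to y$, then $\Phi_n^{-1}(y_n)\to\Phi^{-1}(y)$. This gives continuity of $T$. I would verify this compactness/uniform-convergence step carefully, since it is where degenerate profiles could break things.

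\emph{Self-map and monotone thresholds.} If I take $D=[0,1]^{k-1}$, the self-map property is immediate, and the fixed point exists by Brouwer. The ordering $a_1\ge\cdots\ge a_{k-1}$ then follows afterwards, because $w_m$ is nondecreasing in $m$: $w_m-w_{m-1}=\int(\max\{\Phi,w_{m-1}\}-\max\{\Phi,w_{m-2}\})\ge0$ by induction, with base case $w_2\ge w_1$ since $\max\{\Phi,w_1\}\ge\Phi$. Therefore $\tau$ is monotone in the draws remaining, so the fixed point automatically satisfies the ordering that the setup assumes. The restriction $c/W<1/N$ is not needed for existence itself. I would note that it only ensures interiority, since $w_1=-c/W+1/N>0$ at the symmetric profile, where $\int\Phi=\int H^{N-1}dH$-type computations give $1/N$ when $H$ is uniform. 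This can be remarked on separately.
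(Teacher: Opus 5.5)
Your proposal is correct and follows the paper's own proof. Both apply Brouwer to the best-response map $T$, using sup-norm continuity of $H$, $\Phi$ and the values $w_m$, and getting continuity of the generalized inverse $\tau_m$ from strict monotonicity of $H$.

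The only difference is cosmetic. The paper works on the ordered set $\{a_1\ge\cdots\ge a_{k-1}\}$ and invokes Proposition~\ref{prop:nr_decline} for the self-map property. You work on the full cube and recover the ordering afterward from the same monotonicity of $w_m$.

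Your explicit treatment of profiles with some $a_r=0$ and your uniform-convergence argument for the inverse fill in steps the paper asserts without proof.
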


The second result is the deadline effect: within an equilibrium, players grow less selective as the forced final draw approaches, because the option value of the remaining draws shrinks.

\begin{proposition}\label{prop:nr_decline}
Fix any opponent profile and define the values by~\eqref{eq:nrvalues}. Then $w_m \geq w_{m-1}$ for every $m \geq 2$, so any best response, and in particular any symmetric equilibrium, satisfies $a_1 \geq a_2 \geq \cdots \geq a_{k-1}$. If $w_1 > 0$, all inequalities are strict.
\end{proposition}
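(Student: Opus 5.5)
The plan is to prove $w_m \ge w_{m-1}$ by induction on $m$, using only that $\Phi = H^{N-1}$ is a fixed nondecreasing function determined by the opponent profile. Because the opponent profile is fixed, $\Phi$ does not change across rounds, and the recursion \eqref{eq:nrvalues} defines the map $T(w) = -c/W + \int_0^1 \max\{\Phi(u), w\}\,du$. With $w_0 := -\infty$ (no draws left), $w_1 = T(w_0)$ and $w_m = T(w_{m-1})$ for $m\ge 2$. The map $T$ is nondecreasing in $w$, since the integrand $\max\{\Phi(u),w\}$ is pointwise nondecreasing in $w$.

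The induction runs as follows.
\begin{itemize}
\item \textbf{Base case.} $w_2 - w_1 = \int_0^1 \bigl(\max\{\Phi, w_1\} - \Phi\bigr)\,du \ge 0$.
\item \textbf{Inductive step.} If $w_{m-1} \ge w_{m-2}$, monotonicity of $T$ gives $w_m = T(w_{m-1}) \ge T(w_{m-2}) = w_{m-1}$.
\end{itemize}
Since $\Phi$ is nondecreasing, $\tau_m = \inf\{u:\Phi(u)\ge w_{m-1}\}$ is nondecreasing in $w_{m-1}$. Hence $\tau_k \ge \tau_{k-1} \ge \cdots \ge \tau_2$. With $a_r = \tau_{k-r+1}$, this is exactly $a_1 \ge \cdots \ge a_{k-1}$ for any best response, and in particular for any symmetric equilibrium, where the profile is a best response to itself.

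For strictness, assume $w_1 > 0$. The key facts are that $\Phi$ is continuous and strictly increasing on $[0,1]$ with $\Phi(0)=0$, because $H$ is continuous and strictly increasing with $H(0)=0$.
\begin{itemize}
\item \textbf{Base case.} Since $w_1>0$, the set $\{u : \Phi(u) < w_1\}$ is a nondegenerate interval $[0,\tau_2)$. On it the integrand $\max\{\Phi,w_1\}-\Phi$ is strictly positive, so $w_2 > w_1$.
\item \textbf{Inductive step.} Suppose $w_{m-1} > w_{m-2} \ge w_1 > 0$. The difference $\max\{\Phi,w_{m-1}\}-\max\{\Phi,w_{m-2}\}$ is strictly positive on $\{u:\Phi(u)<w_{m-1}\}$, which has positive measure because $\Phi(0)=0<w_{m-1}$. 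So $w_m > w_{m-1}$.
\end{itemize}
Strict values then give strict thresholds. All $w_{m-1}$ lie in $(0,1)$: they are positive by the above, and $w_{m-1} \le 1 - c/W < 1$ since $\Phi\le1$. So $\tau_m = \Phi^{-1}(w_{m-1})$ is well defined, and strict monotonicity of $\Phi^{-1}$ turns $w_{m-1} > w_{m-2}$ into $\tau_m > \tau_{m-1}$.

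The main obstacle is this last translation from values to thresholds. A flat piece of $\Phi$, or a value $w\le 0$, would map distinct values to the same threshold, namely zero. The plan rules this out using the strict increase of $H$ stated after \eqref{eq:nrH}, together with the observation that $w_1>0$ propagates upward by the weak monotonicity. If $H$ could be flat, for instance when some $P_r=0$, I would fall back to checking that $P_k u$ or the earlier slopes keep $H$ strictly increasing. The paper asserts that $H$ is strictly increasing, so this should suffice.
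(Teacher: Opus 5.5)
Your proof is correct and follows the paper's argument essentially step for step. The steps match: induction on $w_m$ via monotonicity of $w \mapsto \int_0^1 \max\{\Phi, w\}\,du$, thresholds nondecreasing in their targets, and strictness from the positive-measure set where $\Phi$ lies below the target. Your pointwise-positive difference on $\{\Phi < w_{m-1}\}$ is a slightly cleaner version of the paper's bound $\max\{x,b\}-\max\{x,a\} \geq (b-a)\mathbf{1}\{x \le a\}$.

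One small slip: the bound $w_{m-1} \le 1 - c/W < 1$ fails at $c = 0$, which this section admits. You only need $w_{m-1} \in (0,1]$, since $\Phi$ is a strictly increasing bijection of $[0,1]$. In fact $w_m < 1$ always holds, because $\Phi(u) < 1$ for $u < 1$.
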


The analysis that follows works with the round conditions as equalities. Call an equilibrium \emph{interior} if $w_1 > 0$. By Proposition~\ref{prop:nr_decline} the values then all lie strictly between zero and one and $\Phi(a_r) = w_{k-r}$ for every $r$. If instead $w_1 \leq 0$, every threshold is zero; Section~\ref{sec:nr_limits} shows this cannot happen under our assumption $Nc < W$, so every symmetric equilibrium is interior.

\subsection{Two Draws}\label{sec:nr_k2}

\paragraph{Equilibrium.} With two draws the profile is a single threshold $a = a_1$, and the machinery of Section~\ref{sec:nr_setup} specializes in closed form. From~\eqref{eq:nrH} we obtain
\[
H(u) \;=\;
\begin{cases}
au & u \in [0, a],\\[2pt]
(1+a)u - a & u \in [a, 1].
\end{cases}
\]
Below the threshold, a submission lands at quantile $u$ only when round 1 is rejected and the forced second draw falls below $u$. Above it, accepted first draws contribute as well. Since $H(a) = a^2$, the round condition $\Phi(a) = w_1$ reads $a^{2(N-1)} = -c/W + \int_0^1 H(u)^{N-1}\,du$, and evaluating the integral over the two linear pieces of $H$,
\begin{equation}\label{eq:finite_threshold}
\frac{c}{W} \;=\; g(N,a), \qquad g(N,a) \;\equiv\; \frac{1+a^{2N-1}}{N(1+a)} - a^{2N-2}.
\end{equation}

For two draws equilibrium exists by Proposition~\ref{prop:nr_exist} and is unique as shown by the following proposition:

\begin{proposition}\label{prop:no_recall_eq}
For every $c/W \in [0, 1/N)$, the symmetric equilibrium of the two-draw contest without recall is unique: each player uses the round-1 threshold $\lambda^* = F^{-1}(a^*)$, where $a^*(N, c/W) \in (0, 1)$ is the unique solution of $c/W = g(N, a)$.
\end{proposition}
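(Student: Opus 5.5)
The plan is to reduce uniqueness to strict monotonicity of $g(N,\cdot)$ on $[0,1]$, and then combine it with the existence result of Proposition~\ref{prop:nr_exist}.

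\emph{Step 1: every symmetric equilibrium solves $c/W=g(N,a)$.} By the discussion following Proposition~\ref{prop:nr_decline}, every symmetric equilibrium under $Nc<W$ is interior, meaning $w_1>0$. That argument is deferred to Section~\ref{sec:nr_limits}; if it is not yet available, I would verify it directly at $k=2$. If $w_1\le 0$, then $a=0$ and $H(u)=u$. This gives $w_1=-c/W+1/N>0$, a contradiction. In an interior equilibrium, $\tau_2$ is characterized by the equality $\Phi(a)=w_1$, because $\Phi$ is continuous and strictly increasing with $\Phi(0)=0<w_1<1=\Phi(1)$. Evaluating $\int_0^1 H^{N-1}$ over the two linear pieces of $H$, as in the derivation of~\eqref{eq:finite_threshold}, shows that this equality is exactly $c/W=g(N,a)$. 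Conversely, any $a\in(0,1)$ solving this equation reproduces itself as a best response, so it is an equilibrium. Hence symmetric equilibria correspond one-to-one to roots of $g(N,a)=c/W$ in $(0,1)$.

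\emph{Step 2: boundary values.} We have $g(N,0)=1/N$ and $g(N,1)=2/(2N)-1=1/N-1<0$. Since $g$ is continuous, every $c/W\in[0,1/N)$ is attained at some $a\in(0,1)$. The case $c=0$ is included, and there too $a^*<1$.

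\emph{Step 3: $g(N,\cdot)$ is strictly decreasing on $(0,1)$.} This is the main computational step. Differentiate and multiply by $N(1+a)^2>0$. The sign of $\partial g/\partial a$ is then the sign of
\[
(2N-1)a^{2N-2}+(2N-2)a^{2N-1}-1-N(2N-2)a^{2N-3}(1+a)^2 .
\]
I would bound the negative term from below using $(1+a)^2\ge 2a+a^2$. This splits it into $2N(2N-2)a^{2N-2}$ and $N(2N-2)a^{2N-1}$. For $N\ge2$ these dominate $(2N-1)a^{2N-2}$ and $(2N-2)a^{2N-1}$ respectively. The leftover $-1$ then makes the expression strictly negative.

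Strict monotonicity gives at most one root, and Step 2 gives at least one. Together with Step 1, this yields a unique symmetric equilibrium $a^*(N,c/W)\in(0,1)$, with threshold $\lambda^*=F^{-1}(a^*)$.

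I expect the only delicate point to be Step 1: confirming that the equilibrium condition is an equality rather than an inequality, and in particular ruling out the corner $a=0$. Step 3 is routine once the grouping of terms above is chosen.
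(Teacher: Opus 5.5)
Your proposal is correct and follows the paper's route. The paper likewise combines the boundary values $g(N,0)=1/N$ and $g(N,1)<0$, the strict monotonicity of $g(N,\cdot)$ (Lemma~\ref{lem:gmono}), and the identification of $c/W=g(N,a)$ with the round condition $\Phi(a)=w_1$; Lemma~\ref{lem:gmono} uses the same multiply-through-by-$N(1+a)^2$ computation, bounding $(1+a)^2\ge 4a$ where you use $2a+a^2$. Your explicit check that $w_1\le 0$ would force $a=0$, hence $H(u)=u$ and $w_1=1/N-c/W>0$, is a self-contained version of the interiority step that the paper imports from its general discussion.
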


\paragraph{Selectivity versus discouragement.} The two opposing effects of Section~\ref{sec:finite_recall} also shape the threshold in the no-recall case. The substantive difference lies in the opportunity cost of continuing. Under no recall, rejecting round~1 means \emph{discarding} $q_1$ and committing to whatever $q_2$ delivers, so continuing has a built-in opportunity cost even when $c = 0$. Under recall, by contrast, continuing preserves $q_1$, so at $c = 0$ there is no cost to redrawing and the threshold degenerates ($a^* = 1$). Selectivity therefore bites at every $c \geq 0$ without recall, and the comparative statics below can be proved directly at $c = 0$ and extended by continuity. The next theorem is the counterpart of Theorem~\ref{thm:singlepeak} without recall.

\begin{theorem}\label{thm:nr_singlepeak}
Fix $c/W \in (0, 1/2)$, and treat $N$ as a continuous variable on the admissible range $[2, W/c)$. There exists $N^\dagger \in [2, W/c)$ such that $a^*(N, c/W)$ is strictly increasing in $N$ on $[2, N^\dagger)$ and strictly decreasing on $(N^\dagger, W/c)$; either piece may be empty. In particular, the integer sequence $a^*(2), a^*(3), \ldots$ rises and then falls, with no second rise.
\end{theorem}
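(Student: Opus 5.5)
The plan mirrors the proof of Theorem~\ref{thm:singlepeak}. We work with the closed form~\eqref{eq:finite_threshold}, treat $N$ as continuous, and differentiate $c/W = g(N,a^*)$ implicitly, giving
\[
\frac{da^*}{dN} = -\frac{g_N(N,a^*)}{g_a(N,a^*)}.
\]
The first step is to show $g_a<0$ on $(0,1)$ for every admissible $N$. This also underlies the uniqueness in Proposition~\ref{prop:no_recall_eq}, so we take it as given or reprove it by writing $g_a$ explicitly. The expected form is $-(1-a^{2N-1})/(N(1+a)^2)$ minus a positive term coming from $a^{2N-2}$, after combining. Then the sign of $da^*/dN$ equals the sign of $g_N(N,a^*)$. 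We also check that the equilibrium stays interior, $a^*\in(0,1)$, on the whole range $[2,W/c)$, using $g(N,0)=1/N$ and $g(N,1)=0$. We then write
\[
g_N = -\frac{1+a^{2N-1}}{N^2(1+a)} + \frac{2a^{2N-1}\ln a}{N(1+a)} - 2a^{2N-2}\ln a .
\]
As in~\eqref{eq:dadN}, this separates into a negative dilution term and a positive rent-type term. Setting $L=\ln(1/a)>0$, the positive part is $2La^{2N-2}\bigl(1 - a/(N(1+a))\bigr)$.

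The core step is to show that every zero of $N\mapsto g_N(N,a^*(N))$ is a strict downward crossing. At a critical point $N_0$ we have $da^*/dN=0$, so the total derivative of $g_N$ along the path reduces to the partial $g_{NN}(N_0,a^*)$. It therefore suffices to show $g_{NN}<0$ wherever $g_N=0$. We will compute $g_{NN}$ and use $g_N=0$ to eliminate the dilution term $(1+a^{2N-1})/(N^2(1+a))$. That leaves an expression in $x=a^{2N-1}$ (or $a^{2N}$), $L$, $N$ and $a$. We expect the sign to reduce to a classical logarithmic inequality, for instance $\ln(1/x) > 2(1-x)/(1+x)$ as in Theorem~\ref{thm:singlepeak}, or $\ln(1/x)>1-x$. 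Once this holds, the function $N\mapsto g_N$ along the path has at most one zero, and it can only cross from positive to negative. Hence $a^*$ is strictly increasing before the crossing and strictly decreasing after it. We set $N^\dagger$ to be the crossing point, or an endpoint if none exists. The integer statement follows at once.

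The main obstacle is the algebra in this sign check. Unlike the recall case, $g$ is not a clean function of the single variable $A=a^k$. The factors $(1+a)$ and $a/(N(1+a))$ mix $a$ with $a^{2N}$, so the reduction to a one-variable inequality may fail. If it does, the first thing to try is a change of variables $x=a^{2N}$ with $t=a$ held fixed. We would then show that the residual is monotone in $t$ and check the bound at the extreme values $t\to1$ and $t\to0$. As a fallback, we would bound the residual using only $N\ge2$, $a\in(0,1)$, and the constraint $g_N=0$, which pins down $L$ as a function of $(N,a)$.
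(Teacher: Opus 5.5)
Your skeleton matches the paper's proof. Since $g_a<0$, the sign of $da^*/dN$ is the sign of $g_N$; the paper's $\phi$ equals $N^2(1+a)g_N$, which simplifies to $2N\ln(1/a)\,a^{2N-2}(N+(N-1)a)-1-a^{2N-1}$. Single-peakedness then follows once every zero of $g_N$ along the path is a strict downward crossing.

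\textbf{The gap.} The inequality you aim for is false: $g_{NN}<0$ at every zero of $g_N$, derived from $N\ge2$, $a\in(0,1)$ and $g_N=0$ alone (your fallback uses nothing else). So no logarithmic inequality can deliver it. One checks the identity
$\partial_N\bigl(N^2(1+a)g_N\bigr)=4N\ln(1/a)\,a^{2N-2}\bigl[(1+a)-\ln(1/a)(N+(N-1)a)\bigr]$.
Hence at a zero of $g_N$ the sign of $g_{NN}$ is the sign of $(1+a)-\ln(1/a)(N+(N-1)a)$. At $N=2$, $g_N(2,\cdot)$ changes sign near $a\approx0.813$, and there this bracket is about $1.81-0.58>0$. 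That zero is an \emph{upward} crossing. This is the point where the no-recall case differs from recall, as the paper notes: some critical points do not cross downward.

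\textbf{The missing idea.} Such zeros lie off the equilibrium path. At $(2,0.813)$ one has $g<0$, which would require $c<0$. You must use $g(N,a^*)=c/W>0$. Equivalently, use the payoff identity $V=W/N-(1+a^*)c=(W/N)(a^*)^{2N-2}(N+(N-1)a^*)$, which says the rent share $R=a^{2N-2}(N+(N-1)a)$ is below one on the path.

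\textbf{What remains after that.} Write $t=N\ln(1/a)$. At a zero of $g_N$, $R<1$ is equivalent to $2t>N(1+a)/(N+(N-1)a)$, which gives $t>1/2$. A downward crossing, however, needs $t>N(1+a)/(N+(N-1)a)$, and this ratio lies in $(1,2N/(2N-1)]$. So zeros with $t\in(1/2,\,2N/(2N-1)]$ still have to be excluded separately. For this, $2(N-1)t\ge1$ together with the zero condition gives $e^{2t(N-1)/N}\ge2Nt$. But $e^{2t(N-1)/N}-2Nt$ is convex in $t$ and negative at both ends of that window, a contradiction. Your plan contains neither the use of $c>0$ nor this window argument, and without them the crossing argument does not close.

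A minor slip: $g(N,1)=-(N-1)/N$, not $0$. This is harmless for interiority.
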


The proof follows the crossing argument of Theorem~\ref{thm:singlepeak}, built on the two-draw counterpart of~\eqref{eq:surplus}:
\[
V \;=\; \frac{W}{N} - (1+a^*)\,c \;=\; \frac{W}{N}\,(a^*)^{2N-2}\bigl(N+(N-1)a^*\bigr).
\]
The new step is where this identity is used. Under recall, every critical point of the equilibrium equation crosses downward. Without recall some critical points do not, but at each of them the right side of the identity is at least $W/N$, so the left side would require zero or negative search spending leading to a contradiction with equilibrium. The proof shows that the remaining critical points all cross downward.

The next proposition gives the two regimes, selectivity at low cost and discouragement at high cost, as under recall.

\begin{proposition}\label{prop:finite_c}
\begin{enumerate}
\item[(i)] For each integer $\bar N \geq 2$, there exists $\bar c(\bar N) > 0$, independent of $F$, such that for all $c/W \in [0, \bar c(\bar N))$,
\[
a^*(2, c/W) \;<\; a^*(3, c/W) \;<\; \cdots \;<\; a^*(\bar N + 1, c/W).
\]
\item[(ii)] For every $c/W \in (0, 1/2)$, the sequence $\{a^*(N,c/W)\}$ is strictly decreasing on some terminal range of admissible $N$.
\end{enumerate}
\end{proposition}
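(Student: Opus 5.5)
Part (i) will be proved at $c=0$ and then carried to small positive cost by continuity. Part (ii) will follow from Theorem~\ref{thm:nr_singlepeak} together with the boundary behavior as $Nc\uparrow W$.

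For part (i), set $c=0$ in~\eqref{eq:finite_threshold}. The equilibrium $a_0(N)=a^*(N,0)$ then solves
\[
\phi_N(a)\equiv 1+a^{2N-1}-N(1+a)a^{2N-2}=0 ,
\]
and it is unique by Proposition~\ref{prop:no_recall_eq}. Write $\phi_N(a)=1-a^{2N-2}\bigl(N+(N-1)a\bigr)$. The map $a\mapsto a^{2N-2}(N+(N-1)a)$ is strictly increasing from $0$ to $2N-1$, so $a_0(N)$ is the unique root of
\[
a^{2N-2}\bigl(N+(N-1)a\bigr)=1 .
\]
This matches the surplus identity displayed after Theorem~\ref{thm:nr_singlepeak} with zero spending. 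To show $a_0(N+1)>a_0(N)$ for every integer $N\ge2$, I will evaluate the $(N+1)$ expression at $a=a_0(N)$ and show it is strictly below $1$. That expression equals $a^{2N}(N+1+Na)$. Dividing by the defining identity, the claim becomes
\[
a^2\,(N+1+Na) < N+(N-1)a \quad\text{at } a=a_0(N).
\]
Since $a<1$, the left side is below $a^2(2N+1)$. So it suffices to check $a^2(2N+1)\le N+(N-1)a$, or, more carefully, to compare directly. The difference $N+(N-1)a-a^2(N+1)-Na^3$ equals $(1-a)\bigl(N+(2N-1)a+Na^2\bigr)-\dots$. I will factor it out by $(1-a)$, which I expect to be positive on $(0,1)$. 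Sketch: at $a=1$ both sides equal $2N$, and the derivative comparison gives positivity of the difference on $(0,1)$.

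So $a_0$ is strictly increasing on integers. This gives $a_0(2)<\cdots<a_0(\bar N+1)$, a finite set of strict inequalities. Next I will show that $a^*(N,c/W)$ is continuous in $c/W$ at $0$ for each fixed $N$. The function $g(N,\cdot)$ is continuous, and the root is unique for every $c/W\in[0,1/N)$, so a compactness argument on $[0,1]$ shows that any limit point of roots is a root at $c=0$. The finitely many strict inequalities then persist for $c/W<\bar c(\bar N)$, where $\bar c$ is the minimum over the relevant $N$ of the continuity margins. Since $g$ does not involve $F$, $\bar c$ is independent of $F$.

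For part (ii), fix $c/W\in(0,1/2)$. As $N\uparrow W/c$, the constraint $c/W=g(N,a^*)<1/N$ forces $a^*\to0$. The reason is that $g(N,a)\le 1/N-\text{(something bounded below for $a$ bounded away from 0)}$; more precisely, $g(N,a)=\frac{1}{N}-\frac{a}{N(1+a)}+O(a^{2N-2})$, so $g\to 1/N$ requires $a\to0$. Meanwhile $a^*>0$ throughout the range. Hence $a^*$ cannot be nondecreasing on any terminal interval. By Theorem~\ref{thm:nr_singlepeak}, $a^*$ is strictly decreasing on $(N^\dagger,W/c)$ with $N^\dagger<W/c$, and the integers in that interval form the required terminal range.

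The main obstacle is the polynomial inequality at $c=0$ in part (i). If the direct factorization is messy, I will instead treat $N$ as continuous and differentiate implicitly, $\log$ of $a^{2N-2}(N+(N-1)a)=1$. This gives
\[
\frac{da}{dN}\propto -\Bigl[2\ln a+\frac{1+a}{N+(N-1)a}\Bigr],
\]
which is positive if $2\ln(1/a)>(1+a)/(N+(N-1)a)$. That bound follows from $a^{2N-2}\le 1/N$, which gives $2\ln(1/a)\ge \ln N/(N-1)$, combined with an elementary comparison.
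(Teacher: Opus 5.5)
Your part (ii) follows the paper. The terms you place in $O(a^{2N-2})$ equal $a^{2N-2}\bigl[a/(N(1+a))-1\bigr]\le 0$, so state them as the exact one-sided bound $g(N,a)\le 1/N-a/(N(1+a))$. That bound is the paper's ceiling $a^*\le (W-Nc)/(Nc)$, and Theorem~\ref{thm:nr_singlepeak} finishes the argument. Your compactness argument for continuity at $c=0$ is a valid substitute for the implicit function theorem, and your $c=0$ equation $a^{2N-2}(N+(N-1)a)=1$ is correct.

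Your primary route for part (i) fails as sketched. At $a=1$ the two sides of $a^2(N+1+Na)<N+(N-1)a$ are $2N+1$ and $2N-1$, not both $2N$. In fact
\[
N+(N-1)a-(N+1)a^2-Na^3=(1-a)\bigl(N+(2N-1)a+Na^2\bigr)-2a^2,
\]
which equals $-2$ at $a=1$. So $(1-a)$ is not a factor, and both this inequality and your sufficient version with $a^2(2N+1)$ fail near $a=1$. The inequality does hold at $a=a_0(N)$, but only because it is equivalent to $a_0(N+1)>a_0(N)$: the reduction has just restated the goal. Proving it this way would need an explicit upper bound on $a_0(N)$ placing it below the crossing point, and your sketch does not supply one.

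Your fallback does close the gap, and it should be the proof. Differentiating $(2N-2)\ln a+\ln\bigl(N+(N-1)a\bigr)=0$ gives $da_0/dN>0$ if and only if $2\ln(1/a)>(1+a)/\bigl(N+(N-1)a\bigr)$. The right side is increasing in $a$, hence below $2/(2N-1)$. The bound $a^{2N-2}<1/N$ gives $2\ln(1/a)>\ln N/(N-1)$. So the ``elementary comparison'' is $(2N-1)\ln N>2(N-1)$ for real $N\ge 2$. At $N=2$ this reads $3\ln 2\approx 2.079>2$, and the difference has derivative $2\ln N-1/N>0$. The margin at $N=2$ is thin, so write this check out explicitly.

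Compared with the paper: the paper never uses the closed form. It applies Lyapunov's inequality to $Z=h_{a_N}(U)$. The zero-cost condition says $\mathbb{E}[Z^{N-1}]=a_N^{2(N-1)}$, so $\mathbb{E}[Z^{N}]>a_N^{2N}$. Hence the $(N+1)$-player gap is positive at $a_N$, and uniqueness of the root puts $a_{N+1}$ above $a_N$. That argument is one line, needs no numerical margin, and is the one that carries over to the three-draw contest in the Online Appendix.

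Your route is computational and specific to two draws. What it buys is an explicit derivative: it is exactly the $c=0$ case of the sign function $\phi$ in the proof of Theorem~\ref{thm:nr_singlepeak}, where the rent share equals one. It therefore ties the selectivity result directly to the single-peakedness analysis.
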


Proof of part (i) uses a moment comparison: at the zero-cost equilibrium, stopping and redrawing produce the same $(N-1)$-th moment of the submission, and an extra competitor raises the exponent to $N$, which favors the more spread-out redraw by Lyapunov's inequality, so the threshold must rise to restore indifference. Part (ii) combines the theorem with the ceiling $a^*(N, c/W) \leq (W - Nc)/(Nc)$ shown in the appendix. 

\subsection{Three Draws}\label{sec:nr_k3}

With three draws the profile is a pair of thresholds $a_1 \geq a_2$, and the equilibrium is a two-dimensional fixed point. Compare the value of entering a round with two remaining draws to the value of entering with one: any draw above $a_2$ is accepted in both situations, so the extra draw pays only on draws below $a_2$. Below $a_2$ the opponents' submission CDF $H$ in~\eqref{eq:nrH} is a single linear segment, so the comparison evaluates in closed form, and through the round conditions it becomes a link between $a_1$ and $a_2$ involving neither the cost nor the distribution. One equation in one unknown remains, and its slope can be signed at every admissible cost. What is special about three draws is that every step of this analysis evaluates $H$ on a known segment: with a single ordered pair of thresholds, no comparison crosses a kink of $H$ from an undetermined side. Beyond three draws the arguments must compare profiles whose thresholds interleave with the kinks of $H$ in an order that is not pinned down in advance, the closed forms differ case by case, and the exact fixed-$N$ analysis remains open. The obstacle is technical rather than economic: no new force enters at four draws. The following theorem describes the equilibrium with three draws.

\begin{theorem}\label{thm:nr_k3}
For every $N \geq 2$ and $c/W \in [0, 1/N)$, the three-draw contest without recall has a unique symmetric equilibrium, and the acceptance quantiles depend only on $N$ and $c/W$. At $c = 0$, both thresholds strictly increase when a player joins; for each $N$ there is $\bar c_N > 0$, independent of $F$, such that the same holds for all $c/W \in [0, \bar c_N)$.
\end{theorem}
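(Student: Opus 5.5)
Existence comes from Proposition~\ref{prop:nr_exist}, and interiority from the limit analysis of Section~\ref{sec:nr_limits}. So every symmetric equilibrium satisfies the round conditions with equality and $0<w_1<w_2<1$ by Proposition~\ref{prop:nr_decline}. The distribution-free claim is immediate once the equilibrium is characterized in quantile space, since~\eqref{eq:nrH} and~\eqref{eq:nrvalues} involve only $(a_1,a_2,N,c/W)$. The real work is uniqueness and the comparative statics.

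\textbf{Step 1: a cost-free link between the thresholds.} With $k=3$ we have $H(u)=a_1a_2u$ on $[0,a_2]$. The round conditions are $\Phi(a_1)=w_2$ and $\Phi(a_2)=w_1$. Subtracting the two value recursions gives
\[
w_2-w_1=\int_0^1\bigl(\max\{\Phi,w_1\}-\Phi\bigr)\,du=\int_0^{a_2}\bigl(w_1-\Phi(u)\bigr)\,du .
\]
On $[0,a_2]$ we have $\Phi(u)=(a_1a_2u)^{N-1}$ and $w_1=\Phi(a_2)=(a_1a_2^2)^{N-1}$. Hence
\[
\int_0^{a_2}\bigl(w_1-\Phi\bigr)\,du=\frac{N-1}{N}\,a_2(a_1a_2^2)^{N-1}.
\]
This gives
\[
H(a_1)^{N-1}=(a_1a_2^2)^{N-1}\Bigl(1+\tfrac{N-1}{N}a_2\Bigr).
\]
On $[a_2,a_1]$ we have $H(a_1)=a_1a_2a_1+a_1(a_1-a_2)$, which equals $a_1(a_1a_2+a_1-a_2)$. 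The link therefore reads
\[
\bigl(a_1a_2+a_1-a_2\bigr)^{N-1}=a_2^{2N-2}\Bigl(1+\tfrac{N-1}{N}a_2\Bigr),
\]
an equation involving neither $c$ nor $F$.

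I will show this link defines $a_1=\alpha(a_2)$, a continuous increasing function. The left side is increasing in $a_1$, so for each $a_2$ there is at most one $a_1\ge a_2$. Implicit differentiation then gives monotonicity in $a_2$, which needs checking. Denote the curve by $\mathcal{C}$.

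\textbf{Step 2: one equation in one unknown.} Along $\mathcal{C}$, the remaining condition is $\Phi(a_2)=w_1$, written as $c/W=\int_0^1H^{N-1}\,du-(a_1a_2^2)^{N-1}$. The integral splits over the three linear pieces of $H$, namely $[0,a_2]$, $[a_2,a_1]$ and $[a_1,1]$, and each evaluates in closed form. The result is a function $G_N(a_2)=g_3(N,\alpha(a_2),a_2)$.

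Uniqueness at every admissible cost follows if $G_N$ is strictly monotone (decreasing) on the admissible range of $a_2$, with boundary values $1/N$ at $a_2=0$ and $G_N\le 0$ at the top. This slope-signing is the step I expect to be the main obstacle. My first attempt will mimic the two-draw proof. I will write the three-draw analogue of the rent identity~\eqref{eq:surplus}, namely $V=W/N-c(1+a_1+a_1a_2)$, express it through the thresholds, and use it to exclude the case where the derivative has the wrong sign: at such points the implied search spending would be nonpositive. Because every piece of $H$ is evaluated on a known segment, the derivative is an explicit polynomial expression in $(a_1,a_2)$. If the rent argument does not close, the fallback is a direct sign check using the link equation to eliminate $(a_1a_2+a_1-a_2)^{N-1}$.

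\textbf{Step 3: comparative statics.} At $c=0$, the equilibrium point is where $G_N=0$ on $\mathcal{C}_N$. I will show $G_{N+1}>0$ at the $N$-equilibrium point, after moving along $\mathcal{C}_{N+1}$ appropriately. Since $G_{N+1}$ is decreasing, this gives $a_2$ rising, and the monotone link then gives $a_1$ rising. The argument for $G_{N+1}>0$ is the Lyapunov moment comparison from Proposition~\ref{prop:finite_c}(i): at indifference, stopping at $a_2$ and redrawing give equal $(N-1)$th moments, and raising the exponent to $N$ favors the more spread-out redraw. The shift of the curve $\mathcal{C}$ itself in $N$ must also be handled, which may need a separate sign computation. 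Finally, the solutions depend continuously on $c/W$ because of the strict monotonicity in Step 2. The strict inequalities at $c=0$ therefore persist on some interval $[0,\bar c_N)$, and $\bar c_N$ is independent of $F$ because everything is in quantile space.
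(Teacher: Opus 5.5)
\textbf{Step 3 contains a false inference.} Your Step~1 and overall architecture match the paper: the cost-free link, reduction to one equation in $a_2$, a Lyapunov comparison for round~2, and continuity in $c/W$. The problem is the inference in Step~3 that ``the monotone link then gives $a_1$ rising.'' Solving the link gives the closed form
$a_1 = B_N(a_2) = \bigl(a_2 + a_2^2\varphi_N(a_2)\bigr)/(1+a_2)$, with $\varphi_N(a_2) = \bigl(1+\tfrac{N-1}{N}a_2\bigr)^{1/(N-1)}$.
Since $\varphi_N$ is strictly decreasing in $N$, the curve $\mathcal{C}$ moves \emph{down} with entry: $B_{N+1}(b) < B_N(b)$ for every $b>0$.

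Write $(a_N,b_N)$ for the $N$-player equilibrium. Then $b_{N+1} > b_N$ only yields $a_{N+1} = B_{N+1}(b_{N+1}) > B_{N+1}(b_N)$, and $B_{N+1}(b_N) < a_N$. So the monotone link does not deliver round-1 selectivity. The same downward shift helps round~2: lowering $a$ from $B_N(b_N)$ to $B_{N+1}(b_N)$ raises the $(N+1)$-player round-2 gap, though that step still needs a computation you have only flagged. For round~1 the shift works against you, and no sign computation on the shift alone can fix this. You must show that the rise in $a_2$ outweighs the shift.

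The paper does this by evaluating your $G_{N+1}$ not at $b_N$ but at $\tilde b$, the point with $B_{N+1}(\tilde b) = a_N$, and proving it is positive there. Three facts give this:
\begin{itemize}
\item On the new branch the round-2 gap coincides with the round-1 gap $J(a,b) = \int_0^1 \max\{H,\beta\}^N\,du - \alpha^N$.
\item A second Lyapunov comparison, applied to the old round-1 indifference with the nondegenerate variable $\max\{Z,\beta_N\}$, gives $J(a_N,b_N) > 0$.
\item $J(a_N,\cdot)$ is increasing on $[b_N,\tilde b]$, which uses the bound $a_N - b_N \le b_N^2/2$ from the closed form.
\end{itemize}
This argument is the missing idea in your plan.

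\textbf{Step 2 is also left open.} The mechanism you propose first is to exclude wrong-sign points because they would imply nonpositive search spending. That is how Theorem~\ref{thm:nr_singlepeak} rules out upward crossings in $N$; it gives no evident handle on the slope in $a_2$.

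The paper writes your scalar function as $G_N(b) = D(B_N(b),b)$, where $D$ is the round-2 gap at profile $(a,b)$, and uses $G_N' = D_a B_N' + D_b$. Here $D_b<0$ and $B_N'>0$ are easy. But $D_a$ has a kernel of mixed sign: $H_a = H/a > 0$ below $a_1$ and $H_a = (1+b)(u-1) \le 0$ above. So $D_a$ can be signed only on the branch. There, dropping the upper part and integrating over the two lower segments reduces $D_a<0$ to $b^2\varphi_N(b) < N$.

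Your fallback points in exactly this direction: using the link to eliminate $(a_1a_2+a_1-a_2)^{N-1}$ amounts to the identity $a_1a_2+a_1-a_2 = a_2^2\varphi_N(a_2)$. But until that sign check is carried out, uniqueness is unproved, and so is the monotone-root step your Step~3 relies on.
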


The proof is in the Online Appendix. Uniqueness follows from the reduction proved there (Proposition~OA.1): a single equation determines the round-2 threshold, and its slope is strictly negative at every admissible cost. Selectivity of the round-2 threshold follows from the same moment comparison as at two draws. Selectivity of the round-1 threshold is the delicate step: the equilibrium relation between the two thresholds shifts with entry toward a lower round-1 threshold, so the proof shows that the rise of the round-2 threshold more than offsets the shift.

\subsection{Large Fields}\label{sec:nr_limits}

We next characterize the equilibrium behavior for large fields sizes. With free search, entry makes every threshold rise, and all thresholds approach the top of the distribution at rate $\ln N/N$. With costly search, entry eventually destroys selectivity: every threshold is squeezed to zero as the aggregate search burden $Nc$ approaches the prize.

\begin{theorem}\label{thm:nr_largeN}
Fix $k \geq 2$ and $c = 0$.
\begin{enumerate}
\item[(i)] In any symmetric equilibrium, every acceptance quantile satisfies
\[
\bigl(k(N-1)+1\bigr)^{-\frac{1}{2(N-1)}} \;\leq\; a_r \;\leq\; 1 - \frac{1 - N^{-1/(N-1)}}{k} ,
\]
all thresholds converge to the top of the distribution at rate $\ln N / N$.
\item[(ii)] If $N \geq k^3$ and $M \geq k$, then every threshold of every symmetric equilibrium of the $MN$-player contest strictly exceeds every threshold of every symmetric equilibrium of the $N$-player contest. In particular $a_r$ rises for every round $r$: with free search, sufficiently large entry strictly raises every threshold, at every depth.
\end{enumerate}
\end{theorem}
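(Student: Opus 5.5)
The plan is to get both bounds in part~(i) from the opponents' submission CDF $H$ in~\eqref{eq:nrH}, using the equilibrium round conditions $\Phi(a_r)=w_{k-r}$ (every equilibrium is interior, as noted after Proposition~\ref{prop:nr_decline}). Part~(ii) then follows by comparing the lower bound for $MN$ players with the upper bound for $N$ players.

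\emph{Structural facts about $H$.} $H$ is continuous and piecewise linear, and its slope increases by $P_r$ at each kink $a_r$. So $H$ is convex, with $H(1)=1$ and left slope at $1$ equal to $\sum_{r=1}^{k}P_r\le k$. Convexity then gives $H(u)\ge \max\{0,\,1-k(1-u)\}$.

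\emph{Lower bound.} By Proposition~\ref{prop:nr_decline} the smallest threshold is $a_{k-1}$, so it suffices to bound $a_{k-1}$ from below. On $[0,a_{k-1}]$ no kink term is active, so $H(a_{k-1})=P_k a_{k-1}\le a_{k-1}^2$, because $P_k=a_1\cdots a_{k-1}\le a_{k-1}$. The round condition with $c=0$ gives
\[
a_{k-1}^{2(N-1)}\;\ge\;\Phi(a_{k-1})\;=\;w_1\;=\;\int_0^1 H(u)^{N-1}\,du .
\]
It remains to bound $\int_0^1 H^{N-1}$ from below. The crude tangent bound $H(u)\ge 1-k(1-u)$ yields $1/(kN)$. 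To reach the stated constant $1/(k(N-1)+1)$, I would sharpen this using the exact piecewise-linear form of $H$: $H$ has slope at least $P_k$ everywhere, and the slopes of its $k$ pieces add up to at most $k$. The aim is an inequality of the form $\int H^{N-1}\ge 1/(k(N-1)+1)$. Failing that, any bound of order $1/(kN)$ still gives the $\ln N/N$ rate, and I would then adjust the constant in~(ii).

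\emph{Upper bound.} At $c=0$ ties have probability zero, so in a symmetric equilibrium each player's ex ante value is her win probability: $w_k=1/N$. By Proposition~\ref{prop:nr_decline}, every $r$ satisfies $\Phi(a_r)=w_{k-r}\le w_k=1/N$, so $H(a_r)\le N^{-1/(N-1)}$. Combining this with $H(a_r)\ge 1-k(1-a_r)$ gives exactly $a_r\le 1-\bigl(1-N^{-1/(N-1)}\bigr)/k$.

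\emph{Rate.} Expanding gives lower bound $=1-\tfrac{\ln(kN)}{2(N-1)}(1+o(1))$ and upper bound $=1-\tfrac{\ln N}{kN}(1+o(1))$. Both gaps to $1$ are therefore of order $\ln N/N$.

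\emph{Part (ii).} It suffices to show that the part-(i) lower bound at $MN$ players exceeds the part-(i) upper bound at $N$ players, that is,
\[
1-(kMN)^{-\frac{1}{2(MN-1)}}\;<\;\frac{1-N^{-1/(N-1)}}{k}.
\]
This is cleaner with nonasymptotic estimates. On the left, use $1-e^{-y}\le y$ together with $k(MN-1)+1\le kMN$, which bounds the left side by $\ln(kMN)/(2(MN-1))$. On the right, use $1-e^{-y}\ge y/(1+y)$ with $y=\ln N/(N-1)$. The comparison then reduces to
\[
\frac{\ln(kMN)}{M}\;\lesssim\;\frac{2\ln N}{k}.
\]
Because $\ln(kM)/M$ is decreasing in $M$, the worst case is $M=k$. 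There $N\ge k^3$ gives $\ln(k^2N)\le \tfrac53\ln N$, so the left side is at most $\tfrac{5}{3k}\ln N<\tfrac{2}{k}\ln N$. This leaves slack to absorb the correction factors such as $(MN)/(MN-1)$ and $1+y$.

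\emph{Main obstacle.} I expect the hardest step to be the sharp lower bound on $w_1=\int_0^1 H^{N-1}$, which is what produces the constant $k(N-1)+1$. The second difficulty is keeping the inequalities in~(ii) exact rather than asymptotic at the boundary cases $M=k$, $N=k^3$ with small $k$. There I would check $k=2$ directly and use monotonicity in $N$ and $M$ for the rest.
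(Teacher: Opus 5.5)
\textbf{Gap: the constant in the lower bound of (i).} Your architecture matches the paper's:
\begin{itemize}
\item the lower bound comes from $H(a_{k-1})=P_k a_{k-1}\le a_{k-1}^2$ and the last round condition;
\item the upper bound comes from $\Phi(a_r)\le w_k=1/N$ and the slope bound $\sigma_0\le k$ (your convexity tangent does for every $r$ at once what the paper does via the linear top segment and $a_r\le a_1$);
\item part (ii) compares the two.
\end{itemize}
The problem is that you leave the stated constant in (i) unproved. Your tangent bound $H(u)\ge\max\{0,1-k(1-u)\}$ gives only $\int_0^1 H^{N-1}\,du\ge 1/(kN)$. That yields $a_r\ge (kN)^{-1/(2(N-1))}$, which is strictly weaker than the claimed $(k(N-1)+1)^{-1/(2(N-1))}$.

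The fix is not a finer analysis of the slopes of the linear pieces. It is a one-line probabilistic bound: $H(u)\ge u^k$. If all $k$ draws land at or below quantile $u$, the submission is at or below $u$ whichever draw is kept. Then $\int_0^1 H^{N-1}\,du\ge\int_0^1 u^{k(N-1)}\,du=1/(k(N-1)+1)$, which is exactly the stated constant. This bound also dominates your tangent bound pointwise, by Bernoulli's inequality $u^k\ge 1-k(1-u)$.

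\textbf{Part (ii): sound reduction, but one estimate is too lossy.} The reduction survives your weaker constant, since the paper itself relaxes $k(MN-1)+1\le kMN$. However, your claim that $1-e^{-y}\ge y/(1+y)$ leaves enough slack is false at the boundary case $k=M=2$, $N=8$.
\begin{itemize}
\item Your left-side bound is $\ln 32/30\approx 0.1155$.
\item Your right-side bound is $y/(2(1+y))\approx 0.1145$, with $y=\ln 8/7$.
\end{itemize}
So your chain breaks there, and you would need a direct check. The true inequality does hold at that point: roughly $0.109<0.128$.

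The paper avoids case checks by combining three estimates:
\begin{itemize}
\item $1-e^{-x}\ge x-x^2/2$ in place of your $y/(1+y)$ estimate;
\item $MN-1\ge M(N-1)$;
\item the fact that $t\mapsto \ln(ktN)/t$ is decreasing, so the worst case is $M=k$.
\end{itemize}
The condition then reduces to $3\ln N<N-1$, which holds uniformly for all $N\ge k^3\ge 8$.
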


The theorem is subtle. Even at zero cost, search without recall carries an implicit price: rejecting a draw means giving it up for good, and a player who holds out may end with something worse. Entry, however, erodes the value of settling faster than the value of continuing: in a large field a mediocre draw wins with negligible probability, so giving it up costs little, while continuing preserves a chance at the top. Part (i) makes the escalation quantitative: every threshold sits within distance of order $\ln N/N$ of the top of the distribution, so players in every round hold out for near-top draws, and the acceptance probability of any given draw vanishes as the field grows.

\begin{theorem}\label{thm:nr_collapse}
Fix $k \geq 2$ and $c > 0$.
\begin{enumerate}
\item[(i)] Every symmetric equilibrium is interior: every threshold is strictly positive.
\item[(ii)] In any symmetric equilibrium the top quantile satisfies
\[
\frac{a_1}{(1+a_1)^2} \;<\; 1 - \frac{Nc}{W},
\]
so every threshold tends to zero as $Nc \uparrow W$: with costly search, sufficiently large entry destroys selectivity, at every depth.
\end{enumerate}
\end{theorem}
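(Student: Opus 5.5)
The plan is to prove (i) by contradiction using Proposition~\ref{prop:nr_decline}, and then to obtain (ii) from a symmetric accounting identity for the equilibrium payoff. Throughout, $w_m$ are the continuation values in~\eqref{eq:nrvalues}, computed against the equilibrium profile itself.

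\emph{Part (i).} Suppose, for contradiction, that $w_1 \le 0$. Then $\tau_2 = 0$, so $a_{k-1}=0$: a player who reaches round $k-1$ accepts whatever she draws, and round $k$ is never reached. The contest is then effectively a $(k-1)$-draw contest without recall, played against opponents whose profile is $(a_1,\dots,a_{k-2})$. I would induct on the depth to show that every threshold collapses to zero. The inductive step should show that the shortened game's last-round value, computed against the same profile, is again nonpositive, which forces $a_{k-2}=0$, and so on down to $a_1=0$. Once every threshold is zero, each opponent accepts her first draw, so $H(u)=u$ and $\Phi(u)=u^{N-1}$. Then
\[
w_1=-c/W+\int_0^1 u^{N-1}\,du = 1/N - c/W > 0
\]
because $Nc<W$, which contradicts $w_1 \le 0$. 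I expect this induction to be the main obstacle: the step from $a_{k-1}=0$ to the earlier thresholds is not automatic, since $w_{k-1}$ could a priori be positive. If the induction does not close, my fallback is to argue directly on the first round. The first round accepts exactly when $\Phi(u)\ge w_{k-1}$, and I would compare $w_{k-1}$ with the value $1/N-c/W$ of accepting at once against the truncated profile.

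\emph{Part (ii).} The key identity is the payoff accounting. Every player uses the same profile and ties occur with probability zero, so each player's win probability is exactly $1/N$. A player reaches round $r$ with probability $P_r=a_1\cdots a_{r-1}$, so her expected number of draws is $1+a_1+a_1a_2+\cdots+P_k$. Her equilibrium value, in prize units, is therefore
\[
w_k \;=\; \frac1N-\frac cW\bigl(1+a_1+a_1a_2+\cdots+P_k\bigr).
\]
This mirrors the identity behind~\eqref{eq:surplus}. By part (i) together with Proposition~\ref{prop:nr_decline}, $w_k\ge w_1>0$. Dropping all terms of the expected-draw count beyond the first two gives $(1+a_1)\,c/W < 1/N$. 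Equivalently, $Nc/W<1/(1+a_1)$, that is, $1-Nc/W > a_1/(1+a_1)$. Since $1+a_1\ge 1$, this is at least $a_1/(1+a_1)^2$, which yields the stated strict bound with room to spare.

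\emph{Limit.} The map $x\mapsto x/(1+x)^2$ is strictly increasing on $[0,1]$ and vanishes only at $0$. Hence as $Nc\uparrow W$ the right side of the bound tends to zero and forces $a_1\to 0$. Proposition~\ref{prop:nr_decline} gives $a_r\le a_1$ for every $r$, so every threshold tends to zero.
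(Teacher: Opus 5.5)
Your argument for (ii) is correct given (i), but (i) is not proved as written, and the worry you raise there is unfounded. Under the hypothesis $w_1 \le 0$, the value $w_{k-1}$ cannot be positive. Since $\Phi \ge 0 \ge w_1$, we have $\max\{\Phi(u), w_1\} = \Phi(u)$ for every $u$, so the recursion~\eqref{eq:nrvalues} gives $w_2 = -c/W + \int_0^1 \Phi(u)\,du = w_1$. Inductively, $w_m = w_1 \le 0$ for every $m$. Every target is then nonpositive, every $\tau_m$ is zero, and the whole profile is zero. Your computation $w_1 = 1/N - c/W > 0$ at $H(u) = u$ then gives the contradiction. This observation is the missing piece; with it your argument is the paper's proof of (i). Without it your induction is left open, and the fallback you sketch (comparing $w_{k-1}$ with $1/N - c/W$ against a truncated profile) is not yet an argument. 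To conclude that every threshold is strictly positive, also note that $w_1 > 0$ and Proposition~\ref{prop:nr_decline} give $w_{m-1} \ge w_1 > 0 = \Phi(0)$, so each $\tau_m > 0$ by continuity of $\Phi$.

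For (ii) you take a different and better route than the paper. The paper bounds $w_1$ from above by splitting $\int_0^1 H^{N-1}\,du$ at $a_1$. On $[0, a_1]$ it uses the round-1 indifference $H(a_1)^{N-1} = w_{k-1} \le 1/N - c/W$, and on $[a_1, 1]$ the slope bound $\sigma_0 \ge 1 + a_1$; positivity of $w_1$ then yields $(1 - Nc/W)(1+a_1)^2 > a_1$. You instead apply positivity to the ex ante value, through the accounting identity $w_k = 1/N - (c/W)\sum_r P_r$, and keep $P_1 + P_2 = 1 + a_1$. The identity is recorded in Lemma~\ref{lem:nrbasic}(i), where the paper uses only $\sum_r P_r \ge 1$. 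This gives $Nc(1+a_1) < W$, i.e.\ $1 - Nc/W > a_1/(1+a_1)$. The argument is shorter, never uses the shape of $H$, and gives a strictly sharper bound than the one stated. At $k = 2$ it is exactly the ceiling $a^* \le (W - Nc)/(Nc)$ of Proposition~\ref{prop:finite_c}(ii), so your argument extends that ceiling to every depth. It can even be made independent of (i). If $a_1 > 0$, then $w_{k-1} > 0$ (otherwise $\tau_k = 0$), so $w_k \ge w_{k-1} > 0$; if $a_1 = 0$, the bound reduces to $Nc < W$.
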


The ceiling in part (ii) is the finite-depth face of the rent-dissipation logic of Section~\ref{sec:equilibrium}: each entrant burns at least $c$ against an aggregate prize $W$, and as $Nc$ approaches $W$ nothing remains to finance selectivity. The two theorems together give the large-field phase picture at every depth: selectivity survives entry exactly when search is free, because free search preserves the rents that finance it, and dies under any positive cost, with every threshold vanishing at $Nc = W$, the existence boundary of the unlimited-draw contest. The thresholds are single-peaked in $N$ under recall (Theorem~\ref{thm:singlepeak}) and at two draws without recall (Theorem~\ref{thm:nr_singlepeak}); whether the same shape holds at depths beyond two without recall is open.

\section{The Planner's Problem}\label{sec:planner}

We now turn to the planner's problem. The planner employs $N$ workers who search as in Section~\ref{sec:equilibrium} and offers a prize $W$ for the highest value. She faces two design choices. We study them sequentially: first the prize $W$ for a fixed field size $N$ (Section~\ref{sec:opt_prize}), which shapes how selectively each worker searches, and then the field size $N$ with the prize set optimally at $W^*(N)$ (Section~\ref{sec:opt_N}), which trades depth against breadth.

\subsection{The planner's objective}\label{sec:planner_obj}

Write $\mathbb{E}[\max] \equiv \mathbb{E}[\max(X_1,\ldots,X_N)]$ for the expected value of the best output. The planner values $\mathbb{E}[\max] - W$ (output quality minus the prize) and may also place weight $\alpha\in[0,1]$ on workers' surplus $W - \text{total search cost}$:
\begin{equation}\label{eq:SW}
\mathrm{SW} = \mathbb{E}[\max] - W + \alpha \bigl(W - \text{total search cost}\bigr) = \mathbb{E}[\max] - (1-\alpha)\,W - \alpha\cdot\text{total search cost}.
\end{equation}
When $\alpha = 0$, only output minus prize matters. When $\alpha = 1$, the prize is a pure transfer and only real resource costs matter.\footnote{\citet{Taylor1995} studies a related design problem in the finite-draw contest with recall, corresponding to our $\alpha = 1$ case: his sponsor extracts all worker surplus through entry fees, so her objective coincides with social welfare at $\alpha = 1$.}

In the unlimited-draw contest, $\alpha$ is irrelevant. Full rent dissipation makes total search cost equal to $W$, so the workers' surplus term $W - \text{total search cost}$ in~\eqref{eq:SW} is zero and the $\alpha$-weighted component drops out entirely: with unlimited draws, $\mathrm{SW} = \mathbb{E}[\max] - W$ for every $\alpha \in [0, 1]$.

In the finite-draw contest, dissipation is partial and the planner's objective depends on $\alpha$. Our finite-draw results in Sections~\ref{sec:opt_prize} and~\ref{sec:opt_N} hold for every $\alpha \in [0, 1]$.

Throughout the design analysis we assume that the expected winning output is finite and that the planner's problem attains its maximum at an interior threshold.

\subsection{Optimal prize}\label{sec:opt_prize}

For a given field size~$N$, the planner chooses the prize~$W$ to maximize social welfare. A higher prize raises the equilibrium threshold: workers become more selective, producing better expected output, but at greater cost. Unlike the competitive equilibrium, the efficient prize $W^*(F)$ is distribution-dependent: the planner must know~$F$ to set the right prize.

What is the right prize sensitive to? We restrict attention to the unlimited-draw contest, the finite-draw contest with recall, and the two-draw contest without recall, and establish two facts. First, the mean of $F$ plays no role: a pure location shift leaves $W^*$ unchanged. Second, the upper tail does: thickening the upper tail at fixed mean raises $W^*$, thinning it lowers $W^*$.

\begin{proposition}[Location invariance]\label{prop:location_inv}
Fix $N \geq 2$, $\alpha \in [0, 1]$, $c > 0$, and a base distribution $F_0$. In each of the three contests, if $F_\theta^{-1}(q) = F_0^{-1}(q) + \theta$ for $\theta \in \mathbb{R}$, then $W^*(F_\theta) = W^*(F_0)$.
\end{proposition}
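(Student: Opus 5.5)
The argument rests on the quantile-space representation already established: in each of the three contests, the equilibrium acceptance quantile depends only on $(N, c/W)$ (and $k$). It is given by $1-a^* = Nc/W$ in the unlimited-draw case, by $c/W = g_k^R(N,a)$ under recall (Proposition~\ref{prop:finite_k_recall_eq}), and by $c/W = g(N,a)$ at two draws without recall (Proposition~\ref{prop:no_recall_eq}). It follows that the joint law of every player's submitted quantile, and the law of the number of draws taken, depend only on $(N,k,c/W)$ and not on $F$. I would write every term of the objective~\eqref{eq:SW} as a functional of these quantile laws and then check how each term moves under the shift $F_\theta^{-1}=F_0^{-1}+\theta$.

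\textbf{Search cost and prize terms.} The total search cost is $Nc$ times the expected number of draws. In the unlimited case this is $W$; under recall it is $N K(a^*)$; in the two-draw case it is $Nc(1+a^*)$. Each of these depends on $(N,c,W)$ alone, so this term is identical under $F_\theta$ and $F_0$ at every $W$. The term $(1-\alpha)W$ is trivially invariant as well.

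\textbf{Output term.} Let $U_i$ denote player $i$'s submitted quantile and $M=\max_i U_i$. The winning output is $F_\theta^{-1}(M)$, because $F^{-1}$ is increasing and so the maximum commutes with it. Hence
\[
\mathbb{E}_\theta[\max] = \mathbb{E}\bigl[F_0^{-1}(M)\bigr] + \theta ,
\]
where the law of $M$ (for example $H^N$ in the finite cases, and $(u-a^*)^N/(1-a^*)^N$ on $[a^*,1]$ in the unlimited case) depends only on $(N,k,c/W)$. Combining the terms gives $\mathrm{SW}_\theta(W) = \mathrm{SW}_0(W) + \theta$ for every admissible $W$. The two objectives therefore differ by a constant, so they have the same argmax, and $W^*(F_\theta)=W^*(F_0)$.

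\textbf{Main obstacle.} The delicate points are bookkeeping rather than depth. First, $F_\theta$ must again satisfy the model's standing assumptions. A shift moves the support to $[\theta,K+\theta]$, while the model places the support at $[0,K]$. I would note that nothing in the equilibrium analysis uses the left endpoint: only continuity and positivity of the density and the quantile transform $u=F(x)$ are used. The finiteness of $\mathbb{E}[\max]$ is also preserved under a shift. Second, the admissible set of $W$ is $\{W : Nc<W\}$, which does not depend on $\theta$. If the maximizer is a set rather than a point, the statement should be read as equality of argmax sets, which the constant-shift identity delivers directly.
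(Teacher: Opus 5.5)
Your proposal is correct and follows essentially the paper's route. The paper also writes welfare in quantile space as $\int F^{-1}(q)\,\psi(q;N,a)\,dq - C(a)$ with $F$-free $\psi$ and $C$, and uses $\int\psi = 1$ to turn the shift into an additive $\theta$. It concludes that the optimal threshold, and hence $W^*$ via the strictly increasing prize-to-threshold map, is unchanged. You parametrize by $W$ directly and argue by equality of argmax sets rather than through the first-order condition, which is slightly cleaner.
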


Proposition~\ref{prop:location_inv} is the planner's analogue of the distribution-free equilibrium: workers' incentives depend on quantiles, not values, so a location shift leaves the quantile structure, and thus the marginal return to raising $W$, untouched.\footnote{In fact, the invariance Proposition~\ref{prop:location_inv} understates what holds under unlimited draws and recall. Under unlimited or finite draws with recall, any change to $F$ below the threshold is invisible, not just location shifts. Only the no-recall contest, where below-threshold values enter submissions, requires full location invariance.} 

To isolate the role of the upper tail at fixed mean, consider the \emph{mean-preserving tail-cap} family parameterized by a single cap level $M$:
\begin{equation}\label{eq:tail_family}
F_M^{-1}(q) \;=\; \min\bigl(F_0^{-1}(q),\, M\bigr) + \theta(M), \qquad M \in \bigl(\mu_0,\, F_0^{-1}(1^-)\bigr),
\end{equation}
where $\mu_0 = \int_0^1 F_0^{-1}(q)\,dq$ is the mean of $F_0$ and $\theta(M) = \mu_0 - \int_0^1 \min\bigl(F_0^{-1}(q), M\bigr)dq$ is the unique shift that preserves the mean at $\mu_0$ for every $M$. Lowering $M$ collapses upper-tail mass above $M$ to a point and lifts the rest of the quantile function uniformly by $\theta(M)$ to compensate. As $M \uparrow F_0^{-1}(1^-)$, $F_M \to F_0$; as $M \downarrow \mu_0$, $F_M$ degenerates to a point mass at $\mu_0$. Figure~\ref{fig:tail_family} illustrates.\footnote{The cap creates an atom of mass $1 - q_M$ at the value $M + \theta(M)$, which nominally falls outside the continuous-density assumption of Section~\ref{sec:model}. Uniform tie-breaking restores full validity of the quantile-space integrals without modification: for any $N$ players, the expected winning probability at the atom equals $[1 - (1-p)^N]/(Np)$ with $p = 1 - q_M$, which is exactly $(1/p)\int_{1-p}^1 u^{N-1}\,du$, the continuous-quantile value.}

\begin{figure}[h]
\centering
\begin{tikzpicture}[scale=0.85]
  \begin{scope}
    \clip (-0.1,-0.1) rectangle (7.3, 5.3);
    \draw[thick,blue,domain=0:0.985,samples=150,smooth]
      plot (\x*7, {(1-\x)^(-0.4)});
    \draw[thick,red,dashed,domain=0:0.637,samples=100,smooth]
      plot (\x*7, {(1-\x)^(-0.4) + 0.362});
    \draw[thick,red,dashed] (4.459,1.862) -- (7,1.862);
  \end{scope}
  \draw[->] (0,0) -- (7.4,0) node[right] {$q$};
  \draw[->] (0,0) -- (0,5.4) node[above] {$F^{-1}(q)$};
  \draw (7,0.08) -- (7,-0.08) node[below] {$1$};
  \draw[dotted,gray] (4.459,0) -- (4.459,1.862);
  \node[below] at (4.459,0) {$q_M$};
  \draw[dotted,gray] (0,1.5) -- (4.459,1.5);
  \node[left] at (0,1.5) {\small $M$};
  \draw[dotted,gray] (0,1.862) -- (4.459,1.862);
  \node[left] at (-0.04,1.95) {\small $M{+}\theta(M)$};
  \node[blue,right] at (7.05,4.95) {\small $F_0^{-1}$};
  \node[red,right]  at (7.05,1.862) {\small $F_M^{-1}$};
\end{tikzpicture}
\caption{\small The mean-preserving tail-cap family. The base $F_0^{-1}$ (solid) is capped at $M$ above quantile $q_M = F_0(M)$, and the entire quantile function is lifted by $\theta(M)$ to restore the original mean. The result $F_M^{-1}$ (dashed) lies above $F_0^{-1}$ in the body (lift) and flat below $F_0^{-1}$ in the upper tail (cap); the two changes exactly offset in expectation. As $M$ decreases, the cap bites more, the compensating lift grows, and the upper tail thins.}
\label{fig:tail_family}
\end{figure}

\begin{proposition}[Tail monotonicity]\label{prop:tail_mono}
Fix $N \geq 2$, $\alpha \in [0, 1]$, $c > 0$, and a base distribution $F_0$. In each of the three contests, $W^*(F_M)$ is strictly increasing in $M$ on $\bigl(F_0^{-1}(\bar q),\, F_0^{-1}(1^-)\bigr)$, where $\bar q = q^*(F_0)$, the acceptance quantile induced by the largest optimal prize at $F_0$, for the unlimited-draw and recall contests, and $\bar q = \bar a(N)$, the zero-cost equilibrium quantile, for the two-draw contest without recall.
\end{proposition}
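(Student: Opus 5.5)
The approach is to work in quantile space and reduce the planner's choice of $W$ to a choice of the equilibrium acceptance quantile $q$. In each of the three contests, the equilibrium condition (equation~\eqref{eq:equilibrium}, Proposition~\ref{prop:finite_k_recall_eq}, Proposition~\ref{prop:no_recall_eq}) makes $q$ a strictly increasing function of $W$ for fixed $N,c$, and this function does not depend on $F$. So I would re-parameterize by $q$, write $W(q)$ for the inverse, and show that the optimal $q$ strictly increases in $M$; strict monotonicity of $W(q)$ then transfers the conclusion to $W^*$. The terms $(1-\alpha)W(q)$ and $\alpha\cdot$(total search cost)$(q)$ in~\eqref{eq:SW} are distribution-free, since expected draws depend only on $q$. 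Hence $M$ enters the objective only through $\mathbb{E}[\max]$.

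Let $\Psi_q(u)=H_q(u)^N$ be the CDF of the winning submitted quantile, where $H_q$ is the single-player submission CDF: uniform on $[q,1]$ for unlimited draws, \eqref{eq:HR} under recall, and the two-piece $H$ of Section~\ref{sec:nr_k2} without recall. Then
\[
\mathbb{E}_M[\max](q)=\theta(M)+\int_0^1 \min\bigl(F_0^{-1}(u),M\bigr)\,d\Psi_q(u).
\]
The shift $\theta(M)$ is independent of $q$ and drops out of the optimization, exactly as in Proposition~\ref{prop:location_inv}. Integrating by parts (boundary terms are independent of $q$) gives the capped part as
\[
\text{const}+\int_0^{q_M}\bigl(1-\Psi_q(u)\bigr)\,dF_0^{-1}(u), \qquad q_M=F_0(M).
\]
Therefore $\partial_q \mathrm{SW}$ equals
\[
D(q,M)=\int_0^{q_M} -\partial_q\Psi_q(u)\,dF_0^{-1}(u)
\]
minus an $M$-free cost term. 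The whole proof then reduces to showing $\partial_M D(q,M)>0$ on the relevant range, i.e.\ strict increasing differences in $(q,M)$. That would give strictly increasing optimizers (Edlin--Shannon, using interiority of the optimum, which is assumed). Since $dq_M/dM>0$, the sign of $\partial_M D$ is the sign of $-\partial_q H_q(q_M)$.

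The sign step splits by contest. For unlimited draws, $H_q(u)=(u-q)/(1-q)$ is strictly decreasing in $q$ for $u\in(q,1)$. Under recall, $H$ is unchanged below $a$, and for $u\ge a$ it equals $1-(1-u)\,(1-a^k)/(1-a)$. The chord slope $(1-a^k)/(1-a)$ of the convex function $u^k$ is increasing in $a$, so $H$ is strictly decreasing in $a$ at every $u\in(a,1)$. Without recall, $\partial_a H=u>0$ for $u<a$ but $\partial_a H=u-1<0$ for $u>a$. In all three cases, $-\partial_q H_q(q_M)>0$ exactly when $q_M>q$.

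The main obstacle is ensuring that the optimizer $q$ stays strictly below $q_M$ along the whole range of $M$, and that comparisons are made only there. For the no-recall contest I would invoke the fact that the equilibrium quantile is decreasing in $c$, so every attainable $a$ lies below the zero-cost quantile $\bar a(N)$. The hypothesis $M>F_0^{-1}(\bar a(N))$ then gives $q_M>\bar a\ge a$ for all feasible $a$, and increasing differences hold globally.

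For the unlimited and recall contests, the first thing I would try is to note that for $q\ge q_M$ every accepted draw lies in the flat cap region. In that region raising $q$ leaves $\mathbb{E}[\max]$ unchanged at the top while strictly raising cost and prize, so no optimum sits at $q\ge q_M$; hence $q^*(F_M)<q_M$. I would then run the single-crossing argument on $\{q<q_M\}$. Working downward from $M$ near $F_0^{-1}(1^-)$, where $F_M\to F_0$, monotonicity keeps optima below $q^*(F_0)$. The hypothesis $M>F_0^{-1}(q^*(F_0))$ then guarantees that for every pair $M<M'$ being compared, both optimizers and the comparison region lie below $q_M$, which yields strict increase of $W^*(F_M)$.
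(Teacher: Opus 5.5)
Your proof is correct and follows the paper's overall approach. You reparameterize by the acceptance quantile, so that prize and search cost are $F$-free. You then reduce the comparative static to the cross-partial $-N H(q_M)^{N-1}\partial_q H(q_M)$; your integration by parts and the paper's $\partial \tilde F_M^{-1}/\partial M=\mathbf{1}\{q>q_M\}$ produce the same expression. You sign it contest by contest and use the feasibility bound $a<\bar a(N)$ without recall, exactly as the paper does. Your chord-slope argument for the recall case is a tidier version of the paper's endpoint-linearity argument.

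The genuine difference is how you keep the optimizer below $q_M$ in the unlimited and recall contests.

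\textbf{The paper's route.} It bounds every maximizer at $M$ by $q^*(F_0)$. It does so with a two-point comparison against the uncapped problem, integrating the globally nonnegative cross-partial over caps $m\in(M,F_0^{-1}(1^-))$. This is why its range begins at $F_0^{-1}(q^*(F_0))$.

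\textbf{Your route.} The flat-cap argument shows directly that $\mathrm{SW}$ is strictly decreasing on $[q_M,1)$, so every optimizer at $M$ lies below $q_M$ for every $M$. That alone suffices. When comparing $M<M'$, the rectangle in the two-point argument is $[q',q]\times[M,M']$, with $q$ the optimizer at the smaller cap. So $q<q_M\le q_m$ already puts the rectangle in the region where the cross-partial is strictly positive, and the first-order-condition step then gives strictness.

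Your closing sentences about ``working downward from $M$ near $F_0^{-1}(1^-)$'' are therefore unnecessary. As phrased, they also lean on the monotonicity being proved and on a convergence of maximizers you do not justify. Without them, your route is more elementary. Under the paper's interiority assumption, it in fact gives strict monotonicity on all of $(\mu_0,F_0^{-1}(1^-))$ for these two contests.

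Two small repairs are needed:
\begin{enumerate}
\item Under recall it is not true that every submission lies in the cap when $a\ge q_M$. A player who exhausts her $k$ draws submits a below-threshold maximum. Those submissions, however, have the $a$-free law $u^k$ on $[0,a]$, so $\mathbb{E}[\max]$ is still constant in $a$ on $[q_M,1)$.
\item To exclude $q=q_M$ itself, note that $D(q,M)\to 0$ as $q\uparrow q_M$, because $\partial_q\Psi_q(u)=0$ for $u<q$. Hence $\partial_q\mathrm{SW}(q_M)=-C'(q_M)<0$, and $q_M$ fails the first-order condition.
\end{enumerate}
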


Together, Propositions~\ref{prop:location_inv} and~\ref{prop:tail_mono} establish that the upper tail, not the location, drives $W^*$.\footnote{The cap family is bounded: as $M \uparrow F_0^{-1}(1^-)$, $W^*(F_M) \uparrow W^*(F_0)$, a finite ceiling. To exhibit \emph{unbounded} $W^*$, one varies the base itself. On the Pareto base $F_0^{(t)}(x) = 1 - x^{-1/t}$, as $t \to 1^-$ (the infinite-mean boundary), $W^*(F_0^{(t)}) \to \infty$ in each of the three contests.} 

\subsection{Optimal field size}\label{sec:opt_N}

We now add the field size $N$ to the planner's choices, with the prize set optimally at $W^*(N)$ for each $N$. More workers provide independent draws, but each additional worker forces a lower equilibrium threshold, trading depth (selectivity per worker) against breadth (number of workers). The balance between the two depends on whether each worker's search depth is bounded. For the unlimited-draw contest the result is distribution-free; for the finite-draw cases, we illustrate the break from that benchmark using the Pareto family as a canonical heavy-tail parameterization.

\begin{proposition}\label{prop:optimal_N}
Fix $\alpha \in [0, 1]$ and $c > 0$. Let $N^*$ denote the planner's optimal field size at $W = W^*(N)$.
\begin{enumerate}
\item[(i)] \emph{Unlimited draws.} For any continuous distribution~$F$, the planner's welfare $\mathrm{SW}^*(N)$ is strictly decreasing in~$N$. Hence $N^* = 2$, distribution-free.
\item[(ii)] \emph{Finite draws.} For each $N \geq 2$, there exists a critical tail $\bar t(N) \in (0, 1)$ such that, on a Pareto base with tail $t > \bar t(N)$, $N^* \geq N + 1$. The statement holds for the finite-draw contest with recall at every $k \geq 2$ and for the two-draw contest without recall.
\end{enumerate}
\end{proposition}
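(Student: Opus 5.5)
For part (i) I will work in quantile space and reduce the planner's problem to the choice of an acceptance probability. With unlimited draws, Proposition~\ref{prop:equilibrium} gives acceptance probability $p = Nc/W$, so choosing $W$ is equivalent to choosing $p\in(0,1)$ with $W = Nc/p$. Each submission is distributed as $F$ truncated at the quantile $1-p$. Its quantile is uniform on $[1-p,1]$, so the best of the $N$ submissions has quantile $1-pU$, where $U$ is the minimum of $N$ i.i.d.\ uniforms on $[0,1]$. By full dissipation the welfare is
\[
\mathrm{SW}(N,p) = \int_0^1 F^{-1}(1-p s)\,N(1-s)^{N-1}\,ds \;-\; \frac{Nc}{p}.
\]
The key step is to rewrite this in terms of the variable $v = ps$, the distance of the winner's quantile from the top. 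The winner's quantile is then $1-v$, and $v$ has density $\frac{N}{p}(1-v/p)^{N-1}$ on $[0,p]$. I compare the pair $(N+1,p')$ with $(N,p)$ and aim to show that $N$ players can replicate any outcome of $N+1$ players more cheaply.

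The natural choice is to match the cost term by setting $p = p'N/(N+1)$, so that $Nc/p = (N+1)c/p'$. I then need the distribution of $v$ under $(N,p)$ to be first-order stochastically smaller than under $(N+1,p')$. Under $(N,p)$ we have $P(v>x) = (1-x/p)^N$; under $(N+1,p')$ we have $P(v>x) = (1-x/p')^{N+1}$. Writing $y = x/p'$, the comparison becomes $(1-(N+1)y/N)^N \le (1-y)^{N+1}$ for $y\in[0,N/(N+1)]$. This holds, with strict inequality for $y>0$, because $\log(1-(N+1)y/N)^{N}$ and $\log(1-y)^{N+1}$ agree at $y=0$ and the first has a steeper negative slope (a convexity check). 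Since $F^{-1}$ is strictly increasing, the expected maximum is strictly higher under $N$ at the same cost. Therefore $\mathrm{SW}^*(N) > \mathrm{SW}^*(N+1)$ for every $F$, and $N^*=2$. One feasibility check remains: $p<1$ must hold, which is automatic since $p<p'$. If the stochastic-dominance inequality turns out to fail near the endpoint, I will instead pick $p$ so that the expected-maximum distributions match exactly, and compare costs.

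For part (ii) I will use the Pareto quantile function $F^{-1}(q) = (1-q)^{-t}$. Expected maxima then become explicit Beta-type integrals, and as $t\uparrow 1$ they blow up at the rate of the maximum's tail behaviour. With finite depth, the winner's quantile under $N$ players and any prize cannot get closer to the top than the best of $kN$ raw draws, because the equilibria are distribution-free in quantiles (Propositions~\ref{prop:finite_k_recall_eq} and~\ref{prop:no_recall_eq}). The expected output is therefore bounded by roughly $E[(U_{(1)})^{-t}]$, where $U_{(1)}$ is the minimum of $kN$ uniforms. Near $t=1$ this behaves like $kN\,\Gamma(1-t)\cdot\Gamma(kN)/\Gamma(kN+1-t)$, which is proportional to $1/(1-t)$ with a leading constant increasing in the number of draws.

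The plan is then as follows. Fix $N$ and a prize for $N+1$ players equal to, say, $W^*(N)\cdot(N+1)/N$, or any prize giving interior thresholds. Show that the $N+1$ contest's expected output has a $1/(1-t)$ coefficient strictly larger than the best attainable coefficient with $N$ players. Costs stay bounded as $t\to1$, since spending is at most $kNc$ and the prize can be held bounded. The $1/(1-t)$ term then dominates, giving $\mathrm{SW}^*(N+1) > \mathrm{SW}^*(N)$ for $t$ above some $\bar t(N)<1$. Since this holds for every $\alpha$, the welfare weights matter only through bounded terms.

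The main obstacle is the coefficient comparison. I expect the singular coefficient to equal $\int$ of the density of the winner's quantile near $1$, that is, the limit of $f_{\text{win}}(1-v)$ as $v\to0$. Under the $N$-player equilibrium with quantile $a$ this is $N\cdot H'(1)$. Under recall, $H'(1) = (1-a^k)/(1-a)$, which is strictly less than $k$ for $a<1$. I need to show that some $(N+1)$-player prize achieves $(N+1)(1-a'^k)/(1-a')$ exceeding $\sup_W N(1-a^k)/(1-a)$, subject to the cost penalty. Taking $W$ large makes $a\to1$ and both coefficients approach $Nk$ and $(N+1)k$ respectively, so $(N+1)$ wins. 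The no-recall two-draw case is handled in the same way with $H'(1) = 1+a$.
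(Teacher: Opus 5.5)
Part (i) is the paper's argument. You match total cost with $p=p'N/(N+1)$ and prove first-order dominance of the winner's quantile. Your log-derivative comparison of $(1-(N+1)y/N)^N$ with $(1-y)^{N+1}$ is exactly the paper's AM--GM inequality after the substitution $u=1-y$.

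Part (ii) under recall also works, using a slightly different upper bound from the paper's. Coupling (the winner is no better than the best of $km$ raw draws) gives $\limsup_{t\to1^-}(1-t)\,\mathrm{SW}^*(m,t)\le km$ uniformly in the prize. A fixed, $t$-independent prize for $N+1$ players with $a'$ close to $1$ gives coefficient $(N+1)(1-a'^k)/(1-a')>kN\ge km$ with bounded costs. Two repairs are needed. First, to get $N^*\ge N+1$ you must beat every $m\le N$, not only $m=N$; your $km$ bound already does this. Second, the prize cannot be $W^*(N)(N+1)/N$, because $W^*(N)$ moves with $t$ and is unbounded as $t\to1^-$.

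The genuine gap is the two-draw contest without recall, which you treat as ``handled in the same way with $H'(1)=1+a$.'' Raising $W$ does not send $a$ to $1$ there. As $c/W\to0$ the threshold converges to the zero-cost quantile $\bar a(m)=a^*(m,0)<1$: a rejected draw is discarded, so even free search stays interior. The attainable coefficients are therefore $m(1+a)$ with $a<\bar a(m)$, and their supremum is $m\bigl(1+\bar a(m)\bigr)$, not $2m$. You need two things:
\begin{itemize}
\item[(a)] $(N+1)\bigl(1+\bar a(N+1)\bigr)>m\bigl(1+\bar a(m)\bigr)$ for all $m\le N$;
\item[(b)] the bound $\limsup_{t\to1^-}(1-t)\,\mathrm{SW}^*(m,t)\le m\bigl(1+\bar a(m)\bigr)$, uniform over $a\in[0,\bar a(m)]$. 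The paper bounds the upper branch by $(1+\bar a(m))^t\,mB(m,1-t)$ and the lower branch by $(1-\bar a(m))^{-t}$.
\end{itemize}

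Your coupling ceiling $2m$ cannot stand in for this. Since $1-\bar a(N)$ is of order $\ln N/N$, for large fields the $(N+1)$-player coefficient falls below $2N$. For instance, $g(41,0.947)<0$, so $\bar a(41)<0.947$ and $41\,(1+\bar a(41))<80=2\cdot40$.

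The missing ingredient is that $\bar a(\cdot)$ is strictly increasing in the field size. This is the zero-cost selectivity result, Proposition~\ref{prop:finite_c}(i), proved by Lyapunov's inequality. With it, $m\bigl(1+\bar a(m)\bigr)$ is strictly increasing in $m$, and the asymptotic comparison closes as in the recall case.
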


The two parts of Proposition~\ref{prop:optimal_N} share a single economic logic. With unlimited draws, each worker can explore the tail of $F$ as deeply as she wants, so adding workers only produces redundant above-threshold draws while forcing every worker to become less selective. Depth fully dominates breadth: only the best output matters, and a small number of very selective workers outperforms a large number of less selective ones, regardless of $F$. With finitely many draws, depth is capped: each worker has at most $k$ chances to explore the tail (with $k=2$ in the no-recall case), and no amount of selectivity can overcome that cap. When the tail is heavy enough---Pareto $t$ close to the infinite-mean boundary---the return to a marginal independent shot at the tail outweighs the cost of a less selective threshold, and larger fields become optimal. The same argument works with recall and without. The critical tail thickness $\bar t(N)$ depends on the contest form ($k$, recall versus no recall) and on $N$, but the qualitative pattern does not. As the deadline relaxes ($k \to \infty$), every tail threshold $\bar t(N, k) \to 1$ and part~(i) is recovered.

\section{Competing Designers}\label{sec:designers}

The individual contest embeds naturally within a hierarchy. Suppose $M \geq 2$ designers (schools, firms, teams) each employ $N$ workers. Each designer chooses a common search threshold $b$; her workers independently draw from $F$ at cost $c$ per draw until accepting a value above $b$. Designer $A$'s team output is $M_A = \max(X_1^A, \ldots, X_N^A)$. The designer with the highest team output wins a prize $\Omega > 0$; all others get zero. Designer $A$'s payoff is
\[
\Pi_A = \Omega \cdot \mathbf{1}\{M_A > \max_{j \neq A} M_j\} - \frac{Nc}{1-F(b_A)}.
\]
Differentiating the symmetric-equilibrium winning probability with respect to $b_A$ and applying the designer's first-order condition (see the Online Appendix) yields:

\begin{proposition}\label{prop:designer}
With $M$ competing designers of $N$ workers each, and $\Omega > cM(NM-1)/(M-1)$ so that the threshold below is interior, any symmetric equilibrium threshold satisfies
\begin{equation}\label{eq:designer_eq}
F(b_D) = 1 - \frac{cM(NM-1)}{\Omega(M-1)},
\end{equation}
which is distribution-free. Hence the symmetric equilibrium threshold, if one exists, is unique. At this profile, each designer's expected payoff is $\Omega(N-1)/[M(NM-1)] \geq 0$, and total search expenditure is $N\Omega(M-1)/(NM-1)$, giving dissipation ratio $N(M-1)/(NM-1) < 1$ for $N \geq 2$.
\end{proposition}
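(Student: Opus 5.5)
The plan is to work in quantile space, where designer~$A$'s choice of $b_A$ becomes a choice of $a_A = F(b_A)$. In that space all quantities depend on $F$ only through $a_A$ and the rivals' common $a = F(b)$, so the threshold that results will be distribution-free automatically. Each of $A$'s workers submits a quantile uniform on $[a_A,1]$, so $A$'s team maximum has CDF $G_A(u)^N$ with $G_A(u) = (u-a_A)/(1-a_A)$ on $[a_A,1]$. A rival using $a$ has team-maximum CDF $G(u)^N$ with $G(u) = \max\{(u-a)/(1-a),0\}$. Expected search cost is $Nc/(1-a_A)$, the expected number of draws of $N$ geometric searches. The winning probability is
\[
\pi(a_A;a) \;=\; \int_0^1 N v^{N-1}\, G\bigl(a_A + (1-a_A)v\bigr)^{N(M-1)}\,dv ,
\]
obtained from the substitution $v = G_A(u)$. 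This form has the advantage that $a_A$ enters only through the argument of $G$.

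Next I differentiate in $a_A$ and evaluate at $a_A = a$, where $G(a+(1-a)v) = v$. The derivative of the argument is $1-v$, and $G' = 1/(1-a)$, which gives
\[
\frac{\partial \pi}{\partial a_A}\Big|_{a_A=a} \;=\; \frac{N^2(M-1)}{1-a}\int_0^1 v^{NM-2}(1-v)\,dv \;=\; \frac{N(M-1)}{M(NM-1)(1-a)},
\]
using the Beta integral $1/((NM-1)NM)$. The marginal cost is $Nc/(1-a)^2$. The first-order condition $\Omega\,\partial\pi/\partial a_A = Nc/(1-a)^2$ then gives $1-a = cM(NM-1)/(\Omega(M-1))$, which is~\eqref{eq:designer_eq}. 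Interiority, $a\in(0,1)$, is exactly the stated lower bound on $\Omega$. Since any symmetric equilibrium at an interior threshold must satisfy this first-order condition, and the condition has a single solution, the symmetric equilibrium threshold is unique if it exists.

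The step I expect to need care is justifying the first-order condition at the symmetric point. This requires that $\pi$ be differentiable at $a_A = a$, and in particular that the left derivative agrees with the right. For $a_A < a$, part of $A$'s integration range has $G = 0$, because a team maximum below $a$ never beats a rival. On that range the integrand $G^{N(M-1)}$ vanishes to order $N(M-1)\ge 2$ at the boundary. Differentiating under the integral therefore produces no boundary term, and the left derivative also converges to the same expression as $a_A\uparrow a$. I would verify this explicitly, since it is what rules out a kink. (Second-order sufficiency is not needed, because the statement only asserts "if one exists.")

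Finally I compute payoffs and spending at the candidate profile. By symmetry each designer wins with probability $1/M$, so her payoff is $\Omega/M - Nc/(1-a)$. Substituting $1-a$ gives $\Omega/M - N\Omega(M-1)/(M(NM-1)) = \Omega(N-1)/(M(NM-1))$. Total expenditure is $M\cdot Nc/(1-a) = N\Omega(M-1)/(NM-1)$, so the dissipation ratio is $N(M-1)/(NM-1)$. This ratio is below one exactly when $N>1$, because $NM - N < NM - 1$.
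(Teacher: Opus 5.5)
Your proposal is correct and follows the paper's route: differentiate the deviating designer's winning probability at the symmetric profile, evaluate a Beta integral, set $\Omega$ times that derivative equal to the marginal search cost $Nc/(1-a)^2$, and read payoffs and dissipation off the $1/M$ winning probability. The differences are cosmetic. Your substitution $v=G_A(u)$ moves the deviation into the rivals' CDF and gives $N(M-1)\int_0^1 v^{NM-2}(1-v)\,dv$, whereas the paper differentiates the deviator's own density and gets $\int_0^1 u^{NM-2}[Nu-(N-1)]\,du$; both equal $(M-1)/(M(NM-1))$. Your no-kink check at $a_A=a$ also makes explicit a differentiability step that the paper leaves implicit.
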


The distribution-free property arises for the same reason as in individual competition: the probability integral transform maps each worker's truncated draws to $\mathrm{Uniform}[0,1]$, making the strategic interaction independent of $F$. For $N = 1$ (each designer has one worker), the model reduces to the individual contest of Section~\ref{sec:equilibrium} with full dissipation, and $\Omega = W$ recovers equation~\eqref{eq:equilibrium}. As $N \to \infty$ the dissipation ratio converges to $(M-1)/M$, the standard Tullock dissipation ratio: with large teams each team's output concentrates near its upper tail and the meta-contest approaches a Tullock contest between $M$ players. For $N \geq 2$ with $\Omega = W$, $b_D < \lambda$: designers prescribe less search per worker than purely competitive workers because each designer internalizes the costs of all $N$ workers. For large $M$, the designer equilibrium matches the individual-contest equilibrium with prize $W = \Omega / M$.

\section{Conclusion}\label{sec:conclusion}

We have studied contests in which effort takes the form of sequential search. The organizing device is the move to quantile space: once draws are compared by rank, the equilibrium is distribution-free, and everything payoff-relevant depends on the observables, the number of players, the cost per draw, and the prize. One object, equilibrium rents, then organizes the comparative statics. Competition dissipates rents when search is unbounded and preserves them when depth is finite, and the response of standards to entry follows from how entry divides and erodes those rents. For design, the practical message is that the value distribution matters only through its upper tail: prizes and field sizes should be set by the chance of exceptional draws, not by average quality.

Several questions remain. Without recall, we do not know whether single-peakedness in the field size extends beyond two draws, and fixed-field comparative statics beyond three draws are open. Under recall, we know the threshold is single-peaked but not how the peak moves with the cost-prize ratio. Between the two conventions lies search with uncertain recall, connecting our two solved cases. The model ties optimal field size to tail thickness and search depth, a prediction that procurement and innovation data could confront.
\newpage
\appendix
\begin{center}
{\Large\bfseries Appendix}
\end{center}
\section{Proofs for Section~\ref{sec:equilibrium}}\label{app:multiprize}

\begin{proof}[Proof of Proposition~\ref{prop:multiprize}]
\emph{Part 1.} In a symmetric equilibrium with threshold $\lambda$, each player's final value has the truncated CDF $G(x)=(F(x)-F(\lambda))/(1-F(\lambda))$ for $x\geq \lambda$. A player who stops at $x$ has rank $j$ (exactly $j-1$ opponents beat her) with probability $\binom{N-1}{j-1}G(x)^{N-j}(1-G(x))^{j-1}$, so her expected prize from stopping at $x$ is $\Pi(x) = \sum_{j=1}^N W_j\binom{N-1}{j-1}G(x)^{N-j}(1-G(x))^{j-1}$, with $\Pi(\lambda) = W_N$ since $G(\lambda)=0$. The Bellman equation is $V = -c + F(\lambda)\cdot V + \int_\lambda^K \Pi(x)\,dF(x)$. Substituting $u=G(x)$ and using the beta integral $\int_0^1 \binom{N-1}{j-1}u^{N-j}(1-u)^{j-1}\,du = 1/N$, $\int_\lambda^K \Pi(x)\,dF(x) = (1-F(\lambda))\,\bar W$, so $V(1-F(\lambda)) = -c + (1-F(\lambda))\bar W$. Indifference at the threshold gives $V = \Pi(\lambda) = W_N$, and solving yields $1-F(\lambda)=c/(\bar W - W_N)$, interior by the assumption $c < \bar W - W_N$.

\emph{Part 2.} The equilibrium payoff is the value of stopping at the threshold: $V = W_N \geq 0$.

\emph{Part 3.} Each player's expected search cost is $c/(1-F(\lambda)) = \bar W - W_N$, so total expenditure is $N(\bar W - W_N) = \sum_j W_j - N W_N$ and the dissipation ratio is $1 - W_N/\bar W$.
\end{proof}

\section{Proofs for Section~\ref{sec:finite_recall}}\label{app:stationary}\label{app:finite_k_recall_eq}\label{app:rents}\label{app:finite_k_recall}

\begin{proof}[Proof of Lemma~\ref{lem:stationary}]
Let $\Phi$ denote the terminal-value CDF of an opponent induced by the stationary threshold strategy, and write $H(m) := \Phi(m)^{N-1}$ for a player's winning probability when she submits value $m$. Let $V_\tau(m)$ denote the player's value function with $\tau$ rounds remaining and running maximum $m$, with boundary $V_0(m) = W\,H(m)$ and recursion
\[
V_\tau(m) = \max\bigl\{W\,H(m),\; -c + \mathbb{E}[V_{\tau-1}(\max(m, X))]\bigr\}, \quad \tau \geq 1,
\]
where $X \sim F$. Define $\psi(m) := -c + \mathbb{E}\bigl[W\,H(\max(m, X))\bigr] - W\,H(m) = W\int_m^K [H(x) - H(m)]\,dF(x) - c$. Differentiating, $\psi'(m) = -W\,H'(m)\,[1 - F(m)] < 0$ for $m < K$, so $\psi$ is strictly decreasing. Let $\lambda^*$ be the unique root of $\psi$, so that $-c + \mathbb{E}\bigl[W\,H(\max(\lambda^*, X))\bigr] = W\,H(\lambda^*)$ and $\psi(m) > 0 \Leftrightarrow m < \lambda^*$.

We show by induction on $\tau$ that the optimal rule at round $\tau$ is to stop iff $m \geq \lambda^*$.

\emph{Base case} $\tau = 1$. Continuing means one more (terminal) draw, so stop and continue give the same payoff exactly when $\psi(m) = 0$, i.e., $m = \lambda^*$. Stopping is strictly better for $m > \lambda^*$ ($\psi(m) < 0$) and worse for $m < \lambda^*$ ($\psi(m) > 0$).

\emph{Inductive step.} Suppose the claim holds at $\tau - 1$. The round-$\tau$ continuation value is $-c + \mathbb{E}[V_{\tau-1}(\max(m, X))]$. Consider two cases. (i)~If $m \geq \lambda^*$, then $\max(m, X) \geq \lambda^*$ always, so the inductive hypothesis gives $V_{\tau-1}(\max(m, X)) = W\,H(\max(m, X))$, and the continuation value equals $W\,H(m) + \psi(m)$; since $m \geq \lambda^*$ implies $\psi(m) \leq 0$, continuing does not exceed stopping: stop. (ii)~If $m < \lambda^*$, using $V_{\tau-1} \geq W\,H$ pointwise, the continuation value is at least $W\,H(m) + \psi(m)$; since $m < \lambda^*$ implies $\psi(m) > 0$, continuing strictly exceeds stopping: continue.

The optimal round-$\tau$ policy is therefore the stationary threshold $\lambda^*$. Since the running maximum updates only on draws that exceed it, equivalently the player stops at round $t$ iff $x_t \geq \lambda^*$.
\end{proof}

\begin{proof}[Proof of Proposition~\ref{prop:finite_k_recall_eq}]
Our setup is the $E = 0$ case of \citet{Taylor1995}. By his Proposition~2 (p.~877), the research tournament has a unique equilibrium, and it is symmetric with every player using the same stationary threshold. In quantile space, this threshold is the acceptance quantile $a^*(N, k, c/W)$ that solves~\eqref{eq:gkR}. Existence and uniqueness of $a^*$ in $(0, 1)$ follow from the surplus form of the equilibrium equation: as shown in the proofs below, $c/W = g_k^R(N, a)$ is equivalent to $G(a, N) = 0$, where $\partial G/\partial a < 0$ on all of $(0, 1)$, $G(0^+, N) = W/N - c > 0$, and $G(1^-, N) = -ck < 0$. The equation therefore has exactly one solution for every $c/W \in (0, 1/N)$.
\end{proof}

\begin{proof}[Proof of Lemma~\ref{prop:surplus}]
Ties occur with probability zero, so by symmetry each player wins with probability $1/N$ and her equilibrium payoff is $V = W/N - K(a^*)$. A player draws until her first draw above the threshold, with at most $k$ attempts, so the number of draws is $\min(G, k)$ with $G$ geometric of parameter $1 - a^*$, and $K(a^*) = c\,\mathbb{E}[\min(G, k)] = c\,\frac{1 - (a^*)^k}{1 - a^*} = c\,\frac{1 - A}{1 - a^*}$. Substituting $c/W = g_k^R(N, a^*)$ from~\eqref{eq:gkR},
\[
K(a^*) \;=\; W (1 - A)\left[\frac{1 - A^N}{N(1 - A)} - A^{N-1}\right] \;=\; W\left[\frac{1 - A^N}{N} - (1 - A) A^{N-1}\right].
\]
Hence
\[
V \;=\; \frac{W}{N} - K(a^*) \;=\; W\left[\frac{A^N}{N} + (1 - A)A^{N-1}\right] \;=\; W A^{N-1}\Bigl(1 - \frac{N-1}{N}A\Bigr),
\]
which is~\eqref{eq:surplus}. Positivity follows from $A \in (0, 1)$ and $(N-1)A/N < 1$.
\end{proof}

\begin{proof}[Proof of Proposition~\ref{prop:corrected}]
Define, for $a \in (0, 1)$ and real $N \geq 2$,
\[
G(a, N) \;=\; \frac{W}{N} - K(a) - W R(a^k, N), \qquad R(A, N) \;=\; A^{N-1} - \frac{N-1}{N}A^{N},
\]
so that $G(a^*, N) = 0$ is the equilibrium condition~\eqref{eq:gkR} rearranged as in Lemma~\ref{prop:surplus}, and $WR(A, N) = V$ at equilibrium. The function $G$ is continuously differentiable in $(a, N)$. Its partials are as follows. First, $\frac{\partial R}{\partial N} = A^{N-1}\ln A - \frac{N-1}{N}A^{N}\ln A - \frac{A^N}{N^2} = (\ln A)\, R(A, N) - \frac{A^N}{N^2}$, so
\[
\frac{\partial G}{\partial N} \;=\; -\frac{W}{N^2} - W\frac{\partial R}{\partial N} \;=\; -\frac{W}{N^2}\bigl(1 - A^N\bigr) + V \ln(1/A),
\]
using $WR = V$ and $-\ln A = \ln(1/A)$. Second, $K'(a) = c\sum_{j=1}^{k-1} j a^{j-1} > 0$ and $\partial R/\partial A = (N-1)A^{N-2}(1 - A) > 0$, so $\frac{\partial G}{\partial a} = -K'(a) - W k a^{k-1} (N-1) A^{N-2}(1 - A) < 0$. The implicit function theorem gives $da^*/dN = -\,(\partial G/\partial N)/(\partial G/\partial a)$, which is~\eqref{eq:dadN}, with strictly positive denominator.
\end{proof}

\begin{proof}[Proof of Theorem~\ref{thm:singlepeak}]
Fix $k$ and $c/W \in (0, 1/2)$, and let $G(a, N)$ be as in the proof of Proposition~\ref{prop:corrected}. The rearrangement of~\eqref{eq:gkR} into $G = 0$ multiplies only by positive quantities, so the two equations have the same roots in $(0,1)$. For every real $N \in [2, W/c)$: $G(0^+, N) = W/N - c > 0$ since $R(0, N) = 0$ and $K(0^+) = c$; $G(1^-, N) = -ck$ since $R(1, N) = 1/N$ and $K(1^-) = ck$; and $\partial G/\partial a < 0$ on all of $(0, 1)$. Hence the equilibrium quantile $a^*(N)$ is the unique root of $G(\cdot, N)$, and by the implicit function theorem the path $N \mapsto a^*(N)$ is $C^1$ on $[2, W/c)$.

Let $\Phi(N, A) = R(A, N)\ln\tfrac{1}{A} - (1 - A^N)/N^2$, so that $\partial G/\partial N = W\Phi$, and, since $\partial G/\partial a < 0$, the sign of $da^*/dN$ equals the sign of $\psi(N) \equiv \Phi(N, A(N))$ along the path, with $A(N) = a^*(N)^k$.

\emph{Step 1: at any zero of $\Phi$, $\partial\Phi/\partial N < 0$.} Write $L = \ln(1/A)$. Using $\partial R/\partial N = (\ln A)R - A^N/N^2$ from the proof of Proposition~\ref{prop:corrected},
\[
\frac{\partial \Phi}{\partial N} \;=\; -L^2 R - \frac{2LA^N}{N^2} + \frac{2(1 - A^N)}{N^3}.
\]
At a zero of $\Phi$, $LR = (1 - A^N)/N^2$, so the first term equals $-L(1 - A^N)/N^2$ and
\[
\frac{\partial \Phi}{\partial N}\Big|_{\Phi = 0}
\;=\; \frac{1}{N^2}\Bigl[-L\bigl(1 + A^N\bigr) + \frac{2}{N}\bigl(1 - A^N\bigr)\Bigr]
\;=\; \frac{1}{N^3}\Bigl[-(1 + x)\ln\frac{1}{x} + 2(1 - x)\Bigr], \qquad x = A^N,
\]
using $NL = \ln(1/x)$. Negativity follows from the inequality $\ln(1/x) > 2(1 - x)/(1 + x)$ on $(0, 1)$: the difference $h(x) = -\ln x - 2(1 - x)/(1 + x)$ satisfies $h(1) = 0$ and $h'(x) = -(1 - x)^2/\bigl(x(1 + x)^2\bigr) < 0$.

\emph{Step 2: $\psi$ crosses zero at most once, downward.} At any zero of $\psi$, the numerator of~\eqref{eq:dadN} vanishes, so $da^*/dN = 0$ and hence $dA/dN = 0$; therefore $\psi'(N) = \partial\Phi/\partial N < 0$ by Step~1. A $C^1$ function with strictly negative derivative at every zero cannot vanish twice: after its first zero it is strictly negative, and a return to zero would require a nonnegative derivative at the returning zero. So exactly one of three configurations holds: $\psi > 0$ on the whole range (set $N^\dagger$ at the right endpoint), $\psi < 0$ on the whole range (set $N^\dagger = 2$), or $\psi > 0$ up to a unique zero $N^\dagger$ and $\psi < 0$ after.

\emph{Step 3: shape.} Since the sign of $da^*/dN$ equals the sign of $\psi$, the quantile is strictly increasing on $[2, N^\dagger)$ and strictly decreasing on $(N^\dagger, W/c)$. For the integer sequence, $a^*(n+1) - a^*(n) = \int_n^{n+1} (da^*/dN)\,dN$ is positive whenever $n + 1 \leq N^\dagger$, negative whenever $n \geq N^\dagger$, and of either sign only for the single increment with $n < N^\dagger < n + 1$. The increment signs therefore switch at most once, from positive to negative: the sequence rises to a peak and falls thereafter, with no second rise.
\end{proof}

\begin{proof}[Proof of Proposition~\ref{prop:finite_k_recall}]
\emph{(i) Selectivity for small cost.} Set $\epsilon = 1 - a$. Expanding $1 - a^m = m\epsilon - \tfrac{m(m-1)}{2}\epsilon^2 + O(\epsilon^3)$ at $m = kN$ and $m = k$ gives $(1 - a^{kN})/(N(1 - a^k)) = 1 - \tfrac{k(N-1)}{2}\epsilon + O(\epsilon^2)$, and subtracting $a^{k(N-1)} = 1 - k(N-1)\epsilon + O(\epsilon^2)$ and multiplying by $(1-a) = \epsilon$,
\[
g_k^R(N, a) \;=\; \frac{k(N-1)}{2}\,\epsilon^2 + O(\epsilon^3),
\qquad\text{so}\qquad
a^*(N, k, c/W) \;=\; 1 - \sqrt{\frac{2c}{kW(N-1)}} + O(c/W),
\]
strictly increasing in $N$ at leading order. The implicit function theorem, applied to $g_k^R(N, a) - c/W$, extends each pairwise inequality $a^*(N+1) > a^*(N)$ to an interval $(0, c_{N,k})$ with $c_{N,k} > 0$ determined by $g_k^R$ alone (hence $F$-free), and $\bar c(\bar N, k) := \min_{2 \leq N \leq \bar N} c_{N,k} > 0$ delivers the chain in the statement.

\emph{(ii) Eventual discouragement.} By Theorem~\ref{thm:singlepeak}, $a^*$ is strictly decreasing in $N$ on $(N^\dagger, W/c)$, so it suffices that the increasing piece not extend to the right endpoint. As $N \uparrow W/c$, $c/W \uparrow 1/N = g_k^R(N, 0)$, so $a^*(N, k, c/W) \to 0 < a^*(2, k, c/W)$, which is incompatible with $a^*$ increasing on the whole admissible range. Hence $N^\dagger < W/c$, and the sequence is strictly decreasing on the terminal range of admissible integers beyond $N^\dagger$.
\end{proof}

\section{Proofs for Section~\ref{sec:no_recall}}\label{app:nr_exist}\label{app:no_recall_eq}\label{app:finite_c}\label{app:nr_limits}

Throughout this appendix and the Online Appendix, payoffs are measured in units of the prize and $c$ denotes the cost-prize ratio $c/W \in [0, 1/N)$. Strategies are not restricted a priori: as noted after~\eqref{eq:nrvalues}, the Bellman recursion optimizes over all stopping rules, so a fixed point of the threshold best-response map is an equilibrium of the unrestricted game.

We prove Proposition~\ref{prop:nr_decline} first, since the existence argument uses it.

\begin{proof}[Proof of Proposition~\ref{prop:nr_decline}]
Monotonicity of values is by induction. First, $w_2 - w_1 = \int_0^1 \big[\max\{\Phi(u), w_1\} - \Phi(u)\big] du = \int_0^1 \max\{w_1 - \Phi(u),\, 0\}\,du \geq 0$. For the inductive step, $w_m \geq w_{m-1}$ implies $w_{m+1} - w_m = \int_0^1 \big[\max\{\Phi(u), w_m\} - \max\{\Phi(u), w_{m-1}\}\big]\,du \geq 0$, since $\max\{x, \cdot\}$ is nondecreasing. The threshold $\tau_m$ is nondecreasing in its target $w_{m-1}$ because $\Phi$ is nondecreasing, and the round-$r$ threshold uses target $w_{k-r}$, which falls in $r$; hence $a_1 \geq \cdots \geq a_{k-1}$.

For strictness, suppose $w_1 > 0$. Since $\Phi(0) = 0$ and $\Phi$ is continuous, the set $\{u : \Phi(u) < w_1\}$ has positive measure, so $w_2 - w_1 = \int \max\{w_1 - \Phi,\, 0\}\, du > 0$. Using $\max\{x, b\} - \max\{x, a\} \geq (b - a)\,\mathbf{1}\{x \leq a\}$ for $b \geq a$, $w_{m+1} - w_m \geq (w_m - w_{m-1})\cdot \mathrm{Leb}\{u : \Phi(u) \leq w_{m-1}\} > 0$, because $\{u : \Phi(u) \leq w_{m-1}\} \supseteq \{u : \Phi(u) < w_1\}$ has positive measure. Since $H$ is strictly increasing, so is $\Phi$, and $a_r = \Phi^{-1}(w_{k-r})$ is then strictly decreasing in $r$ because the targets $w_{k-r}$ are strictly decreasing in $r$.
\end{proof}

\begin{proof}[Proof of Proposition~\ref{prop:nr_exist}]
Let $D = \{a \in [0,1]^{k-1} : a_1 \geq a_2 \geq \cdots \geq a_{k-1}\}$, a convex compact set. Given $a \in D$, construct $H$ by~\eqref{eq:nrH}, the values by~\eqref{eq:nrvalues}, and the best-response vector $T(a) = (\tau_k(a), \ldots, \tau_2(a))$. By Proposition~\ref{prop:nr_decline}, $T(a) \in D$. The map $a \mapsto H$ is continuous in the sup norm; hence $\Phi$ and each $w_m$ are continuous in $a$. Moreover $w_m < 1$ for every $c \geq 0$: since $H$ is strictly increasing with $H(1) = 1$, $\Phi(u) < 1$ for $u < 1$, so $w_1 = \int_0^1 \Phi\,du - c < 1$; and if $w_{m-1} < 1$, then $\max\{\Phi(u), w_{m-1}\} < 1$ for every $u < 1$, so $w_m < 1$ by induction. Since $H$ is strictly increasing with $\Phi(0) = 0$ and $w_{m-1} < 1$, the generalized inverse $\tau_m = \Phi^{-1}(\max\{w_{m-1}, 0\})$ is continuous in $a$ as well. Brouwer's fixed point theorem applied to $T$ on $D$ delivers a profile with $a = T(a)$, which is a symmetric equilibrium.
\end{proof}

\medskip\noindent\emph{The two-draw contest.} Write $a = F(\lambda)$. The equilibrium quantile solves $g(N, a) = c/W$, where $g(N, a) = (1 + a^{2N-1})/(N(1+a)) - a^{2N-2}$. The function $g$ is continuous in $(a, c/W, N)$ and independent of $F$, so $a^*$ is a function of $(N, c/W)$ alone, with $g(N, 0) = 1/N$ and $g(N, 1) = -(N-1)/N < 0$.

\begin{lemma}\label{lem:gmono}
For every $N \geq 2$, $g(N, \cdot)$ is strictly decreasing on $(0, 1)$.
\end{lemma}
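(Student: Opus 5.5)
The plan is a direct derivative computation. After clearing a positive denominator, I expect every term of $\partial g/\partial a$ to have a nonpositive coefficient, with a strictly negative constant term. Throughout, $N \geq 2$ is treated as a real parameter and I write $m = 2N-2 \geq 2$, so that
\[
g(N,a) = \frac{1 + a^{m+1}}{N(1+a)} - a^{m}.
\]

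\textbf{Step 1: differentiate and clear the denominator.} Differentiating in $a$,
\[
\frac{\partial g}{\partial a} = \frac{(m+1)a^{m}(1+a) - (1 + a^{m+1})}{N(1+a)^2} - m a^{m-1}.
\]
I multiply through by the positive factor $N(1+a)^2$ and collect powers of $a$. Using $m+1 = 2N-1$, this gives
\[
N(1+a)^2\,\frac{\partial g}{\partial a}
= -1 \;-\; Nm\,a^{m-1} \;+\; \bigl[(2N-1) - 2Nm\bigr]a^{m} \;+\; \bigl[(2N-2) - Nm\bigr]a^{m+1}.
\]
The only bookkeeping is expanding $(1+a)^2 = 1 + 2a + a^2$ in the last term and noting that $(m+1)a^{m+1} - a^{m+1} = m\,a^{m+1} = (2N-2)a^{m+1}$.

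\textbf{Step 2: sign the coefficients.} I check each coefficient using $m = 2N-2$ and $N \geq 2$.
\begin{itemize}
\item The constant term is $-1 < 0$.
\item The $a^{m-1}$ coefficient is $-Nm = -2N(N-1) < 0$.
\item The $a^{m}$ coefficient is $(2N-1) - 4N(N-1)$. The polynomial $4N^2 - 6N + 1$ is increasing for $N \geq 2$ and equals $5$ at $N = 2$, so this coefficient is at most $-5 < 0$.
\item The $a^{m+1}$ coefficient is $(2N-2) - 2N(N-1) = -2(N-1)^2 < 0$.
\end{itemize}
For $a \in (0,1)$ every power of $a$ is positive. Hence the right-hand side is strictly negative (it is at most $-1$), so $\partial g/\partial a < 0$ on $(0,1)$ and $g(N,\cdot)$ is strictly decreasing there.

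\textbf{Main obstacle.} There is no conceptual difficulty here; the only risk is an algebra slip in Step 1. I would verify the expansion independently by evaluating both sides at $N = 2$ (so $m = 2$) and at a test point such as $a = 1/2$.

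Combined with $g(N,0) = 1/N$ and $g(N,1) = -(N-1)/N < 0$, the lemma yields a unique root $a^*(N, c/W) \in (0,1)$ of $g(N,a) = c/W$ for every $c/W \in [0, 1/N)$. That is the uniqueness part of Proposition~\ref{prop:no_recall_eq}, with interiority supplied by Theorem~\ref{thm:nr_collapse}(i) and existence by Proposition~\ref{prop:nr_exist}.
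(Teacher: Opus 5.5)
Your proposal is correct and follows essentially the paper's route: differentiate, clear the positive factor $N(1+a)^2$, and sign what remains. The only difference is bookkeeping. You expand fully and check that every coefficient is negative, so the expression is at most $-1$. The paper instead keeps the inequality form and compares both sides against $a^{m-1}$, using $a^m<a^{m-1}$ and $(1+a)^2\ge 4a$.
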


\begin{proof}
Write $m = 2N - 1 \geq 3$, so that $g(N, a) = (1 + a^m)/(N(1+a)) - a^{m-1}$ and $N = (m+1)/2$. Differentiating, $\frac{\partial g}{\partial a} = \frac{m a^{m-1}(1+a) - (1 + a^m)}{N(1+a)^2} - (m-1)\,a^{m-2}$, and $\partial g/\partial a < 0$ rearranges to
\[
m a^{m-1} + (m-1)\,a^m - 1 \;<\; \frac{m^2 - 1}{2}\,a^{m-2}(1+a)^2 .
\]
Since $a^m < a^{m-1}$ on $(0, 1)$, the left side is below $(2m-1)a^{m-1}$. Since $(1+a)^2 \geq 4a$, the right side is at least $2(m^2-1)a^{m-1}$. And $2(m^2 - 1) \geq 2m - 1$ for $m \geq 2$, so the inequality holds strictly on $(0, 1)$.
\end{proof}

\begin{proof}[Proof of Proposition~\ref{prop:no_recall_eq}]
By the intermediate value theorem and Lemma~\ref{lem:gmono}, $c/W = g(N, a)$ has a unique solution $a^*(N, c/W) \in (0, 1)$ for every $c/W \in [0, 1/N)$. By the derivation in the text, $c/W = g(N, a)$ is the round condition $\Phi(a) = w_1$, so the profile with round-1 threshold $\lambda^* = F^{-1}(a^*)$ is the unique symmetric equilibrium.
\end{proof}

\begin{proof}[Proof of Theorem~\ref{thm:nr_singlepeak}]
Multiplying $c/W = g(N, a)$ by $W(1+a)$ and collecting powers of $a$ shows that the equilibrium condition is equivalent to
\[
E(N, a) \;\equiv\; \frac{W}{N}\Bigl(1 - a^{2N-2}\bigl(N + (N-1)a\bigr)\Bigr) - (1+a)\,c \;=\; 0 .
\]
Both variable terms of $E$ are strictly decreasing in $a$, so $\partial E/\partial a < 0$ on $(0,1)$, the path $N \mapsto a^*(N)$ is $C^1$, and, multiplying $\partial E/\partial N$ by $N^2/W$, the sign of $da^*/dN$ equals the sign of
\[
\phi(N, a) \;\equiv\; 2N\ln\tfrac{1}{a}\; a^{2N-2}\bigl(N+(N-1)a\bigr)
- \Bigl(1 - a^{2N-2}\bigl(N+(N-1)a\bigr)\Bigr) - N\,a^{2N-2}(1+a).
\]
Since $E = 0$ is the payoff identity of the text, every path point has
$a^{2N-2}\bigl(N+(N-1)a\bigr) = NV/W \in (0, 1)$.

Let $\psi(N) = \phi(N, a^*(N))$ and consider a zero of $\psi$; there $da^*/dN = 0$, so
$\psi' = \partial\phi/\partial N$. Rearranging $\phi = 0$,
\begin{equation}\label{eq:nr2zero}
a^{2N-2}\,\Bigl[\bigl(N+(N-1)a\bigr)\Bigl(1 + 2N\ln\tfrac{1}{a}\Bigr) - N(1+a)\Bigr] \;=\; 1 .
\end{equation}
Differentiating $\phi$ at fixed $a$ and simplifying,
\[
\frac{\partial \phi}{\partial N}
\;=\; 4N\ln\tfrac{1}{a}\; a^{2N-2}\,\Bigl[(1+a) - \ln\tfrac{1}{a}\,\bigl(N+(N-1)a\bigr)\Bigr],
\]
negative if and only if $N\ln(1/a) > N(1+a)/\bigl(N+(N-1)a\bigr)$; and by~\eqref{eq:nr2zero},
$a^{2N-2}\bigl(N+(N-1)a\bigr) < 1$ if and only if
$2N\ln(1/a) > N(1+a)/\bigl(N+(N-1)a\bigr)$. The ratio $N(1+a)/\bigl(N+(N-1)a\bigr)$ is
increasing in $a$ and lies in $(1,\, 2N/(2N-1)]$.

Write $t = N\ln(1/a)$. A zero of $\psi$ has rent share below one, hence $t > 1/2$; since the
crossing is downward whenever $t$ exceeds the ratio, it remains to exclude
$t \in (1/2,\, 2N/(2N-1)]$. Suppose a zero has $t$ in that window. Using $2(N-1)t \geq 1$, $\bigl(N+(N-1)a\bigr)(1+2t) - N(1+a) = 2Nt - a\bigl[1-2(N-1)t\bigr] \geq 2Nt$, so~\eqref{eq:nr2zero} and $a^{2N-2} = e^{-2t(N-1)/N}$ give $e^{2t(N-1)/N} \geq 2Nt$. But
$e^{2t(N-1)/N} - 2Nt$ is convex in $t$ and negative at both endpoints of
$[1/2,\, 2N/(2N-1)]$: at $t = 1/2$ it equals $e^{(N-1)/N} - N$, negative since $e^{1/2} < 2$
and $e < 3$; at $t = 2N/(2N-1)$ it is at most $e^2(2N-1)/(2N+1) - 4N^2/(2N-1)$, negative
because $e^2 < 8$ and $2(2N-1)^2 \leq N^2(2N+1)$ for $N \geq 2$. So no such zero exists, every
zero of $\psi$ has $\psi' < 0$, and $\psi$ crosses zero at most once, and only downward, since a return to zero would require a nonnegative derivative at the returning zero: $a^*$ rises to
a single peak and falls. The integer statement follows by integrating $da^*/dN$ over unit
steps.
\end{proof}

\begin{proof}[Proof of Proposition~\ref{prop:finite_c}]
\emph{(i) Selectivity robust to small cost.} Consider first the case $c = 0$. Let $h_a(u)$ denote the CDF of an opponent's final score at the $u$-th quantile when all opponents use quantile threshold~$a$: $h_a(u) = au$ for $u < a$ and $h_a(u) = (1+a)u - a$ for $u \geq a$. With $U \sim \mathrm{Uniform}[0,1]$, equation~\eqref{eq:finite_threshold} at $c=0$ becomes $a^{2(N-1)} = \mathbb{E}[h_a(U)^{N-1}]$. Define $\Psi_N(a) := \mathbb{E}[h_a(U)^{N-1}] - a^{2(N-1)}$; the equilibrium quantile $a_N := a^*(N, 0)$ satisfies $\Psi_N(a_N) = 0$. Applying Lyapunov's inequality to the non-degenerate random variable $Z = h_{a_N}(U)$ with exponents $N > N - 1$: $\bigl(\mathbb{E}[Z^{N}]\bigr)^{1/N} > \bigl(\mathbb{E}[Z^{N-1}]\bigr)^{1/(N-1)} = a_N^2$. Raising to the $N$-th power: $\mathbb{E}[Z^N] > a_N^{2N}$, i.e., $\Psi_{N+1}(a_N) > 0$. Since $\Psi_{N+1}(1) = \tfrac{1}{N+1} - 1 < 0$, the intermediate value theorem yields $a_{N+1} \in (a_N, 1)$. This establishes $a^*(N+1, 0) > a^*(N, 0)$ for every $N \geq 2$.

The extension to $c > 0$ is by continuity. The map $(a, c/W) \mapsto g(N, a) - c/W$ is continuously differentiable with $\partial g/\partial a < 0$ (Lemma~\ref{lem:gmono}), so the implicit function theorem yields $a^*(N, c/W)$ continuous in $c/W$ in a neighborhood of $0$, with the neighborhood independent of $F$ because $g$ is. Hence for each $N \geq 2$ there exists $c_N > 0$ such that $a^*(N+1, c/W) > a^*(N, c/W)$ for all $c/W \in [0, c_N)$. For any fixed $\bar N \geq 2$, set $\bar c(\bar N) := \min_{2 \leq N \leq \bar N} c_N > 0$. For $c/W \in [0, \bar c(\bar N))$, the pairwise inequality $a^*(N+1) > a^*(N)$ holds simultaneously for every $N \in \{2, \ldots, \bar N\}$, giving the chain in the statement.

\emph{(ii) Eventual discouragement.} By Theorem~\ref{thm:nr_singlepeak}, $a^*$ is strictly decreasing in $N$ on $(N^\dagger, W/c)$, so it suffices that the increasing piece not extend to the right endpoint. Since $a^{2N-1}/(N(1+a)) \leq a^{2N-2}$, the equilibrium function satisfies $g(N, a) \leq 1/(N(1+a))$; hence $a^*(N, c/W) \leq (W - Nc)/(Nc) \to 0$ as $N \uparrow W/c$, which is incompatible with $a^*$ increasing on the whole admissible range, since $a^*(2, c/W) > 0$. Hence $N^\dagger < W/c$, and the sequence is strictly decreasing on the terminal range of admissible integers beyond $N^\dagger$.
\end{proof}

\medskip\noindent\emph{Large fields.} Notation as in Section~\ref{sec:nr_setup} and the preamble above: quantile thresholds $a_1 \geq \cdots \geq a_{k-1}$, gaps $g_r = 1 - a_r$, prefix products $P_r$, the distribution $H$ from~\eqref{eq:nrH}, heights $\theta_r = H(a_r)$, values $w_m$ from~\eqref{eq:nrvalues}, and round conditions $\theta_r^{\,N-1} = w_{k-r}$ at an interior equilibrium. Write $v_r = w_{k-r+1}$ for the value entering round $r$, so $v_1 \geq v_2 \geq \cdots \geq v_k$ by Proposition~\ref{prop:nr_decline} and the round conditions read $\theta_r^{\,N-1} = v_{r+1}$. Let $\sigma_0 = \sum_{l=1}^{k} P_l$ denote the slope of $H$ above $a_1$.

\begin{lemma}\label{lem:nrbasic}
In any symmetric equilibrium: (i) at $c = 0$, $v_1 = 1/N$ exactly, and at $c > 0$, $v_1 = 1/N - c\,\mathbb{E}[\#\text{draws}] \leq 1/N - c$; (ii) $H(u) \geq u^k$ for all $u$; (iii) $\theta_{k-1} \leq a_{k-1}^2$; (iv) $1 - \theta_1 = \sigma_0\,g_1$ and $\sigma_0 \leq k$; (v) $H$ is convex, so $H(u) \leq u$ and $\int_0^1 H^{N-1}\,du \leq 1/N$.
\end{lemma}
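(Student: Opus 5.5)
The plan is to verify the five parts separately. Each follows directly from the explicit form~\eqref{eq:nrH} of $H$ together with the value recursion~\eqref{eq:nrvalues}; no fixed-point argument is needed. Throughout, fix a symmetric equilibrium profile $a_1 \geq \cdots \geq a_{k-1}$. This ordering is guaranteed by Proposition~\ref{prop:nr_decline}.

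\emph{Part (i).} $v_1 = w_k$ is the value of a player at the start of the game with $k$ draws remaining, when she responds optimally to opponents using the profile. In a symmetric equilibrium her optimal response is the profile itself, so $v_1$ equals her equilibrium payoff in prize units. That payoff is her win probability minus $c$ times her expected number of draws. Submissions are i.i.d.\ across players with the continuous CDF $H$, so ties have probability zero and, by symmetry, each player wins with probability exactly $1/N$. At $c = 0$ this gives $v_1 = 1/N$. At $c > 0$ we get $v_1 = 1/N - c\,\mathbb{E}[\#\text{draws}]$, and the bound $\leq 1/N - c$ follows because every player takes at least one draw. This is the only step needing care: one has to check that the Bellman value $w_k$ coincides with the payoff of the equilibrium strategy. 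That holds because the equilibrium thresholds are exactly the maximizers in~\eqref{eq:nrvalues}, and ties at thresholds have measure zero.

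\emph{Parts (ii)--(iv).} For (ii), the submitted quantile is always one of the player's own draws. So if all $k$ draws are at most $u$, the submission is at most $u$, which gives $H(u) \geq u^k$ by independence. For (iii), take $u \leq a_{k-1}$. Since $a_r \geq a_{k-1}$ for every $r \leq k-1$, every term $\max\{u-a_r,0\}$ in~\eqref{eq:nrH} vanishes, leaving $H(u) = P_k u$. Hence $\theta_{k-1} = P_k a_{k-1}$. Moreover $P_k = a_1\cdots a_{k-1} \leq a_{k-1}$ because each factor lies in $[0,1]$, so $\theta_{k-1} \leq a_{k-1}^2$. For (iv), on $[a_1,1]$ every term of~\eqref{eq:nrH} is active, so $H$ is linear there with slope $\sigma_0 = \sum_{l=1}^k P_l$. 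Using $H(1) = 1$ gives $1 - \theta_1 = \sigma_0(1-a_1) = \sigma_0 g_1$, and $\sigma_0 \leq k$ because each $P_l \leq 1$.

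\emph{Part (v).} $H$ is a sum of the linear term $P_k u$ and the convex hinge functions $P_r\max\{u-a_r,0\}$ with $P_r \geq 0$, so $H$ is convex. A convex function with $H(0) = 0$ and $H(1) = 1$ lies below its chord, so $H(u) \leq u$ on $[0,1]$. Raising to the power $N-1$ and integrating gives $\int_0^1 H^{N-1}\,du \leq \int_0^1 u^{N-1}\,du = 1/N$.
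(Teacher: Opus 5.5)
Your proposal is correct and follows essentially the same route as the paper's proof, part by part. The only differences are that you explicitly identify $v_1 = w_k$ with the equilibrium payoff in (i), and you write out the integration $\int_0^1 H^{N-1}\,du \leq \int_0^1 u^{N-1}\,du = 1/N$ in (v); the paper leaves both steps implicit.
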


\begin{proof}
(i) Every player submits something, since round $k$ is forced; the highest submission wins and ties have probability zero because $H$ is continuous; so win probabilities sum to one and by symmetry each equals $1/N$. The expected payoff is the win probability minus expected search costs, and every player pays for at least one draw.
(ii) If all $k$ draws land below $u$, the submission is below $u$, whichever draw is kept.
(iii) Below $a_{k-1}$, $H$ is the single segment $H(u) = P_k u$, so $\theta_{k-1} = P_k a_{k-1}$, and $P_k = a_1\cdots a_{k-1} \leq a_{k-1}$ since every factor is at most one.
(iv) Above $a_1$ every kink of~\eqref{eq:nrH} is active, so $H$ is linear with slope $\sigma_0$ and $H(1) = 1$; and $\sigma_0 \leq k$ since each $P_l \leq 1$.
(v) The slope of $H$ increases each time $u$ crosses a threshold from below, so $H$ is convex; a convex function with $H(0) = 0$, $H(1) = 1$ lies below its chord.
\end{proof}

\begin{proof}[Proof of Theorem~\ref{thm:nr_largeN}]
Let $c = 0$; then every symmetric equilibrium is interior (all $v_{r+1} \geq v_k = \int_0^1 \Phi > 0$), and by Proposition~\ref{prop:nr_decline} the gaps satisfy $g_1 \leq \cdots \leq g_{k-1}$.

\emph{Part (i), ceiling on every gap.} By the round-$(k{-}1)$ condition and Lemma~\ref{lem:nrbasic}(ii),
\[
\theta_{k-1}^{\,N-1} \;=\; v_k \;=\; \int_0^1 H^{N-1}\,du \;\geq\; \int_0^1 u^{k(N-1)}\,du \;=\; \frac{1}{k(N-1)+1} .
\]
By Lemma~\ref{lem:nrbasic}(iii), $a_{k-1}^2 \geq \theta_{k-1} \geq \bigl(k(N-1)+1\bigr)^{-1/(N-1)}$, which gives the lower bound on $a_{k-1}$, and $a_r \geq a_{k-1}$ for every $r$. In gap form, $g_r \leq 1 - e^{-x} \leq x$ with $x = \ln(k(N-1)+1)/(2(N-1))$.

\emph{Part (i), floor on the top gap.} By the round-1 condition and Lemma~\ref{lem:nrbasic}(i), $\theta_1^{\,N-1} = v_2 \leq v_1 = 1/N$, so $\theta_1 \leq N^{-1/(N-1)}$. By Lemma~\ref{lem:nrbasic}(iv), $g_1 = (1 - \theta_1)/\sigma_0 \geq \bigl(1 - N^{-1/(N-1)}\bigr)/k$, which is the upper bound on $a_1$, hence on every $a_r$.

\emph{Part (ii).} It suffices that the largest gap at $MN$ players lie below the smallest gap at $N$ players:
\begin{equation}\label{eq:nrgoal}
\frac{\ln\bigl(k(MN-1)+1\bigr)}{2(MN-1)} \;<\; \frac{1 - N^{-1/(N-1)}}{k} .
\end{equation}
Bound the left side: $k(MN-1)+1 \leq kMN$ and $MN - 1 \geq M(N-1)$ give $\mathrm{LHS} \leq \ln(kMN)/(2M(N-1))$; and $t \mapsto \ln(ktN)/t$ is decreasing for $\ln(ktN) \geq 1$, so over $M \geq k$ the bound is largest at $M = k$: $\mathrm{LHS} \leq (\ln N + 2\ln k)/(2k(N-1))$. Bound the right side: with $x = \ln N/(N-1)$, the inequality $1 - e^{-x} \geq x - x^2/2$ gives $\mathrm{RHS} \geq \frac{\ln N}{k(N-1)}(1 - \delta)$ with $\delta = \ln N/(2(N-1))$. So~\eqref{eq:nrgoal} holds if $\tfrac12(\ln N + 2\ln k) < \ln N(1 - \delta)$, i.e.\ $\ln k < \ln N(\tfrac12 - \delta)$. Since $N \geq k^3$ gives $\ln k \leq \tfrac13\ln N$, it suffices that $\delta < \tfrac16$, i.e.\ $3\ln N < N - 1$, which holds for all $N \geq 8$ because $3\ln 8 < 7$ and $N - 1 - 3\ln N$ is increasing for $N \geq 3$; and $N \geq k^3 \geq 8$ since $k \geq 2$.
\end{proof}

\begin{proof}[Proof of Theorem~\ref{thm:nr_collapse}]
\emph{Part (i).} Suppose $a_{k-1} = 0$, which happens exactly when $v_k \leq 0$. Then $\max\{\Phi, v_k\} = \Phi$ pointwise, so $v_{k-1} = -c + \int\Phi = v_k \leq 0$, forcing $a_{k-2} = 0$; the same step walks up the profile, so every threshold is zero. But at the all-zero profile $H(u) = u$, so $v_k = 1/N - c > 0$ by the maintained assumption $Nc < 1$, contradicting $v_k \leq 0$. Hence $a_{k-1} > 0$, and since thresholds decline across rounds, every threshold is positive.

\emph{Part (ii).} By part (i), $v_k > 0$. Bound $v_k = -c + \int_0^1 H^{N-1}\,du$ from above in two pieces. Below $a_1$, $H \leq \theta_1$, so $\int_0^{a_1}H^{N-1} \leq a_1\theta_1^{\,N-1} = a_1 v_2$ by the round-1 condition, and $v_2 \leq v_1 \leq 1/N - c$ by Lemma~\ref{lem:nrbasic}(i). Above $a_1$, $H$ is linear with slope $\sigma_0$ and reaches one, so $\int_{a_1}^1 H^{N-1} = (1-\theta_1^{\,N})/(N\sigma_0) \leq 1/(N\sigma_0) \leq 1/(N(1+a_1))$, using $\sigma_0 \geq P_1 + P_2 = 1 + a_1$. Therefore
\[
0 \;<\; N v_k \;\leq\; -Nc + a_1(1 - Nc) + \frac{1}{1+a_1} .
\]
Writing $\delta = 1 - Nc$ and multiplying by $1 + a_1$, the right side rearranges to $\delta(1+a_1)^2 - a_1 > 0$, which is the stated ceiling. As $Nc \uparrow 1$, $a_1 \leq \delta(1+a_1)^2 \leq 4\delta \to 0$, and every threshold lies below $a_1$.
\end{proof}

\section{Proofs for Section~\ref{sec:planner}}\label{app:tail_driven}\label{app:N_decreasing}\label{app:heavy_tail_no_recall}\label{app:heavy_tail_recall}

\begin{proof}[Proof of Propositions~\ref{prop:location_inv} and~\ref{prop:tail_mono}]
In all three cases, the planner's welfare admits the common form
\[
\mathrm{SW}(a; F) \;=\; \int_{q_0(a)}^1 F^{-1}(q)\,\psi(q; N, a)\,dq \;-\; C(a),
\]
where $\psi(q; N, a)$ is the density of the maximum submitted quantile, $q_0(a)$ is the lower edge of the submission support, and $C(a)$ is an $F$-independent cost term. Specifically: \emph{unlimited draws:} $q_0(a) = a$, $\psi^\infty(q; N, a) = N(q - a)^{N-1}/(1-a)^N$ on $[a, 1]$, where $a = 1 - Nc/W$ is the equilibrium acceptance quantile, and $C(a) = W(a) = Nc/(1-a)$ (with $\alpha$ irrelevant, as shown in Section~\ref{sec:planner_obj}); \emph{finite-draw with recall:} $q_0(a) = 0$, $\psi^R(q; N, a, k) = N H_k^R(q; a)^{N-1} h_k^R(q; a)$, with $H_k^R(q; a) = q^k$ on $[0, a]$ and $H_k^R(q; a) = a^k + (q-a)(1-a^k)/(1-a)$ on $[a, 1]$, and cost term $C(a) = (1-\alpha)\,c/g_k^R(N, a) + \alpha N c(1-a^k)/(1-a)$; \emph{no recall ($k=2$):} $q_0(a) = 0$, $\psi(q; N, a) = N H_a(q)^{N-1} H_a'(q)$ with $H_a(q) = aq$ on $[0, a]$ and $H_a(q) = q(1+a) - a$ on $[a, 1]$, and cost term $C(a) = (1-\alpha)\,c/g(N, a) + \alpha N c(1+a)$.

In each case $\int_{q_0(a)}^1 \psi(q; N, a)\, dq = 1$, and $W^*$ is determined by the optimal $a^*$ through the case-specific prize-to-threshold mapping: $W^* = Nc/(1-a^*)$ for unlimited draws, $W^* = c/g_k^R(N, a^*)$ for finite-draw recall, and $W^* = c/g(N, a^*)$ for no recall. Each mapping is strictly increasing in $a^*$ ($g_k^R$ and $g$ are strictly decreasing in $a$, and $Nc/(1-a)$ is increasing in $a$).

\emph{Proposition~\ref{prop:location_inv} (location invariance).} Under $F_\theta^{-1}(q) = F_0^{-1}(q) + \theta$,
\[
\int_{q_0(a)}^1 F_\theta^{-1}(q)\,\psi(q;N,a)\,dq \;=\; \int_{q_0(a)}^1 F_0^{-1}(q)\,\psi(q;N,a)\,dq \;+\; \theta,
\]
since $\int \psi = 1$. The $\theta$ term is constant in $a$ and drops out of the FOC, so $a^*$ is independent of $\theta$; hence $W^*(F_\theta) = W^*(F_0)$.

\emph{Proposition~\ref{prop:tail_mono} (tail monotonicity).} By location invariance, $W^*(F_M) = W^*(\tilde F_M)$ where $\tilde F_M^{-1}(q) = \min(F_0^{-1}(q), M)$ (drop the mean-preserving shift $\theta(M)$: it does not affect $W^*$). It therefore suffices to show that $W^*(\tilde F_M)$ is strictly increasing in $M$. With $q_M = F_0(M) > q^*$, $\partial \tilde F_M^{-1}/\partial M = \mathbf{1}\{q > q_M\}$, so $\partial^2 \mathrm{SW}/\partial a\,\partial M = \int_{q_M}^1 (\partial \psi/\partial a)\,dq = (\partial/\partial a)\int_{q_M}^1 \psi\,dq$. Since $\int_{q_M}^1 \psi\, dq = 1 - H(q_M; a, \ldots)^N$ where $H$ is the (case-specific) submitted-quantile CDF, the cross-partial reduces to $-N H(q_M)^{N-1} \partial H(q_M)/\partial a$ in every case. We verify $\partial H(q_M)/\partial a < 0$ for $q_M > a$ in each case:

\emph{Unlimited draws:} The conditional CDF is $H^\infty(q; a) = (q-a)/(1-a)$ on $[a, 1]$, so $\partial H^\infty/\partial a = -(1-q)/(1-a)^2 < 0$.

\emph{Finite-draw with recall:} On $[a, 1]$, $H_k^R(q_M; a) = a^k + (q_M - a)(1-a^k)/(1-a)$. Writing $S(a) = (1-a^k)/(1-a) = 1 + a + \cdots + a^{k-1}$, $\partial H_k^R(q_M)/\partial a = ka^{k-1} - S(a) + (q_M - a)S'(a)$. For fixed $a$, this expression is linear in $q_M$ on $[a, 1]$, so its values at the two endpoints pin it down. At $q_M = a$, $\partial H_k^R/\partial a = ka^{k-1} - S(a) = ka^{k-1} - (1 + a + \cdots + a^{k-1}) < 0$ (each of the $k$ terms $1, a, \ldots, a^{k-1}$ strictly exceeds $a^{k-1}$ except the last, for $a < 1$). At $q_M = 1$, $H_k^R(1; a) = 1$ for all $a$, so $\partial H_k^R(1)/\partial a = 0$. A linear function that goes from a strictly negative value at $q_M = a$ to zero at $q_M = 1$ is strictly negative on $(a, 1)$.

\emph{No recall ($k=2$):} $H_a(q_M) = (1+a)q_M - a$, $\partial H_a(q_M)/\partial a = q_M - 1 < 0$.

So $\partial^2 \mathrm{SW}/\partial a\,\partial M > 0$ whenever $a < q_M$. For $a > q_M$ the cap sits below the acceptance threshold: in the unlimited-draw and recall contests the branch of $H$ below the threshold does not involve $a$, so the cross-partial is zero there; in the two-draw no-recall contest feasible thresholds satisfy $a \leq \bar a(N) < q_M$ on the stated range, so the case does not arise. Three steps conclude.

\emph{Step 1: on the stated range, every maximizer lies strictly below $q_M$.} For the no-recall contest this is the feasibility bound. For the other two contests, suppose a maximizer $a$ at some $M$ in the range had $a > q^*(F_0)$. Adding the optimality inequalities $\mathrm{SW}(a; M) \geq \mathrm{SW}(q^*(F_0); M)$ and $\mathrm{SW}(q^*(F_0); F_0) \geq \mathrm{SW}(a; F_0)$, and writing the difference between the $F_0$ problem and the $M$ problem as the integral of $\partial \mathrm{SW}/\partial m$ over the caps $m \in (M, F_0^{-1}(1^-))$, gives
\[
\int_{q^*(F_0)}^{a}\!\int_{M}^{F_0^{-1}(1^-)} \frac{\partial^2 \mathrm{SW}}{\partial a\,\partial m}\;dm\,da \;\leq\; 0 .
\]
The integrand is nonnegative and strictly positive wherever the threshold lies below $q_m$, which holds on a set of positive measure because $q_m \to 1$: a contradiction. So every maximizer is at most $q^*(F_0) < q_M$.

\emph{Step 2: maximizers are nondecreasing in $M$.} For $M < M'$ in the range with maximizers $a$ and $a'$, if $a' < a$ the same two-point sum gives $\int_{a'}^{a}\int_{M}^{M'} \partial^2 \mathrm{SW}/\partial a\,\partial m \leq 0$, contradicting strict positivity of the integrand on the rectangle, which lies below $q_M$ by Step~1.

\emph{Step 3: the increase is strict.} At an interior maximizer, $\partial \mathrm{SW}/\partial a\,(a^*(M); M) = 0$, and integrating the strictly positive cross-partial over the caps in $(M, M')$ gives $\partial \mathrm{SW}/\partial a\,(a^*(M); M') > 0$, so $a^*(M)$ is not optimal at $M'$; with Step~2, $a^*(M') > a^*(M)$. Since the prize-to-threshold mapping is strictly increasing in $a^*$, $W^*(F_M)$ is strictly increasing in $M$.
\end{proof}

\begin{proof}[Proof of Proposition~\ref{prop:optimal_N}, part (i)]
Fix $N\geq 1$. Let $p^* = 1-F(b^*_{N+1})$ denote the optimal acceptance probability in the $(N{+}1)$-worker problem, so that total cost is $(N{+}1)c/p^*$.

\emph{Step 1 (Equal-cost comparison).} We show that $N$ more-selective workers produce a higher expected maximum than $N{+}1$ less-selective workers at the same total cost. The $(N{+}1)$-worker optimum uses acceptance probability~$p^*$ at total cost $(N{+}1)c/p^*$. Consider instead $N$ workers with acceptance probability $\tilde p = Np^*/(N{+}1)$. Since $\tilde p < p^*$, these workers are more selective, and their total cost $Nc/\tilde p = (N{+}1)c/p^*$ is identical.

\emph{Step 2 (Quantile representation).} Each worker~$i$ independently accepts a draw~$X_i$ above the threshold. By the probability integral transform, the quantile $q_i = F(X_i)$ is uniform on $[1{-}p,\,1]$. Since $F^{-1}$ is increasing, the best output is $\max_i X_i = F^{-1}(\max_i q_i)$, so it suffices to show that the highest quantile $Q_N \equiv \max_i q_i$ under the $N$-worker allocation first-order stochastically dominates $Q_{N+1}$ under the $(N{+}1)$-worker allocation.

The maximum of $N{+}1$ independent uniforms on $[1{-}p^*,\,1]$ has CDF $((q{-}(1{-}p^*))/p^*)^{N+1}$, and the maximum of $N$ independent uniforms on $[1{-}\tilde p,\,1]$ has CDF $((q{-}(1{-}\tilde p))/\tilde p)^{N}$. To compare these on a common support, set $u = (q - (1{-}p^*))/p^* \in [0,1]$. Then $Q_{N+1}$ has CDF $u^{N+1}$ on $[0,1]$, while $Q_N$ has CDF $[(u(N{+}1){-}1)/N]^N$ on $[1/(N{+}1),\,1]$ (and $0$ below $1/(N{+}1)$, since the $N$ workers' higher selectivity excludes the bottom of the support).

\emph{Step 3 (FOSD).} We claim $Q_N$ first-order stochastically dominates~$Q_{N+1}$:
\begin{equation}\label{eq:FOSD}
u^{N+1} \;\geq\; \Bigl[\frac{u(N{+}1)-1}{N}\Bigr]^N \qquad \text{for all } u\in\bigl[\tfrac{1}{N{+}1},\,1\bigr],
\end{equation}
with equality only at $u=1$. Setting $t = (u(N{+}1)-1)/N \in [0,1]$, the inequality becomes $((Nt+1)/(N{+}1))^{N+1}\geq t^N$. The left side is the $(N{+}1)$-th power of $(Nt+1)/(N{+}1)$, which is the arithmetic mean of $N$ copies of~$t$ and one copy of~$1$. Since the arithmetic mean of non-negative numbers is at least their geometric mean, $(Nt + 1)/(N+1) \geq (t^N \cdot 1)^{1/(N+1)} = t^{N/(N+1)}$; raising to the $(N{+}1)$-th power yields~\eqref{eq:FOSD}.

\emph{Step 4.} Since $F^{-1}$ is nondecreasing and $Q_N$ strictly FOSD $Q_{N+1}$, $\mathbb{E}[F^{-1}(Q_N)] > \mathbb{E}[F^{-1}(Q_{N+1})]$. The two setups have the same total cost, so $\mathrm{SW}^*(N) \geq \mathrm{SW}(N, \tilde p) > \mathrm{SW}^*(N+1)$, where the first inequality is by optimality of~$b^*(N)$.
\end{proof}

\begin{proof}[Proof of Proposition~\ref{prop:optimal_N}, part (ii), two-draw no recall]
Let $a \in [0,\bar a(N)]$, where $\bar a(N)$ is the positive root of the zero-profit condition $g(N,a) = \frac{1+a^{2N-1}}{N(1+a)} - a^{2N-2} = 0$. The welfare function is
\[
\mathrm{SW}(N,a,t;\alpha) \;=\; \int_0^1 (1-q)^{-t}\, \psi(q;N,a)\, dq \;-\; (1-\alpha)\,\frac{c}{g(N,a)} \;-\; \alpha\cdot Nc(1+a),
\]
and $\mathrm{SW}^*(N,t;\alpha) = \max_{a \in [0,\bar a(N)]} \mathrm{SW}(N,a,t;\alpha)$. Both cost terms are bounded as $t \to 1^-$, so the tail asymptotic below does not depend on $\alpha$.

\emph{Step 1 (Asymptotics as $t \to 1^-$).} Recall the CDF of each worker's submitted score: $H_a(q) = aq$ for $q \leq a$ and $H_a(q) = q(1+a)-a$ for $q > a$. The density of the overall maximum is $\psi(q;N,a) = NH_a(q)^{N-1}H_a'(q)$. For $q > a$, $\psi(q;N,a) = N(q(1+a)-a)^{N-1}(1+a)$, so $\psi(1^-;N,a) = N(1+a)$.

Substituting $u = q(1+a)-a$ on $[a,1]$, with $1-q = (1+a-u)/(1+a)$:
\[
\int_a^1 (1-q)^{-t}\,\psi(q;N,a)\,dq \;=\; (1+a)^t \int_{a^2}^1 (1-u)^{-t}\,Nu^{N-1}\,du.
\]
Since $\int_0^1 (1-u)^{-t}\,Nu^{N-1}\,du = NB(N,1-t) \sim N/(1-t)$ as $t \to 1^-$ (where $B$ is the Beta function), and the integral over $[0,a]$ is bounded for any fixed $a < 1$, we obtain
\[
\int_0^1 (1-q)^{-t}\,\psi(q;N,a)\,dq \;\sim\; \frac{N(1+a)}{1-t} \quad \text{as } t \to 1^-.
\]
Since both cost terms are bounded for any fixed $a < \bar a(N)$ and any $\alpha \in [0, 1]$:
\begin{equation}\label{eq:asymp_fixed_a}
\lim_{t \to 1^-} (1-t)\,\mathrm{SW}(N,a,t;\alpha) \;=\; N(1+a), \qquad \text{for each fixed } a < \bar a(N).
\end{equation}
Because $\mathrm{SW}^*(N,t;\alpha) \geq \mathrm{SW}(N,a,t;\alpha)$ for every $a < \bar a(N)$, taking $a \uparrow \bar a(N)$ gives $\liminf_{t \to 1^-} (1-t)\,\mathrm{SW}^*(N,t;\alpha) \geq N(1+\bar a(N))$. The matching upper bound is uniform over $a \in [0, \bar a(N)]$: on $[a, 1]$ the substitution above gives at most $(1+\bar a(N))^t\, N B(N, 1-t)$, on $[0, a]$ the integrand is at most $(1 - \bar a(N))^{-t}$, which stays bounded as $t \to 1^-$, and the cost terms are nonnegative; hence $\limsup_{t \to 1^-} (1-t)\,\mathrm{SW}^*(N,t;\alpha) \leq N(1+\bar a(N))$. Therefore
\begin{equation}\label{eq:asymp}
\lim_{t \to 1^-} (1-t)\,\mathrm{SW}^*(N,t;\alpha) \;=\; N(1+\bar a(N)),
\end{equation}
independent of $\alpha$.

\emph{Step 2 ($\bar a(N)$ is increasing in $N$).} The admissibility boundary $\bar a(N)$ is the zero-cost equilibrium threshold $a^*(N, 0)$. By Proposition~\ref{prop:finite_c} Part~(i), $a^*(N, c/W)$ is strictly increasing in $N$ on some range $[0, \bar c(\bar N))$, and in particular at $c/W = 0$: $\bar a(N+1) = a^*(N+1, 0) > a^*(N, 0) = \bar a(N)$.

\emph{Step 3 (Conclusion).} By Step~2 applied at every field size, $\bar a(\cdot)$ is strictly increasing, so the limit coefficient $m(1 + \bar a(m))$ in~\eqref{eq:asymp} is strictly increasing in $m$. Fix $N \geq 2$. For each $m \leq N$, \eqref{eq:asymp} gives $(1-t)\bigl[\mathrm{SW}^*(N+1,t) - \mathrm{SW}^*(m,t)\bigr] \to (N+1)\bigl(1+\bar a(N+1)\bigr) - m\bigl(1+\bar a(m)\bigr) > 0$, so there is $t_m < 1$ with $\mathrm{SW}^*(N+1,t) > \mathrm{SW}^*(m,t)$ for all $t \in (t_m, 1)$. With $\bar t(N) = \max_{2 \leq m \leq N} t_m$, every $t > \bar t(N)$ has the field size $N+1$ dominating every $m \leq N$, hence $N^* \geq N + 1$.
\end{proof}

\begin{proof}[Proof of Proposition~\ref{prop:optimal_N}, part (ii), finite-draw recall]
Under recall, the admissibility boundary is $\bar a^R(N, k) = 1$, since $g_k^R(N, 1) = 0$. The tail asymptotic of Step~1 of the preceding proof uses only the upper branch of the submitted-quantile CDF. The upper branch of $H_k^R$ is linear from $a^k$ to $1$ with slope $(1-a^k)/(1-a) \leq k$, which makes the limit below uniform over $a$; substituting $u = H_k^R(q)$ in the tail integral yields $\int_{a^k}^1 (1-u)^{-t} N u^{N-1} du \cdot [(1-a^k)/(1-a)]^t$, so
\[
\lim_{t \to 1^-} (1-t)\,\mathrm{SW}^*(N, k, t) = N \cdot \frac{1 - (\bar a^R)^k}{1 - \bar a^R}\bigg|_{\bar a^R = 1} = Nk,
\]
using $(1-a^k)/(1-a) = 1 + a + \cdots + a^{k-1} \to k$ as $a \to 1$. The limit coefficient $mk$ is strictly increasing in the field size $m$. Fix $N \geq 2$: for each $m \leq N$, $(1-t)\bigl[\mathrm{SW}^*(N+1, k, t) - \mathrm{SW}^*(m, k, t)\bigr] \to (N+1-m)k > 0$, so there is $t_m < 1$ with $\mathrm{SW}^*(N+1, k, t) > \mathrm{SW}^*(m, k, t)$ for all $t \in (t_m, 1)$. With $\bar t(N, k) = \max_{2 \leq m \leq N} t_m$, every $t > \bar t(N, k)$ has $N+1$ dominating every $m \leq N$, hence $N^* \geq N + 1$.
\end{proof}

\setstretch{1}

\end{document}


\maketitle

\noindent This online appendix contains the proofs for the three-draw contest without recall (Section~5.3 of the paper) and the derivation of the designer first-order condition (Section~7). Numbered cross-references without the OA prefix point to the paper. Throughout the three-draw proofs, payoffs are measured in units of the prize and $c$ denotes the cost-prize ratio $c/W \in [0, 1/N)$.

\section{Proofs for Section~\ref{sec:nr_k3}: the Three-Draw Contest}\label{app:nr_k3}

The reduction behind Theorem~\ref{thm:nr_k3} is the following proposition. Here $\alpha = H(a_1)$ and $\beta = H(a_2)$ are the two threshold heights.

\begin{proposition}\label{prop:nr_k3red}
In any interior symmetric equilibrium of the three-draw contest, at any $c/W \in [0, 1/N)$:
\begin{enumerate}
\item[(i)] the threshold heights satisfy the link $\alpha^{N-1} = \beta^{N-1}\bigl(1 + \tfrac{N-1}{N}\,a_2\bigr)$;
\item[(ii)] the round-1 threshold is determined by the round-2 threshold in closed form,
\[
a_1 \;=\; B_N(a_2) \;\equiv\; \frac{a_2 + a_2^2\,\varphi_N(a_2)}{1 + a_2},
\qquad
\varphi_N(a_2) \;\equiv\; \Bigl(1 + \frac{N-1}{N}\,a_2\Bigr)^{1/(N-1)} ;
\]
\item[(iii)] substituting $a_1 = B_N(a_2)$ into the heights $\alpha$, $\beta$ and the slopes $s_1 = a_1 a_2$, $s_2 = a_1(1+a_2)$, $s_3 = 1 + a_1 + a_1 a_2$ makes each a function of $a_2$ alone, and the interior symmetric equilibria correspond exactly to the roots of
\begin{equation}\label{eq:nrPsi}
\Psi_N(a_2) \;=\; \frac{1}{N}\Bigl[\frac{\beta^N}{s_1} + \frac{\alpha^N - \beta^N}{s_2}
+ \frac{1 - \alpha^N}{s_3}\Bigr] \;-\; \frac{c}{W} \;-\; \beta^{N-1}
\end{equation}
on $(0, \bar a_N)$, where $\bar a_N$ solves $B_N(\bar a_N) = 1$.
\end{enumerate}
\end{proposition}

The lemmas below prove the proposition in order, the link, the branch, and the converse; the proofs of uniqueness and selectivity, which together prove Theorem~\ref{thm:nr_k3}, follow them.

Notation: write $a = a_1$ and $b = a_2$ for the two thresholds, $\alpha = H(a)$ and $\beta = H(b)$ for their heights, and $c$ for the cost-prize ratio, with $c \in [0, 1/N)$. For $k = 3$ the distribution~\eqref{eq:nrH} is
\begin{equation}\label{eq:nrH3}
H(u) \;=\;
\begin{cases}
abu & u \in [0, b],\\[2pt]
a\bigl((1+b)u - b\bigr) & u \in [b, a],\\[2pt]
(1 + a + ab)\,u - a - ab & u \in [a, 1],
\end{cases}
\end{equation}
with $\beta = ab^2$ and $\alpha = a\bigl((1+b)a - b\bigr)$, and the values~\eqref{eq:nrvalues} specialize to
\begin{equation}\label{eq:nrvalues3}
w_1 = -c + \int_0^1 \Phi(u)\,du,
\qquad
w_2 = -c + \int_0^1 \max\{\Phi(u), w_1\}\,du .
\end{equation}
An equilibrium is a pair reproducing itself, $\Phi(b) = w_1$ and $\Phi(a) = w_2$; by Proposition~\ref{prop:nr_decline}, $w_2 \geq w_1$ and hence $a \geq b$.

\begin{lemma}\label{lem:nrinterior}
Every symmetric equilibrium of the three-draw contest with $c \in [0, 1/N)$ is interior: $w_1 > 0$ and $w_2 < 1$, hence $0 < b \leq a < 1$ and both round conditions hold with equality.
\end{lemma}

\begin{proof}
If $w_1 \leq 0$, then $\max\{\Phi, w_1\} = \Phi$ pointwise, so $w_2 = w_1$, both thresholds are zero, $H(u) = u$, and $w_1 = 1/N - c > 0$, a contradiction. And $w_1, w_2 < 1$ because $H$ is strictly increasing with $H(1) = 1$, so $\Phi(u) < 1$ for $u < 1$. Since $\Phi$ is continuous and strictly increasing with $\Phi(0) = 0$ and $\Phi(1) = 1$, the conditions $\Phi(b) = w_1 \in (0,1)$ and $\Phi(a) = w_2 \in (0,1)$ place both thresholds strictly inside $(0,1)$.
\end{proof}

\subsection*{The link and the scalar reduction}

An interior equilibrium is a pair $(a, b)$ meeting the two indifference conditions $\Phi(b) = w_1$
and $\Phi(a) = w_2$. We rewrite this pair as an equivalent one: a relation between the two
thresholds that solves in closed form for $a$, and a single equation in $b$. The relation comes from
subtracting one indifference condition from the other.

\begin{lemma}\label{lem:nrlink}
In any interior symmetric equilibrium, at any $c$,
\begin{equation}\label{eq:nrlink}
\alpha^{N-1} \;=\; \beta^{N-1}\Bigl(1 + \frac{N-1}{N}\,b\Bigr).
\end{equation}
\end{lemma}

\begin{proof}
Subtracting the two equations in~\eqref{eq:nrvalues3} cancels the cost:
$w_2 - w_1 = \int_0^1 \max\{w_1 - \Phi(u), 0\}\dd u = \int_0^b (w_1 - \Phi)\dd u$, since
$\Phi(u) \geq w_1$ exactly on $u \geq b$. On $[0, b]$, $H = abu$, so with
$w_1 = \beta^{N-1}$,
\[
\int_0^b\bigl(\beta^{N-1} - (abu)^{N-1}\bigr)\dd u
\;=\; b\,\beta^{N-1} - \frac{b\,\beta^{N-1}}{N}
\;=\; \frac{N-1}{N}\,b\,\beta^{N-1}.
\]
Substituting the two indifference conditions gives~\eqref{eq:nrlink}.
\end{proof}

We call~\eqref{eq:nrlink} the \emph{link}: it ties the two thresholds $a$ and $b$ together, through
their heights $\alpha$ and $\beta$.

\begin{corollary}\label{cor:nrbranch}
In any interior symmetric equilibrium, at any $c$, the round-1 threshold is
\begin{equation}\label{eq:nrbranch}
a \;=\; B_N(b) \;:=\; \frac{b + b^2\varphi_N(b)}{1+b},
\qquad
\varphi_N(b) \;:=\; \Bigl(1 + \frac{N-1}{N}\,b\Bigr)^{1/(N-1)} .
\end{equation}
\end{corollary}
\begin{proof}
Substituting $\alpha$ and $\beta$ into~\eqref{eq:nrlink} and cancelling $a^{N-1}$ we obtain
\[
\Bigl(\frac{(1+b)a - b}{b^2}\Bigr)^{N-1} = 1 + \frac{N-1}{N}b.
\]
Solving for $a$ gives the closed form in the corollary.
\end{proof}

Note that $a = B_N(b)$ can be written as
$\alpha = \beta\,\varphi_N(b)$. The map $B_N$ is strictly increasing, and
$B_N(b) - b = b^2(\varphi_N(b) - 1)/(1+b) \geq 0$ since $\varphi_N \geq 1$, so $B_N(b) \geq b$
throughout the domain.

Substituting $a = B_N(b)$ into the round-2 indifference $\Phi(b) = w_1$ eliminates $a$ and reduces
the equilibrium to a single equation in $b$. That condition reads
$\beta^{N-1} = \int_0^1 H^{N-1}\dd u - c$; with $\alpha$, $\beta$, and the slopes $s_1 = ab$,
$s_2 = a(1+b)$, $s_3 = 1 + a + ab$ all taken at $a = B_N(b)$, hence functions of $b$ alone,
integrating $H^{N-1}$ over the three linear pieces of $H$ writes it as $\Psi_N(b) = 0$, where
\begin{equation}\label{eq:nrPsiA}
\Psi_N(b) \;=\; \frac{1}{N}\Bigl[\frac{\beta^N}{s_1} + \frac{\alpha^N - \beta^N}{s_2}
+ \frac{1 - \alpha^N}{s_3}\Bigr] - c - \beta^{N-1}.
\end{equation}
Every interior equilibrium is therefore a root of $\Psi_N$ on $(0, \bar b_N)$, where
$B_N(\bar b_N) = 1$ (equivalently $\bar b_N^2\,\varphi_N(\bar b_N) = 1$). The converse holds as well.

\begin{lemma}\label{lem:nrconverse}
Fix $b \in (0, \bar b_N)$ and set $a = B_N(b)$. If $\Psi_N(b) = 0$, then $(a, b)$ is an interior
symmetric equilibrium. The interior equilibria are exactly the pairs $(B_N(b), b)$ with
$\Psi_N(b) = 0$.
\end{lemma}

\begin{proof}
Since $b \in (0, \bar b_N)$ and $B_N$ is increasing with $B_N(\bar b_N) = 1$, we have
$0 < b \leq a < 1$, so $(a, b)$ is interior. By construction
$\Psi_N(b) = \int_0^1 H^{N-1}\dd u - c - \beta^{N-1} = w_1 - \Phi(b)$, so $\Psi_N(b) = 0$ is the
round-2 indifference $\Phi(b) = w_1$; it gives $w_1 = \beta^{N-1}$ and fixes the round-2 acceptance
region as $[b, 1]$. From the definition of $w_2$,
\[
w_2 - w_1 \;=\; \int_0^b (w_1 - \Phi)\dd u \;=\; \tfrac{N-1}{N}\,b\,\beta^{N-1},
\]
so $w_2 = \beta^{N-1}(1 + \tfrac{N-1}{N}b)$. The relation $a = B_N(b)$ is the link~\eqref{eq:nrlink},
$\alpha^{N-1} = \beta^{N-1}(1 + \tfrac{N-1}{N}b)$, so $\alpha^{N-1} = w_2$, which is the round-1
indifference $\Phi(a) = w_2$. Both conditions hold, so $(a, b)$ is an equilibrium. For the final
claim, every interior equilibrium satisfies $a = B_N(b)$ by Lemma~\ref{lem:nrlink} and its corollary
and $\Psi_N(b) = 0$ by its round-2 condition, so the two sets coincide.
\end{proof}

For a candidate profile $(a, b)$, suppose the other players use it, so $H$ is their common
submission CDF, and define the round-2 gap
\[
D(a, b) \;=\; \int_0^1 H^{n}\dd u - c - \beta^{n},
\]
where $n$ is the number of rivals to beat. The first part is the value of continuing to the last
draw against these opponents; the second, $\beta^n = H(b)^n$, is the win probability of accepting the
value $b$ against them. So $D$ is how much a player gains by redrawing rather than accepting $b$ when
the other players play $(a, b)$: positive means $b$ is too low, negative too high, and $D = 0$ makes
$b$ a best response to them. The next two lemmas sign its partial derivatives $\partial D/\partial a$
and $\partial D/\partial b$.

\begin{lemma}\label{lem:nrFb}
For every $n \geq 1$ and $0 < b \leq a < 1$, $\partial D/\partial b < 0$.
\end{lemma}

\begin{proof}
The partial of $H$ in $b$ is $au$ on $[0, b]$ and $a(u - 1) \leq 0$ on $[b, 1]$. Dropping the
negative part,
\[
\begin{aligned}
\frac{1}{n}\frac{\partial D}{\partial b}
\;&=\; \int_0^1 H^{n-1} H_b\dd u - 2ab\,\beta^{n-1}
\;\leq\; \int_0^b (abu)^{n-1}au\dd u - 2ab\,\beta^{n-1}\\
\;&=\; ab\,\beta^{n-1}\Bigl(\frac{b}{n+1} - 2\Bigr) < 0 .
\end{aligned}
\qedhere
\]
\end{proof}

\begin{lemma}\label{lem:nrFa}
For every $n \geq 1$ and $0 < b \leq a < 1$, $\partial D/\partial a < 0$ whenever
\begin{equation}\label{eq:nrsuff}
\bigl((1+b)a - b\bigr)^{n+1} \;<\; \bigl[(n+1)(1+b) - b\bigr]\, b^{2n} .
\end{equation}
In particular: (i) for $n = N - 1$, \eqref{eq:nrsuff} holds whenever $a = B_N(b)$;
(ii) for $n = N$, \eqref{eq:nrsuff} holds whenever $a \leq B_N(b)$.
\end{lemma}

\begin{proof}
Differentiate $D$ in $a$. The boundary term is $-\tfrac{1}{n}\partial_a\beta^n = -\beta^{n-1}b^2$,
since $\beta = ab^2$, so
\[
\frac{1}{n}\frac{\partial D}{\partial a}
\;=\; \int_0^1 H^{n-1}\,H_a\dd u \;-\; \beta^{n-1}b^2 .
\]
The kernel $H_a$ has two forms. Below $a$, both pieces of $H$ in~\eqref{eq:nrH3} are proportional to
$a$, so $H_a = H/a$ there. Above $a$, $H = (1+a+ab)u - a(1+b)$, so $H_a = (1+b)(u-1) \leq 0$.
Dropping the nonpositive upper part,
\[
\frac{1}{n}\frac{\partial D}{\partial a}
\;\leq\; \frac{1}{a}\int_0^a H^{n}\dd u \;-\; \beta^{n-1}b^2 .
\]
The segment formula used for~\eqref{eq:nrPsiA} (on a piece of slope $s$ from height $h_0$ to $h_1$,
$\int H^n \dd u = (h_1^{n+1} - h_0^{n+1})/((n+1)s)$) evaluates the integral over the two pieces
below $a$, with slopes $ab$ and $a(1+b)$ and heights $0$, $\beta$, $\alpha$:
\[
\frac{1}{a}\int_0^a H^{n}\dd u
\;=\; \frac{1}{n+1}\Bigl[\frac{\beta^{n+1}}{a^2 b} + \frac{\alpha^{n+1} - \beta^{n+1}}{a^2(1+b)}\Bigr]
\;=\; \frac{\alpha^{n+1}}{a^2(1+b)(n+1)} + \frac{\beta^{n-1}b^3}{(1+b)(n+1)},
\]
the second equality combining the $\beta^{n+1}$ terms
($\tfrac1b - \tfrac{1}{1+b} = \tfrac{1}{b(1+b)}$) and using $\beta^2/(a^2b) = b^3$. So
\[
\frac{1}{n}\frac{\partial D}{\partial a}
\;\leq\; \frac{\alpha^{n+1}}{a^2(1+b)(n+1)}
\;-\; \beta^{n-1}b^2\Bigl[1 - \frac{b}{(1+b)(n+1)}\Bigr].
\]
The right side is negative exactly when $\alpha^{n+1} < a^2\beta^{n-1}b^2\bigl[(n+1)(1+b) -
b\bigr]$, and substituting $\alpha = a\bigl((1+b)a - b\bigr)$ and $a^2\beta^{n-1}b^2 =
a^{n+1}b^{2n}$, the factor $a^{n+1}$ cancels and the condition is~\eqref{eq:nrsuff}.

(i) On $a = B_N(b)$, $(1+b)a - b = b^2\varphi_N$, so at $n = N-1$ the left side
of~\eqref{eq:nrsuff} is $b^{2N}\varphi_N^N$ and, after cancelling $b^{2N-2}$, the condition reads
$b^2\varphi_N^N < N + (N-1)b$. By definition $\varphi_N^{N-1} = (N + (N-1)b)/N$, so
$\varphi_N^N = \varphi_N\,(N + (N-1)b)/N$, and the factor $N + (N-1)b$ cancels from both sides:
the condition is exactly
\[
b^2\varphi_N \;<\; N .
\]
This holds with room: $\varphi_N^{N-1} = 1 + \tfrac{N-1}{N}b < 2$ gives $\varphi_N \leq 2$, and
$b^2 < 1$, so $b^2\varphi_N < 2 \leq N$.

(ii) The left side of~\eqref{eq:nrsuff} increases in $a$, so it suffices to check it at
$a = B_N(b)$, where it equals $b^{2(n+1)}\varphi_N^{\,n+1}$. At $n = N$, after cancelling
$b^{2N}$ and simplifying $(N+1)(1+b) - b = N(1+b) + 1$, the condition reads
\[
b^2\varphi_N^{\,N+1} \;<\; N(1+b) + 1 .
\]
For $N \geq 3$, raise $\varphi_N^{N-1} = 1 + \tfrac{N-1}{N}b < 1 + b$ to the power
$(N+1)/(N-1)$:
\[
\varphi_N^{\,N+1} \;<\; (1+b)^{(N+1)/(N-1)} \;\leq\; (1+b)^2 ,
\]
the second inequality because the exponent $(N+1)/(N-1)$ is at most $2$ exactly when $N \geq 3$
and the base exceeds one. Multiplying by $b^2 < 1$ and then comparing with the target,
\[
b^2\varphi_N^{\,N+1} \;<\; (1+b)^2 \;\leq\; 2(1+b) \;<\; N(1+b) + 1,
\]
using $1 + b \leq 2$ for the middle step and $N(1+b) + 1 - 2(1+b) = (N-2)(1+b) + 1 > 0$ for the
last. For $N = 2$:
$\varphi_2 = 1 + b/2 \leq 3/2$, so $b^2\varphi_2^{\,3} \leq 27b^2/8$, and $27b^2/8 < 3 + 2b$ on
$(0, 1]$ because the parabola $27b^2 - 16b - 24$ is convex and negative at both $b = 0$ and
$b = 1$.
\end{proof}

\subsection*{Uniqueness}

\begin{lemma}\label{lem:nrendpoint}
For every $N \geq 2$ and $c \in [0, 1/N)$,
\[
\Psi_N(0^+) \;=\; \frac{1}{N} - c \;>\; 0,
\qquad
\Psi_N(\bar b_N) \;=\; -\,\frac{N-1}{N\,(1 + \bar b_N)} \;-\; c \;<\; 0 .
\]
\end{lemma}

\begin{proof}
As $b \to 0$ the profile accepts every draw, $H \to u$, $\beta \to 0$, and
$\Psi_N \to 1/N - c$. At $\bar b_N$, $B_N(\bar b_N) = 1$, so $a = 1$, $\alpha = 1$, and the top segment
vanishes; then $\int_0^1 H^{N-1}\dd u = (1 + \bar b^{\,2N-1})/(N(1+\bar b))$. Raising the boundary
relation $\bar b^2 \varphi_N(\bar b) = 1$ to the power $N - 1$ gives
$\bar b^{\,2(N-1)} = N/(N + (N-1)\bar b)$, and substituting,
\[
\Psi_N(\bar b_N) + c
\;=\; \frac{1 + \bar b^{\,2N-1}}{N(1+\bar b)} - \bar b^{\,2(N-1)}
\;=\; \frac{1 - \bar b^{\,2(N-1)}\bigl(N + (N-1)\bar b\bigr)}{N(1+\bar b)}
\;=\; -\,\frac{N-1}{N(1+\bar b)} . \qedhere
\]
\end{proof}

\begin{proposition}\label{prop:nrunique}
For every $N \geq 2$ and $c \in [0, 1/N)$, $\Psi_N$ is strictly decreasing on $(0, \bar b_N]$.
Consequently the three-draw contest has exactly one symmetric equilibrium, namely
$(B_N(b^*), b^*)$ with $b^*$ the unique root of $\Psi_N$, and the equilibrium is a smooth function
of $(N, c)$, treating $N$ as a continuous variable in the scalar equation.
\end{proposition}

\begin{proof}
By $\Psi_N(b) = D(B_N(b), b)$ at $n = N - 1$, the chain rule gives
$\Psi_N' = D_a\,B_N' + D_b$. Then $D_b < 0$ by Lemma~\ref{lem:nrFb};
$D_a < 0$ on $a = B_N(b)$ by Lemma~\ref{lem:nrFa}(i); and $B_N' > 0$, since $b/(1+b)$ and
$b^2\varphi_N(b)/(1+b)$ are both strictly increasing. Hence $\Psi_N' < 0$. With
Lemma~\ref{lem:nrendpoint}, $\Psi_N$ has a unique root, which is the unique interior symmetric
equilibrium by the reduction; boundary equilibria are excluded by Lemma~\ref{lem:nrinterior}. Smoothness in $(N, c)$ follows from the implicit function theorem, since
$\Psi_N' < 0$ at the root.
\end{proof}

\subsection*{Selectivity of both thresholds}

Throughout this section $c = 0$; the extension to low positive cost is Remark~\ref{rem:nrsmallc}.
Write $(a_N, b_N)$ for the $N$-player equilibrium, with $a_N = B_N(b_N)$, and $\alpha_N = H(a_N)$,
$\beta_N = H(b_N)$ for its two heights; let $Z = H(U)$ be the submitted rank of a uniform draw
against it. Adding a player compares this equilibrium (the old one) with the $(N+1)$-player
equilibrium $(a_{N+1}, b_{N+1})$ (the new one).

\begin{proposition}\label{prop:nrselect}
For every $N \geq 2$, $\Psi_{N+1}(b_N) > 0$. Consequently $b_{N+1} > b_N$: the round-2 threshold of
the unique equilibrium strictly rises when a player joins.
\end{proposition}

\begin{proof}
\emph{Step 1: $B_N$ falls in $N$.} Write $\varphi_N(b) = e^{g(t)}$ with $t = N-1$,
$g(t) = \ln(1 + x(t))/t$, and $x(t) = bt/(t+1)$. Then
\[
t^2 g'(t) \;=\; \frac{x}{(t+1)(1+x)} - \ln(1+x) \;<\; \frac{x}{1+x} - \ln(1+x) \;\leq\; 0,
\]
so $\varphi_N$ is strictly decreasing in $N$ and $b \leq B_{N+1}(b) < B_N(b)$.

\emph{Step 2: the $(N+1)$-player gap falls in $a$ below the old map, $a \leq B_N(b)$.} This is
Lemma~\ref{lem:nrFa}(ii) with $n = N$.

\emph{Step 3: sign at $b_N$.} By Steps 1 and 2, moving from $B_N(b_N)$ down to $B_{N+1}(b_N)$ at
fixed $b_N$ raises the $(N+1)$-player gap,
\[
\Psi_{N+1}(b_N) \;=\; D\bigl(B_{N+1}(b_N), b_N\bigr) \;>\; D\bigl(B_N(b_N), b_N\bigr).
\]
At the old
equilibrium the round-2 condition reads $\mathbb{E}[Z^{N-1}] = \beta_N^{N-1}$, and $Z$ is
nondegenerate because $H$ is strictly increasing, so Lyapunov's inequality gives
\[
\mathbb{E}[Z^N] \;>\; \bigl(\mathbb{E}[Z^{N-1}]\bigr)^{N/(N-1)} \;=\; \beta_N^N,
\]
that is, $D(B_N(b_N), b_N) > 0$.

By Lemma~\ref{lem:nrendpoint} applied at contest size $N+1$, $\Psi_{N+1}(\bar b_{N+1}) < 0$, and by
Proposition~\ref{prop:nrunique} the function $\Psi_{N+1}$ is strictly decreasing with unique root
$b_{N+1}$; positivity at $b_N$ therefore places $b_{N+1}$ strictly above $b_N$.
\end{proof}

\begin{proposition}\label{prop:nrtransfer}
For every $N \geq 2$, $a_{N+1} > a_N$: the round-1 threshold of the unique equilibrium also
strictly rises when a player joins.
\end{proposition}

\begin{proof}
The round-1 threshold satisfies $a_{N+1} = B_{N+1}(b_{N+1})$, and $B_{N+1}$ is increasing, so
$a_{N+1} > a_N$ is equivalent to $b_{N+1} > \tilde b$, where $\tilde b$ solves
$B_{N+1}(\tilde b) = a_N$: the round-2 threshold that reproduces the old round-1 threshold in the new
contest. Since $\Psi_{N+1}$ is strictly decreasing with root $b_{N+1}$, it suffices to show
$\Psi_{N+1}(\tilde b) > 0$, which places $b_{N+1}$ above $\tilde b$.

This $\tilde b$ is well defined, with $b_N < \tilde b \leq a_N$. The map $B_{N+1}$ is continuous and
strictly increasing, hence invertible, and Step 1 of Proposition~\ref{prop:nrselect} locates $a_N$ in
its range: at $b_N$, $B_{N+1}(b_N) < B_N(b_N) = a_N$, so $B_{N+1}$ has not yet reached $a_N$ and
$\tilde b > b_N$; at $a_N$, $B_{N+1}(a_N) \geq a_N$, so $B_{N+1}$ has already reached $a_N$ and
$\tilde b \leq a_N$.

Define the round-1 gap of the $(N+1)$-player contest at a profile $(a, b)$,
\[
J(a, b) \;=\; \int_0^1 \max\{H(u), \beta\}^N\dd u - \alpha^N
\;=\; b\,\beta^N + \int_b^1 H^N\dd u - \alpha^N .
\]

\emph{Step 1: $J$ agrees with the round-2 gap on $a = B_{N+1}(b)$.} Using $H = s_1 u$ on
$[0, b]$,
\[
J - D \;=\; \int_0^b(\beta^N - H^N)\dd u - (\alpha^N - \beta^N)
\;=\; \frac{N b\,\beta^N}{N+1} - (\alpha^N - \beta^N),
\]
and on $a = B_{N+1}(b)$, $\alpha^N = \beta^N\varphi_{N+1}^N = \beta^N(1 + \tfrac{N}{N+1}b)$, so
$J = D$ there; in particular $\Psi_{N+1}(\tilde b) = D(a_N, \tilde b) = J(a_N, \tilde b)$.

\emph{Step 2: $J > 0$ at the old equilibrium.} The old round-1 condition reads
$\alpha_N^{N-1} = \mathbb{E}[X^{N-1}]$, where $X = \max\{Z, \beta_N\}$ is nondegenerate.
Lyapunov's inequality then gives $\mathbb{E}[X^N] > \alpha_N^N$, that is, $J(a_N, b_N) > 0$.

\emph{Step 3: $J(a_N, \cdot)$ increases on $[b_N, \tilde b]$.} Differentiate $J$ in its second
argument $b$ at fixed first argument $a_N$, from the form $J = b\beta^N + \int_b^1 H^N\dd u -
\alpha^N$. The three summands differentiate to
\[
\begin{aligned}
\frac{\partial}{\partial b}\bigl(b\beta^N\bigr) &= \beta^N + 2Nab^2\beta^{N-1},\\
\frac{\partial}{\partial b}\int_b^1 H^N\dd u &= -\beta^N - Na\int_b^1 H^{N-1}(1-u)\dd u,\\
\frac{\partial}{\partial b}\bigl(-\alpha^N\bigr) &= Na(1-a)\alpha^{N-1},
\end{aligned}
\]
using $\partial\beta/\partial b = 2ab$, $\partial\alpha/\partial b = -a(1-a)$, and $H_b = a(u-1)$ on
$[b, 1]$; the integral's lower-limit term is $-H(b)^N = -\beta^N$. Adding the three, the two
$\beta^N$ cancel, leaving
\[
\frac{\partial J}{\partial b}
\;=\; N a\Bigl[\,2b^2\beta^{N-1} - \int_b^1 H^{N-1}(1-u)\dd u + \alpha^{N-1}(1-a)\Bigr].
\]
The first and third bracket terms are positive; we show they outweigh the subtracted integral.
Evaluate at $b = s \in [b_N, \tilde b]$, profile $(a_N, s)$, writing $Z$ for the submitted rank $H(U)$ at this profile, and bound the integral in two moves.
Since $1 - u \leq 1 - s$ on $[s, 1]$,
\[
\int_s^1 H^{N-1}(1-u)\dd u \;\leq\; (1-s)\int_0^1 H^{N-1}\dd u \;=\; (1-s)\,\mathbb{E}[Z^{N-1}];
\]
and $\mathbb{E}[Z^{N-1}] \leq \beta^{N-1}$, because $\mathbb{E}[Z^{N-1}] - \beta^{N-1}$ is the
$N$-player round-2 gap $D(a_N, s)$, which is $0$ at $s = b_N$ and falls in $b$ by
Lemma~\ref{lem:nrFb}, hence $\leq 0$ for $s \geq b_N$. With $\alpha \geq \beta$, factoring out
$\beta^{N-1}$,
\[
\frac{\partial J}{\partial b}
\;\geq\; N a_N\,\beta^{N-1}\bigl[\,2s^2 - (1 - s) + (1 - a_N)\bigr]
\;=\; N a_N\,\beta^{N-1}\bigl[\,2s^2 - (a_N - s)\bigr].
\]
This is positive because $a_N - s$ is small. With $a_N = B_N(b_N)$, the closed
form~\eqref{eq:nrbranch} gives $B_N(b_N) - b_N = b_N^2(\varphi_N(b_N) - 1)/(1 + b_N)$, so
\[
a_N - s \;\leq\; a_N - b_N \;=\; \frac{b_N^2(\varphi_N(b_N) - 1)}{1 + b_N}
\;\leq\; \frac{b_N^2}{2} \;\leq\; \frac{s^2}{2}
\]
(the first and last inequalities from $s \geq b_N$, the middle from $\varphi_N \leq 3/2$ and
$1 + b_N \geq 1$). Hence $2s^2 - (a_N - s) \geq \tfrac{3}{2}s^2$, and
$\partial J/\partial b \geq \tfrac{3}{2}N a_N s^2\beta^{N-1} > 0$.

Combining Steps 2 and 3, $\Psi_{N+1}(\tilde b) = J(a_N, \tilde b) \geq J(a_N, b_N) > 0$.
\end{proof}

\begin{proof}[Proof of Theorem~\ref{thm:nr_k3}]
Uniqueness and the closed form are Proposition~\ref{prop:nrunique}; the comparative statics at
$c = 0$ are Propositions~\ref{prop:nrselect} and~\ref{prop:nrtransfer}; the extension in $c$ is
Remark~\ref{rem:nrsmallc}.
\end{proof}

\begin{remark}\label{rem:nrsmallc}
By Proposition~\ref{prop:nrunique}, the equilibrium $b^*(N, c)$ is smooth in $c$ on $[0, 1/N)$, so
the evaluation points $b_N(c)$, $a_N(c)$, and $\tilde b(c)$ move continuously with $c$. Every
inequality in the proofs of Propositions~\ref{prop:nrselect} and~\ref{prop:nrtransfer} is strict at
$c = 0$ and continuous in $c$ along these paths, and the cost cancels from Lemma~\ref{lem:nrlink},
so each conclusion persists on an interval $[0, \bar c_N)$, with $\bar c_N < 1/(N+1)$ so that the $(N+1)$-player contest is admissible as well.
\end{remark}

\medskip\noindent
Proposition~\ref{prop:nr_k3red} collects Lemma~\ref{lem:nrlink} (part i), Corollary~\ref{cor:nrbranch} (part ii), and Lemma~\ref{lem:nrconverse} together with the display~\eqref{eq:nrPsiA} (part iii); the function $\Psi_N$ there coincides with~\eqref{eq:nrPsi}.

\section{Derivation of the Designer FOC}\label{app:designer}

Consider $M$ symmetric designers, each with $N$ workers using threshold~$b$. Each team's maximum has CDF $H(x) = G(x)^N$ where $G(x) = (F(x)-F(b))/(1-F(b))$ for $x\geq b$.

Designer $A$ deviates to threshold $b_A$. Her team max has CDF $H_A(x)=G_A(x)^N$ where $G_A(x)=(F(x)-F(b_A))/(1-F(b_A))$. Her winning probability is
\[
P(b_A) = \int_b^K H(x)^{M-1}\,N\,G_A(x)^{N-1}\frac{f(x)}{1-F(b_A)}\,dx.
\]
We differentiate with respect to $b_A$ and evaluate at $b_A=b$. The derivative of $G_A(x)^{N-1}\cdot f(x)/(1-F(b_A))$ with respect to $b_A$ is
\[
\frac{f(b)f(x)}{(1-F(b))^2}\,u^{N-2}\bigl[Nu-(N-1)\bigr],
\]
where $u=G(x)=(F(x)-F(b))/(1-F(b))$. Since $G(x)^{N(M-1)} = u^{N(M-1)}$ and $f(x)\,dx = (1-F(b))\,du$, we obtain
\begin{align*}
\frac{\partial P}{\partial b_A}\bigg|_{b_A=b}
&= \frac{Nf(b)}{1-F(b)}\int_0^1 u^{NM-2}\bigl[Nu-(N-1)\bigr]\,du \\
&= \frac{Nf(b)}{1-F(b)}\left[\frac{N}{NM}-\frac{N-1}{NM-1}\right] \\
&= \frac{Nf(b)}{1-F(b)}\cdot\frac{M-1}{M(NM-1)}.
\end{align*}
Setting $\Omega\cdot\partial P/\partial b_A = Ncf(b)/(1-F(b))^2$ yields $1-F(b) = cM(NM-1)/(\Omega(M-1))$. The payoff and dissipation formulas follow: each designer wins with probability $1/M$ and pays her $N$ workers' expected search costs $Nc/(1-F(b_D)) = N\Omega(M-1)/(M(NM-1))$, so her payoff is $\Omega/M - N\Omega(M-1)/(M(NM-1)) = \Omega(N-1)/(M(NM-1))$; total expenditure is $M$ times the cost term, $N\Omega(M-1)/(NM-1)$, and the dissipation ratio is $N(M-1)/(NM-1)$.